\documentclass[letterpaper,11pt]{article}

\usepackage{graphicx}

\usepackage[margin=1in]{geometry}
\usepackage{amsmath,amsthm,amsfonts,amssymb,bm,bbm}
\usepackage{latexsym}

\usepackage[linesnumbered,ruled,vlined]{algorithm2e}

\usepackage[textsize=scriptsize]{todonotes}
\usepackage[dvipsnames]{xcolor}
\usepackage{mathtools}
\usepackage[hidelinks]{hyperref}
\usepackage{graphicx}
\usepackage{authblk}
\usepackage{array}
\usepackage{multirow}
\usepackage[capitalize,noabbrev]{cleveref}
\usepackage{aliascnt}
\usepackage{lineno}
\usepackage{booktabs}
\usepackage{tabularx}

\usepackage[square,numbers]{natbib}
\usepackage{subcaption}
\usepackage{tikz}
\usetikzlibrary{positioning,calc,arrows.meta,patterns.meta}

\theoremstyle{definition}
\newtheorem{definition}{Definition}[section]
\newtheorem{example}{Example}[section]

\theoremstyle{plain}
\newtheorem{theorem}{Theorem}[section]
\newtheorem{lemma}{Lemma}[section]

\newaliascnt{corollary}{theorem}
\newtheorem{corollary}[corollary]{Corollary}
\aliascntresetthe{corollary}

\crefname{appendix}{Appendix}{Appendices}
\Crefname{appendix}{Appendix}{Appendices}

\newcommand{\ba}{\bm{a}}
\newcommand{\bx}{\bm{x}}
\newcommand{\by}{\bm{y}}
\newcommand{\bR}{\mathbb{R}}
\newcommand{\bZ}{\mathbb{Z}}
\newcommand{\cA}{\mathcal{A}}
\newcommand{\cF}{\mathcal{F}}
\newcommand{\cR}{\mathcal{R}}
\newcommand{\cS}{\mathcal{S}}
\newcommand{\cX}{\mathcal{X}}

\newcommand{\PURE}{\textsc{Pure-Circuit}}
\newcommand{\NOT}{\textsc{Not}}
\newcommand{\AND}{\textsc{And}}
\newcommand{\PURIFY}{\textsc{Purify}}
\newcommand{\BIAS}{\textsc{Bias}}

\newcommand{\val}[1]{\boldsymbol{\mathrm{x}}[#1]}
\newcommand{\valonly}{\boldsymbol{\mathrm{x}}}

\newcommand{\eps}{\ensuremath{\varepsilon}}

\DeclareMathOperator{\Ind}{\mathbbm{1}}
\DeclareMathOperator{\Exp}{\mathbb{E}}
\DeclareMathOperator{\Prob}{\mathrm{Prob}}

\newcommand{\PrimP}{\mathsf{P}}   
\newcommand{\DeP}{\mathsf{De}}    
\newcommand{\BlP}{\mathsf{Bl}}    
\newcommand{\GuP}{\mathsf{Gu}}    
\newcommand{\PrP}{\mathsf{Pr}}    
\newcommand{\BiP}{\mathsf{Bi}}    
\newcommand{\BitB}{\mathsf{B}}    
\newcommand{\AuxB}{\mathsf{A}}    
\newcommand{\AuxP}{\mathsf{Q}}    
\newcommand{\DumP}{\mathsf{D}}    

\newcommand{\itemwithtable}[2]{%
\item[]%
\parbox[t]{\linewidth}{%
  \begin{minipage}[t]{0.55\textwidth}\vspace{0pt}#1\end{minipage}\hfill
  \begin{minipage}[t]{0.4\textwidth}\vspace{0pt}\centering #2\end{minipage}%
}%
}

\title{The Complexity of Multiplayer Colonel Blotto Games with Player-Specific Values}
\author{Martin Bichler, Abheek Ghosh}
\affil{Technical University of Munich\\
  \texttt{\{m.bichler,abheek.ghosh\}@tum.de}
}
\date{}

\begin{document}


\maketitle


\begin{abstract}

We study equilibrium computation in discrete multiplayer Colonel Blotto games with player-specific battlefield values.
In the two-player model with common battlefield values, equilibria can be computed in polynomial time.
We show that this tractability breaks down in the multiplayer model with player-specific values under the standard uniform tie-breaking rule.
In particular, computing a $(c/n)$-approximate Nash equilibrium is PPAD-hard for some constant $c>0$, even when every player has three resources, where $n$ is the number of players.
The main technical step is PPAD-hardness for computing a constant-approximate well-supported Nash equilibrium.
In contrast, under uniform tie-breaking, a pure Nash equilibrium can be computed in polynomial time when every player has one resource.
We also prove PPAD membership for computing $\eps$-approximate Nash equilibria for inverse-exponentially small $\eps$.
Finally, for non-uniform monotone tie-breaking, we show PPAD-hardness even when every player has one resource and all players have identical battlefield values.


\end{abstract}


\thispagestyle{empty}

\clearpage

\setcounter{page}{1}
\resetlinenumber[1]

\section{Introduction}

The Colonel Blotto game is a classical model of competition with budget constraints.
Players have limited resources and distribute them across multiple battlefields.
On each battlefield, the player who allocates the most resources wins,
with ties resolved by a tie-breaking rule.
The payoff of a player is the sum of the values of the battlefields that it wins.
The model originates in Borel's work on games of strategy and the subsequent formulation of Gross and Wagner~\cite{borel1921theorie,borel1938applications,gross1950continuous}.
It has since been used to model electoral competition~\cite{myerson1993incentives,laslier2002distributive,klumpp2019dynamics,boix2021multiplayer}, advertising allocation~\cite{friedman1958game}, security allocation~\cite{schwartz2014heterogeneous,ferdowsi2017colonel,afiouni2026colonel}, research and development competition~\cite{ewerhart2021class,kovenock2021generalizations}, and other forms of strategic resource allocation~\cite{golman2009general}.

The algorithmic literature primarily focuses on the two-player constant-sum discrete model (see, e.g., \cite{ahmadinejad2019duels,behnezhad2023fast,beaglehole2023sampling,kontogiannis2025efficient}).
In the discrete model, each player has an integer budget, and a pure strategy is an allocation of this integer budget across battlefields.
This model has two features that make it interesting from an algorithmic point of view.
On the one hand, the strategy space is large: each player has exponentially many ways to distribute its discrete budget across the battlefields.
On the other hand, the payoff structure is simple.
Payoffs are additive over battlefields, and the payoff rule is local to each battlefield.
The battlefields are related only through the players’ budget constraints.

In the two-player constant-sum case, this structure can be exploited algorithmically.
Although mixed strategies are distributions over an exponentially large set of pure allocations, their payoff-relevant information can be represented compactly, and equilibria can be computed in polynomial time~\cite{ahmadinejad2019duels,behnezhad2023fast}.
These algorithms leverage the constant-sum property to construct an efficiently solvable linear programming formulation.

Many motivating applications of Blotto games are not inherently two-player or constant-sum.
In political competition, several candidates or parties may allocate campaign resources across districts~\cite{myerson1993incentives,laslier2002distributive,boix2021multiplayer}.
In advertising and research and development, several firms may compete across market segments, technologies, or projects~\cite{friedman1958game,kovenock2021generalizations,ewerhart2021class}.
Moreover, the same battlefield may matter differently to different players: a district, market segment, security target, or technology may be more valuable for one competitor than another~\cite{roberson2012non,schwartz2014heterogeneous,kovenock2021generalizations,vu2019approximate,kim2018lottery,liu2024multiplayer}.
This motivates the study of discrete Blotto games with multiple players and player-specific battlefield values.

These generalizations preserve the local and additive payoff structure of Blotto games.
A player's payoff is still the sum of its payoffs from individual battlefields, and the outcome on each battlefield is still determined only by the resources allocated to that battlefield.
However, the game is no longer a two-player constant-sum game, so the known polynomial-time algorithms do not directly apply.
This raises the main question studied in this paper: can equilibria still be computed efficiently in discrete multiplayer Blotto games with player-specific battlefield values?

Our main result is a negative answer to this question.
Even under the standard uniform tie-breaking rule, where tied winners split a battlefield equally, equilibrium computation becomes PPAD-hard once we allow multiple players with player-specific battlefield values.
More precisely, we prove that it is PPAD-hard to compute a constant-approximate well-supported Nash equilibrium (WSNE), even when every player has only three resources.
This also gives us hardness for approximate Nash equilibrium (NE): computing an $(\eps/n)$-approximate NE is PPAD-hard for some constant $\eps > 0$, where $n$ is the number of players.
Our hardness reduction is from the PPAD-complete \PURE{} problem~\cite{deligkas2024pure}.
In a natural restricted-action variant, where players may allocate resources only to battlefields for which they have positive value, the same construction gives PPAD-hardness for constant-approximate NE.

The three-resource restriction is worth emphasizing.
With a constant number of resources, each player has only polynomially many pure strategies, so the hardness is not due to any player having an exponentially large set of pure strategies.
At the same time, the result is close to a tractable case.
If every player has a single resource, then each pure strategy is simply to choose one battlefield.
Under uniform tie-breaking, this case reduces to a singleton congestion game with player-specific payoff functions, and hence a pure NE can be computed in polynomial time~\cite{milchtaich1996congestion,ackermann2009pure}.
This leaves the two-resource case as a natural open problem.

The main challenge in proving the hardness result is the limited interaction among the players in the game.
Two players interact only through their allocations on individual battlefields.
Even within a battlefield, the payoff structure is very simple: a player can only win, tie, or lose.
Further, under uniform tie-breaking, the players interact anonymously.
A player loses a battlefield if another player allocates strictly more resources to it, irrespective of which player does so; and tied winners split the battlefield equally.
Consequently, hardness results for standard game models with richer payoff interactions do not directly transfer to this setting.
In the hardness proof, the construction has to create the required incentives using only the local win/tie/lose comparisons, additive battlefield values, and the players' global budget constraints.

The one-resource tractability result above does not extend to general non-uniform tie-breaking rules.
The literature primarily focuses on uniform tie-breaking, but some applications may break ties asymmetrically~\cite{vu2021colonel}.
A battlefield may favor one player in a tie because of incumbency, priority, brand recognition, local advantage, or other exogenous factors.
This motivates us to consider non-uniform monotone tie-breaking rules, where a player's share cannot increase as more players tie for the highest allocation.
We show that computing a constant-approximate NE under such rules is PPAD-hard even when every player has one resource and all players have identical battlefield values.\footnote{
Notice that the hardness result for non-uniform tie-breaking is for constant-approximate NE, whereas the corresponding result for uniform tie-breaking is for inverse-linear-approximate NE.
}
Thus, with non-uniform tie-breaking, the hardness comes from the tie-breaking rule, rather than from player-specific values, large budgets, or a large action space.

We complement the hardness results by proving PPAD membership.
For uniform tie-breaking, the problems of computing $\eps$-NE and $\eps$-WSNE are in PPAD for inverse-exponential $\eps$.
The proof uses compactly represented mixed strategies~\cite{behnezhad2023fast}, along with a strong separation oracle for feasible strategies and polynomial-size arithmetic circuits for computing utilities and gradients, and applies PPAD-membership results for concave games~\cite{papadimitriou2023computational}.
Together with the hardness results, this gives PPAD-completeness for approximate equilibrium computation in multiplayer Blotto games with player-specific values and uniform tie-breaking.
For non-uniform tie-breaking, the same membership argument applies whenever the tie-breaking rule admits polynomial-size circuits for the relevant expected payoff computations; for arbitrary succinct non-uniform rules, this payoff computation may itself be hard.

The technical overview (\cref{sec:overview}), following the preliminaries (\cref{sec:prelim}), gives the main ideas behind the reductions and the PPAD-membership result.
We then present the formal constructions and proofs in the subsequent sections and conclude with some open problems.

\subsection{Related Work}
\label{sec:related-work}

The classical model originates in Borel's work~\cite{borel1921theorie,borel1938applications} and was further developed by Gross and Wagner~\cite{gross1950continuous}.
Roberson~\cite{roberson2006colonel} analyzed equilibria in the continuous-budget two-player constant-sum model, and Hart~\cite{hart2008discrete} studied discrete Colonel Blotto and General Lotto models.
Subsequent work has studied heterogeneous battlefield values, asymmetric budgets, incomplete information, and related General Lotto models~\cite{weinstein2012two,roberson2012non,thomas2018n,kovenock2021generalizations,adamo2009blotto,kovenock2011blotto,paarporn2019characterizing,paarporn2021general,paarporn2023strategically,paarporn2024incomplete}.
These works focus on equilibrium characterization or existence in analytically tractable regimes.
Our focus, on the other hand, is on the equilibrium computation problem in discrete multiplayer Blotto games.

The algorithmic literature has mainly studied two-player constant-sum Blotto models and learning in Blotto and other structured games.
The authors in~\cite{ahmadinejad2019duels} gave a polynomial-time algorithm for computing equilibria in the discrete two-player constant-sum model, and the authors in~\cite{behnezhad2023fast} gave a faster algorithm using a compact LP formulation.
For continuous-budget variants, Perchet, Rigollet, and Le Gouic~\cite{perchet2024algorithmic} gave algorithmic results using multimarginal couplings in certain constant-sum two-player settings.
Other work studies no-regret learning and online and bandit variants of Blotto games~\cite{vu2018efficient,vu2019combinatorial,vu2020path,beaglehole2023sampling,leon2024online,kontogiannis2025efficient}.
In particular, no-regret methods can give efficient computation of approximate coarse correlated equilibria in structured Blotto-type games~\cite{beaglehole2023sampling,kontogiannis2025efficient}.
In contrast, our work studies approximate Nash equilibrium computation in multiplayer non-constant-sum Blotto games.

Outside the two-player constant-sum setting,
Roberson and Kvasov~\cite{roberson2012non}, Schwartz, Loiseau, and Sastry~\cite{schwartz2014heterogeneous}, and Vu, Loiseau, and Silva~\cite{vu2019approximate} study variants with player-specific objectives.
Boix-Adser{\`a}, Edelman, and Jayanti~\cite{boix2021multiplayer} and Jayanti~\cite{jayanti2021nash} study multiplayer Blotto models, while the authors of~\cite{liu2024multiplayer} study a multiplayer General Lotto model.
These works primarily study equilibrium existence and structure.
We study the complexity of Nash equilibrium computation in the discrete multiplayer model.

Our non-uniform tie-breaking model is related to work on asymmetric rules for deciding battlefield winners.
Vu and Loiseau~\cite{vu2021colonel} study Blotto games with favoritism, where players may have different effectiveness on the same battlefield.
Their focus is on equilibrium structure rather than computational complexity.
In our model, the asymmetry appears only when players tie with the highest allocation and receive different shares of the battlefield.
We show that even this form of asymmetry can make equilibrium computation hard in otherwise restricted settings.

Our model also has connections to congestion games.
With one resource and uniform tie-breaking, a Blotto game reduces to a singleton congestion game with player-specific payoff functions: each player chooses one battlefield, and its payoff from that battlefield depends on the number of players choosing it.
Milchtaich~\cite{milchtaich1996congestion} showed that such games admit pure Nash equilibria, and Ackermann, R{\"o}glin, and V{\"o}cking~\cite{ackermann2009pure} gave polynomial-time algorithms for pure equilibria in player-specific matroid congestion games, which include singleton games.
For larger budgets, Blotto games are related to weighted and splittable congestion models, but neither connection is exact.
A player's budget is analogous to a weight, but unlike in weighted singleton congestion games, the budget can be split across battlefields.
This resembles splittable congestion games, where players divide demand across resources.
However, the payoff rule is different from congestion games: Blotto payoffs are determined by which player allocates the most resources to a battlefield, rather than by a load-based cost or payoff function.

Our membership proof is related to compact representations used in algorithmic work on Blotto games.
\cite{ahmadinejad2019duels} uses marginal representations, and \cite{behnezhad2023fast} uses a flow representation over a layered graph.
In the two-player constant-sum setting, these representations lead to polynomial-time equilibrium computation via linear programming.
In our multiplayer setting, the game is not a two-player constant-sum game, so the compact representation does not give such a linear program.
Instead, we use it to obtain polynomial-size circuits for utilities and gradients, and a strong separation oracle for the strategy space.
We then apply the PPAD-membership framework for concave games~\cite{papadimitriou2023computational}.

We use tools from the complexity theory of Nash equilibria in our reductions.
The class PPAD was introduced by Papadimitriou~\cite{papadimitriou1994complexity}, and computing a Nash equilibrium in a two-player normal-form game is PPAD-complete~\cite{daskalakis2009complexity,chen2009settling}.
Our reductions start from the \PURE{} problem of Deligkas, Fearnley, Hollender, and Melissourgos~\cite{deligkas2024pure}, a PPAD-complete circuit problem.
The structure of the \PURE{} problem enables us to encode it using local interactions between players and battlefields in the Blotto model.

\section{Model and Preliminaries} \label{sec:prelim}
Let $[n] = \{ 1, 2, \ldots, n \}$ for positive integer $n \in \bZ_{>0}$.
Let $\Delta(S) = \{ \bx \in \bR_{\ge 0}^S \mid \sum_{s \in S} x_s = 1 \}$ be the probability simplex of dimension $|S|-1$ for a finite set $S$.
Let $\Ind[\cdot]$ be the indicator function that outputs $1$ if the input to the function is true and $0$ otherwise.

In its full generality, the Colonel Blotto game, or simply the Blotto game, we consider has
$n$ players and $m$ battlefields.
Player $i \in [n]$ has a resource budget of $B_i \in \bZ_{> 0}$ that they distribute across the $m$ battlefields; the resources are discrete.
A pure strategy $\ba_i = (a_{ij})_{j \in [m]} \in \bZ_{\ge 0}^m$ of player $i$ denotes how many resources $i$ allocates to each battlefield; $\ba_i$ satisfies the budget constraint $\sum_{j \in [m]} a_{ij} = B_i$. Let $\ba = (\ba_i)_{i \in [n]}$. Let $\cA_i$ be the set of pure strategies for player $i$ and $\cA = \prod_i \cA_i$.
Player $i$ has a value $v_{ij} \in \bR_{\ge 0}$ for winning battlefield $j$.
A Blotto game has player-specific values if $v_{ij} \neq v_{i'j}$ for some players $i, i' \in [n]$ and battlefield $j \in [m]$.
The player(s) who allocate the most resources to a battlefield win the battlefield and get the corresponding payoff, and everyone else gets zero payoff from the battlefield.
In particular, if $a_{ij} > \max_{i' \neq i} a_{i'j}$, then $i$ gets a payoff of $v_{ij}$ from battlefield $j$ and everyone else gets zero.
In case of a tie, we assume there is a tie-breaking rule for each battlefield, as discussed below.
The total payoff of a player is the sum of payoffs from the individual battlefields.

We assume that every battlefield $j \in [m]$ uses a tie-breaking rule denoted by $\tau_j$.
Let $S_j(\ba) \subseteq [n]$ be the set of players who allocate the most resources on battlefield $j$, i.e., $S_j(\ba) = \{ i \mid a_{ij} = \max_{i'} a_{i'j} \}$.
Then $\tau_j(S_j(\ba), i) \ge 0$ denotes player $i$'s probability for winning battlefield $j$.
Note that $\tau_j(S_j(\ba), i) = 0$ if $i \notin S_j(\ba)$ and $\sum_{i \in S_j(\ba)} \tau_j(S_j(\ba), i) = 1$.
We assume monotonicity of the tie-breaking rule: if $S \subseteq T$, then $\tau_j(S, i) \ge \tau_j(T, i)$ for all $i \in S$, i.e., the winning probability of a tied player does not increase as the set of tied highest allocators grows.
The literature primarily focuses on uniform tie-breaking, i.e., $\tau_j(S_j(\ba), i) = 1/|S_j(\ba)|$ for all $i \in S_j(\ba)$, and we too will primarily focus on this.

Given a pure strategy profile $\ba = (a_{ij})_{i \in [n], j \in [m]} \in \cA$ and the corresponding set of players who allocate the most resources in each battlefield $S_j(\ba) = \{ i \mid a_{ij} = \max_{i'} a_{i'j} \}$, the utility of player $i$ can be written formally as
\[
    u_i(\ba) = \sum_{j \in [m]} v_{ij} \, \tau_j(S_j(\ba), i).
\]
Let $\bx = (\bx_i)_{i \in [n]} \in \cX = \prod_i \Delta(\cA_i)$ be a mixed strategy profile, where $\bx_i \in \cX_i = \Delta(\cA_i)$ is the mixed strategy of player $i$.
We denote the probability assigned to the pure strategy $\ba_i \in \cA_i$ in the mixed strategy $\bx_i$ by $\bx_i(\ba_i)$.
The expected utility of player $i$ in the mixed profile $\bx$ is
\[
    u_i(\bx) = \sum_{\ba \in \cA} u_i(\ba) \prod_{i' \in [n]} \bx_{i'}(\ba_{i'}).
\]

\paragraph{Nash equilibrium and approximations.}
Let $\bx_{-i} = (\bx_1, \ldots, \bx_{i-1}, \bx_{i+1}, \ldots, \bx_n)$. A mixed strategy profile $\bx$ is a (mixed) Nash equilibrium (NE) if no player can gain by unilaterally deviating:
\[
    u_i(\bx) \ge u_i(\bx_i', \bx_{-i}), \qquad\text{for every $i \in [n]$ and $\bx_i' \in \cX_i$.}
\]

For $\eps \ge 0$, we say $\bx$ is an $\eps$-Nash equilibrium ($\eps$-NE) if no player can improve her expected payoff by more than $\eps$ by a unilateral deviation. Formally,
\[
    u_i(\bx) + \eps \ge u_i(\bx_i', \bx_{-i}), \qquad\text{for every $i \in [n]$ and $\bx_i' \in \cX_i$.}
\]
Notice that the $\eps$ approximation in the $\eps$-NE definition above is additive. Hence, to make the definition meaningful, we assume that the utilities are normalized to be between $0$ and $1$, i.e., $u_i(\bx) \in [0, 1]$ for all $\bx \in \cX$.

A stronger approximation notion is that of an $\eps$-well-supported Nash equilibrium ($\eps$-WSNE): here every pure strategy that is played with positive probability must be an $\eps$-best-response. Formally, $\bx$ is an $\eps$-WSNE if
\begin{align*}
    u_i(\ba_i, \bx_{-i}) + \eps \ge u_i(\bx_i', \bx_{-i}), \qquad\text{for every $i \in [n]$, $\ba_i \in \cA_i$ such that $\bx_i(\ba_i) > 0$, and $\bx_i' \in \cX_i$.}
\end{align*}
Again, we assume the utilities are normalized to $[0, 1]$.

Although every $\eps$-WSNE is also an $\eps$-NE, the converse is not true in general.
However, an approximate NE can be converted to an approximate WSNE with a polynomial loss in the approximation parameter.
The conversion removes from each player's support the pure strategies whose payoff is sufficiently below the best-response payoff and then renormalizes the remaining probabilities.
The next lemma states this result formally; the proof is in \cref{app:proof:wsne2ne}.

\begin{lemma}\label{lm:wsne2ne}
Consider an $n$-player finite game with utilities normalized to lie in $[0,1]$.
For every $\delta \in (0,1]$, given a $(\delta^2/8n)$-NE, removing from each player's support every pure strategy whose payoff is more than $\delta/2$ below the best-response payoff, and then renormalizing, gives a $\delta$-WSNE.
\end{lemma}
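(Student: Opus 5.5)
The plan is to bound the probability mass each player loses in the pruning step, and then show that pruning barely changes any player's payoffs. Fix $\eps = \delta^2/(8n)$ and let $\bx$ be an $\eps$-NE. For each player $i$, let $b_i = \max_{\ba_i \in \cA_i} u_i(\ba_i, \bx_{-i})$ be the best-response payoff against $\bx_{-i}$. Let $L_i$ be the set of pure strategies whose payoff is more than $\delta/2$ below $b_i$, and let $p_i = \sum_{\ba_i \in L_i} \bx_i(\ba_i)$ be the mass removed from player $i$.

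First I bound $p_i$. Since $u_i(\bx) = \sum_{\ba_i} \bx_i(\ba_i)\, u_i(\ba_i,\bx_{-i})$ and every strategy in $L_i$ loses more than $\delta/2$ relative to $b_i$, we get $b_i - u_i(\bx) \ge p_i \delta/2$. The $\eps$-NE condition gives $b_i - u_i(\bx) \le \eps$. Together these yield $p_i \le 2\eps/\delta = \delta/(4n) < 1$. In particular, some best response of player $i$ lies in the support and is never removed, so the renormalized strategy $\by_i = \bx_i|_{\cA_i\setminus L_i}/(1-p_i)$ is well defined.

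Second, I compare $\by_i$ with $\bx_i$ in total variation. Removing mass $p_i$ and renormalizing moves exactly $p_i$ units of probability, so $d_{TV}(\bx_i,\by_i) = p_i$. Since utilities lie in $[0,1]$, a coupling argument (or a hybrid argument that changes one opponent at a time) gives
\[
|u_i(\ba_i,\by_{-i}) - u_i(\ba_i,\bx_{-i})| \;\le\; \sum_{k\ne i} d_{TV}(\bx_k,\by_k) \;\le\; \sum_k p_k \;\le\; \delta/4
\]
for every player $i$ and every pure strategy $\ba_i$. This uniform perturbation bound is the only step requiring care: one has to ensure the bound is additive over opponents and does not pick up factors from renormalization. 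The one-at-a-time hybrid argument handles this, since each swap changes a $[0,1]$-valued expectation by at most the total variation distance of the swapped distribution.

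Finally, I verify the WSNE condition for $\by$. Take any $\ba_i$ in the support of $\by_i$. Then $\ba_i \notin L_i$, so $u_i(\ba_i,\bx_{-i}) \ge b_i - \delta/2$, and hence $u_i(\ba_i,\by_{-i}) \ge b_i - 3\delta/4$. For any deviation $\bx_i'$ we have $u_i(\bx_i',\by_{-i}) \le \max_{\ba_i'} u_i(\ba_i',\by_{-i}) \le b_i + \delta/4$. The gap is therefore at most $\delta$, so $\by$ is a $\delta$-WSNE.
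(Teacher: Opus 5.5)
Your proof is correct and is essentially the paper's argument: the paper proves the bounded-dependency version (\cref{lm:wsne2ne2}) with removed mass $\gamma_i \le 2\eps/\delta$, a total-variation perturbation bound summed over opponents, and the same $\delta/2 + 2\cdot\delta/4$ accounting, and then sets $d=n$. One small slip: $p_i<1$ does not imply that an exact best response lies in the support. It only implies that some unremoved strategy does, and that is all the renormalization step needs.
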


The conversion in \cref{lm:wsne2ne} can be implemented in polynomial time whenever the starting profile has polynomial support and the payoffs of supported pure strategies and the best-response values can be computed in polynomial time.

Notice that the Blotto game defined earlier is finite---it has finitely many players with each player having finitely many actions---and hence always has an NE by Nash's theorem.

\paragraph{Computational problem.}
In this paper, we study the $\eps$-NE and $\eps$-WSNE computation problems in Blotto games.
The computational problems take as input a discrete Blotto game together with an approximation parameter $\eps$.
The game input contains
the number of players $n$,
the number of battlefields $m$,
the values $(v_{ij})_{i \in [n], j \in [m]}$,
and the budgets $(B_i)_{i \in [n]}$.
For uniform tie-breaking, no additional input is needed.
For non-uniform tie-breaking, the input additionally contains a compact representation of the tie-breaking rule.

The values are nonnegative rational numbers, represented in binary by their numerators and denominators.
The approximation parameter $\eps$ is also represented in binary.
The budgets are positive integers represented in unary.
The reason for using unary rather than binary encoding of budgets is that the natural compact representations of mixed strategies have size polynomial in the budget values $B_i$ but not in $\log(B_i)$.
For example, even the marginal distribution of player $i$ on a single battlefield may require specifying probabilities over $\{0, 1, \ldots, B_i\}$.
Previous papers that study discrete Blotto games also use this convention~\cite{ahmadinejad2019duels,behnezhad2023fast,beaglehole2023sampling,kontogiannis2025efficient}.
The input representation does not affect our PPAD-hardness results.
We use budgets of at most $3$, values that are rational numbers with constant-size numerators and denominators, and approximation parameters that are constant for approximate WSNE and up to inverse-linear in $n$ for approximate NE.
Thus, unary and binary encodings of the budgets, values, and approximation parameter are polynomially equivalent for the hardness proofs.
The input representation becomes crucial in the PPAD-membership results and is further discussed in \cref{sec:membership}.
The representation of non-uniform tie-breaking rules is also discussed there.

The output is a mixed strategy profile of the players.
Naively, each player $i$ has $\binom{m + B_i - 1}{B_i}$ pure actions, so explicitly listing probabilities for all pure actions is too large.
However, as will be discussed in \cref{sec:membership}, equivalent mixed strategies can be compactly represented in size polynomial in the input (assuming unary encoding of budgets).
For the hardness results, since budgets are at most $3$, even the explicit representation of mixed strategies is polynomially large in the input size.

\paragraph{Pure-Circuit.}
PPAD-hardness reductions start from a known PPAD-hard problem, which in our case will be the \PURE{} problem described below, and reduce it to the problems at hand, which in our case are the approximate NE and WSNE computation problems in Blotto games.
We use a variant of \PURE{} with \NOT{}, \AND{}, and \PURIFY{} gates, which is also PPAD-hard. Next, we formally define the \PURE{} problem.

\begin{definition}[\PURE{} \cite{deligkas2024pure}] \label{def:pure}
We are given a circuit that may have cycles.
The circuit has \NOT{}, \AND{}, and \PURIFY{} gates, described below. The input/output variables of each gate take values in the interval $[0,1]$. Each variable must be the output of \emph{exactly} one gate.
We can also view the circuit as a directed graph with the nodes corresponding to the variables and the directed edges corresponding to the dependency of one variable on another variable through a gate (directed edges point from the inputs to the outputs of the gates).
Let $V$ denote the set of nodes (variables) of this graph. For a node $v \in V$, let $\val{v} \in [0,1]$ denote the value of $v$.
Next, we describe the behavior of the three gates:
\begin{itemize}
    \itemwithtable{
        \NOT{}: Takes an input, say $u$, and gives an output, say $v$. If $\val{u} \in \{0, 1\}$, then $\val{v} = 1 - \val{u}$. If $\val{u} \in (0, 1)$, then $\val{v}$ is allowed to take any value in $[0, 1]$.
    }{
        \begin{tabular}{c || c}
            $u$ & $v$ \\
            \hline
            $0$ & $1$ \\
            $1$ & 0 \\
            $(0,1)$ & $[0, 1]$ \\
        \end{tabular}
    }

    \itemwithtable{
        \AND{}: Takes two inputs, say $u$ and $v$, and gives an output, say $w$. If either input value $\val{u}$ or $\val{v}$ is $0$, then the output $\val{w} = 0$. If both $\val{u}$ and $\val{v}$ are $1$, then the output $\val{w} = 1$. Otherwise, $\val{w}$ is allowed to take any value in $[0, 1]$.
    }{
        \begin{tabular}{c | c || c}
            $u$ & $v$ & $w$ \\
            \hline
            $0$ & $[0, 1]$ & 0 \\
            $[0, 1]$ & 0 & 0 \\
            $1$ & $1$ & $1$ \\
            \multicolumn{2}{c || }{otherwise} & $[0, 1]$ \\
        \end{tabular}
    }

    \itemwithtable{
        \PURIFY{}: Takes one input, say $u$, and gives two outputs, say $v$ and $w$. If the input $\val{u} \in \{0, 1\}$, then the outputs $\val{v} = \val{w} = \val{u}$ are the same as the input.
        If $\val{u} \in (0,1)$, then at least one of the output variables must take a value in $\{0,1\}$, while the other output can take any value in $[0,1]$.
    }{
        \begin{tabular}{c || >{\centering\arraybackslash}p{1cm} | >{\centering\arraybackslash}p{1cm}}
            $u$ & $v$ & $w$ \\
            \hline
            $0$ & 0 & 0 \\
            $1$ & $1$ & $1$ \\
            \multirow{2}{*}{$(0, 1)$} & \multicolumn{2}{c}{\multirow{2}{2.8cm}{at least one output in $\{0, 1\}$}} \\
            & \multicolumn{2}{c}{} \\
        \end{tabular}
    }
\end{itemize}
The \PURE{} problem is to find an assignment of all the variables in the range $[0,1]$ such that the variables are consistent with the gates in the circuit. This computational problem has been proven to be PPAD-hard~\cite{deligkas2024pure}.
Further, we assume that every variable is used \emph{exactly} once as an input to a gate (and exactly once as an output of a gate, as already required in the definition of \PURE{}), and that no variable is both an input and an output of the same gate. These assumptions are without loss of generality for PPAD-hardness~\cite[Corollary 2.3]{deligkas2024pure}.
\end{definition}

\section{Technical Overview}
\label{sec:overview}

We now give an overview of the main ideas behind the hardness and membership results.
We first discuss the hardness proof for uniform tie-breaking.
The reduction is from \PURE{} (\cref{def:pure}) and constructs a Blotto game in which every constant-approximate WSNE encodes a solution to the \PURE{} instance.
The hardness result holds even when every player has at most three resources, and the construction can also be modified so that every player has exactly three resources (\cref{thm:hardness-unif-wsne}).
For approximate NE, the same construction gives hardness for approximation parameter $\Theta(1/n)$ with unrestricted action space (\cref{cor:ne}), and for a constant approximation parameter in the restricted-action variant where players may allocate resources only to battlefields for which they have positive value (\cref{cor:ne2}).
On the other hand, when all players have unit budgets, the game reduces to a singleton congestion game with player-specific payoff functions, and a pure NE can be computed in polynomial time~\cite{milchtaich1996congestion,ackermann2009pure}.
We then discuss the non-uniform tie-breaking result, where hardness holds even with unit budgets and identical values (\cref{thm:nonunif}).
Finally, we explain the compact representation used for PPAD membership.
The main point is that explicitly listing a mixed strategy over pure allocations may be too large.
Its payoff-relevant information can instead be represented by a polynomial-size flow, which is sufficient for computing payoffs under uniform tie-breaking.
This allows us to apply a PPAD-membership theorem for concave games (\cref{thm:concave-games-so}, \cite{papadimitriou2023computational}) and obtain membership for approximate NE and WSNE under uniform tie-breaking (\cref{thm:membership-ne,cor:membership-wsne}).
For non-uniform tie-breaking, the same argument applies when the relevant payoff computations have polynomial-size circuits.

\subsection{Hardness}
\label{sec:overview:hardness}

The hardness proof for uniform tie-breaking reduces from \PURE{}.
Given a \PURE{} instance, we construct a Blotto game whose approximate WSNE gives an assignment to all variables of the \PURE{} instance.
The construction has one primary player for each variable, and several auxiliary players for each gate.
The auxiliary players are called detectors ($\DeP$), blockers ($\BlP$), guards ($\GuP$), pressure players ($\PrP$), or bias players ($\BiP$) depending upon their behavior.
The primary players encode the values of the variables, while the auxiliary players implement the \NOT{}, \AND{}, and \PURIFY{} gates.

The proof is first carried out for a slightly modified version of uniform tie-breaking.
If no player allocates any resource to a battlefield, then every player receives zero payoff from that battlefield.
This avoids a small issue caused by empty battlefields.
The payoffs are also left unnormalized during the construction.
Both changes are only for convenience.
At the end of the proof, we return to the standard uniform tie-breaking convention and normalize the payoffs, losing only a constant factor in the approximation parameter.

There are three types of battlefields: bit battlefields, auxiliary battlefields, and reserve battlefields.
The bit battlefields help encode the values of the variables.
The auxiliary battlefields are used inside the gate gadgets.
The reserve battlefields give outside options to players and are used to keep players away from battlefields for which they have zero value.

\paragraph{Primary players.}
For every variable $v$ of the \PURE{} instance, the Blotto game has a primary player $\PrimP^v$ and two bit battlefields $\BitB^v_0$ and $\BitB^v_1$.
The primary player $\PrimP^v$ has three resources and has value $1$ for each of these two bit battlefields.
It has value zero for all auxiliary battlefields and for all bit battlefields that do not belong to $v$; its values for reserve battlefields are described below.
The intended behavior is that $\PrimP^v$ puts one resource on each of $\BitB^v_0$ and $\BitB^v_1$, and then chooses where to put the third resource.
Thus the two intended actions are the allocations $(2,1)$ and $(1,2)$ over $(\BitB^v_0,\BitB^v_1)$.
Let $p^v_s$ be the probability that $\PrimP^v$ puts at least two resources on $\BitB^v_s$.
We will show that $\PrimP^v$ uses only the two intended actions, and hence $p^v_0 + p^v_1 = 1$.
For a small constant $\alpha>0$, we set $\val{v}=0$ when $p^v_1\leq \alpha$, set $\val{v}=1$ when $p^v_1\geq 1-\alpha$, and linearly map the interval $[\alpha,1-\alpha]$ to $[0,1]$.
In particular, if $p^v_0$ is close to $1$, then $\val{v}=0$, and if $p^v_1$ is close to $1$, then $\val{v}=1$.

\paragraph{Auxiliary players.}
Consider a gate, one of its input variables $u$, and a bit value $s \in \{0,1\}$.
The construction has a small group of auxiliary players that checks whether $\PrimP^u$ puts two resources on $\BitB^u_s$ with high probability.
This group is designed to react to the case $\val{u}=s$.
It has two auxiliary battlefields, which we call the first and second auxiliary battlefield of the group.
It also has a detector, a blocker, a guard, and some pressure players.
The \PURIFY{} gate also has bias players.

The detector has one resource and has positive value for $\BitB^u_s$ and for the first auxiliary battlefield of its group.
The blocker has two resources and has positive value only for the two auxiliary battlefields of its group.
The guard has one resource and has positive value only for the second auxiliary battlefield of its group.
Each pressure player has one resource and has positive value for the second auxiliary battlefield of its group and for one bit battlefield of an output variable of the same gate.
Here an output variable means a variable produced by this gate.
We call this bit battlefield the selected output bit battlefield of the pressure player.
Each bias player has one resource and has positive value for one bit battlefield of an output variable of the \PURIFY{} gate.
Thus, the only auxiliary players that can allocate resources to bit battlefields are detectors, pressure players, and bias players, and all of them have one resource.
Blockers and guards have zero value for every bit battlefield.
Also, each bit battlefield is used by only constantly many auxiliary players in the whole construction.

Each player has a few private reserve battlefields with small positive value, and has value zero for the reserve battlefields of other players.
Every value not specified in the construction is zero.
The reserve battlefields are used to prove that no player puts positive probability on an action that allocates a resource to a battlefield for which the player has zero value.

\cref{fig:uniform-not-gate,fig:uniform-and-gate,fig:uniform-purify-gate} show the structure of the three gates in representative cases.
They are useful to keep in mind while reading the next paragraphs.
Each group has an input bit battlefield, two auxiliary battlefields, and some players that may later move to an output bit battlefield.
The arrows show the movements that happen in the depicted case.
The exact payoff comparisons are proved later; here the figures should be read as a guide to the direction in which the incentives move players.

\begin{figure}[t]
\centering
\resizebox{\linewidth}{!}{\begingroup%
\def\ngBoxW{4.6}
\def\ngBoxH{2.55}
\def\ngXSep{5.55}
\def\ngYSep{4.45}
\def\ngColA{0.25}%
\def\ngColB{0.50}%
\def\ngColC{0.75}%
\def\ngColL{0.34}%
\def\ngColR{0.66}%
\def\ngRowU{0.68}%
\def\ngRowM{0.50}%
\def\ngRowD{0.32}%
\def\ngOutsideSep{0.20}%
\newcommand{\ngPt}[3]{($ (#1.south west) + (#2*\ngBoxW cm,#3*\ngBoxH cm) $)}%
\newcommand{\ngOutsidePair}[3]{%
    \node[externalsmall] (#1a) at ($(#2.south west)+(#3*\ngBoxW cm,\ngRowM*\ngBoxH cm)+(0,\ngOutsideSep)$) {};
    \node[externalsmall] (#1b) at ($(#2.south west)+(#3*\ngBoxW cm,\ngRowM*\ngBoxH cm)+(0,-\ngOutsideSep)$) {};
}%
\newcommand{\ngOutsideSingle}[3]{%
    \node[externalsmall] (#1) at ($(#2.south west)+(#3*\ngBoxW cm,\ngRowM*\ngBoxH cm)$) {};
}%
\begin{tikzpicture}[
    >=Latex,
    every node/.style={font=\small},
    battlefield/.style={
        draw,
        rounded corners=4pt,
        line width=0.9pt,
        minimum width=\ngBoxW cm,
        minimum height=\ngBoxH cm
    },
    boxlabel/.style={font=\small},
    move/.style={->, line width=0.95pt},
    tok/.style={
        circle,
        minimum size=8.6mm,
        inner sep=0pt,
        font=\scriptsize
    },
    primary/.style={tok, draw=blue!70!black, fill=blue!18},
    detector/.style={tok, draw=orange!85!black, fill=orange!22},
    blocker/.style={tok, draw=green!55!black, fill=green!22},
    guard/.style={tok, draw=violet!80!black, fill=violet!18},
    pressure/.style={tok, draw=red!75!black, fill=red!18},
    ghost/.style={densely dotted, line width=0.95pt, fill=white},
    externalsmall/.style={
        circle,
        draw=gray!70,
        densely dotted,
        line width=0.95pt,
        fill=gray!8,
        pattern={Dots[distance=0.95mm, radius=0.20mm]},
        pattern color=gray!60,
        minimum size=8.6mm,
        inner sep=0pt
    }
]

\node[battlefield] (Bu0) at (0,0) {};
\node[battlefield] (A01) at (\ngXSep,0) {};
\node[battlefield] (A02) at (2*\ngXSep,0) {};
\node[battlefield] (Bv0) at (3*\ngXSep,0) {};

\node[battlefield] (Bu1) at (0,-\ngYSep) {};
\node[battlefield] (A11) at (\ngXSep,-\ngYSep) {};
\node[battlefield] (A12) at (2*\ngXSep,-\ngYSep) {};
\node[battlefield] (Bv1) at (3*\ngXSep,-\ngYSep) {};

\node[boxlabel, below=4pt of Bu0] {battlefield $\BitB^u_0$};
\node[boxlabel, below=4pt of A01] {battlefield $\AuxB^u_{0,1}$};
\node[boxlabel, below=4pt of A02] {battlefield $\AuxB^u_{0,2}$};
\node[boxlabel, below=4pt of Bv0] {battlefield $\BitB^v_0$};

\node[boxlabel, below=4pt of Bu1] {battlefield $\BitB^u_1$};
\node[boxlabel, below=4pt of A11] {battlefield $\AuxB^u_{1,1}$};
\node[boxlabel, below=4pt of A12] {battlefield $\AuxB^u_{1,2}$};
\node[boxlabel, below=4pt of Bv1] {battlefield $\BitB^v_1$};

\ngOutsidePair{gin0}{Bu0}{\ngColA}
\node[primary]        (pu0u) at \ngPt{Bu0}{\ngColB}{\ngRowU} {$\PrimP^u$};
\node[primary]        (pu0d) at \ngPt{Bu0}{\ngColB}{\ngRowD} {$\PrimP^u$};
\node[detector,ghost] (d0g)  at \ngPt{Bu0}{\ngColC}{\ngRowM} {$\DeP$};

\ngOutsidePair{gin1}{Bu1}{\ngColA}
\node[primary,ghost]  (pu1u) at \ngPt{Bu1}{\ngColB}{\ngRowU} {$\PrimP^u$};
\node[primary]        (pu1d) at \ngPt{Bu1}{\ngColB}{\ngRowD} {$\PrimP^u$};
\node[detector]       (d1)   at \ngPt{Bu1}{\ngColC}{\ngRowM} {$\DeP$};

\node[detector]       (d0)   at \ngPt{A01}{\ngColL}{\ngRowM} {$\DeP$};
\node[blocker,ghost]  (bl0g) at \ngPt{A01}{\ngColR}{\ngRowM} {$\BlP$};

\node[detector,ghost] (d1g)  at \ngPt{A11}{\ngColL}{\ngRowM} {$\DeP$};
\node[blocker]        (bl1)  at \ngPt{A11}{\ngColR}{\ngRowM} {$\BlP$};

\node[blocker]        (bl0u) at \ngPt{A02}{\ngColA}{\ngRowU} {$\BlP$};
\node[blocker]        (bl0d) at \ngPt{A02}{\ngColA}{\ngRowD} {$\BlP$};
\node[guard,ghost]    (gu0g) at \ngPt{A02}{\ngColB}{\ngRowM} {$\GuP$};
\node[pressure,ghost] (pr0u) at \ngPt{A02}{\ngColC}{\ngRowU} {$\PrP_1$};
\node[pressure,ghost] (pr0d) at \ngPt{A02}{\ngColC}{\ngRowD} {$\PrP_2$};

\node[blocker,ghost]  (bl1u)  at \ngPt{A12}{\ngColA}{\ngRowU} {$\BlP$};
\node[blocker]        (bl1d)  at \ngPt{A12}{\ngColA}{\ngRowD} {$\BlP$};
\node[guard]          (gu1)   at \ngPt{A12}{\ngColB}{\ngRowM} {$\GuP$};
\node[pressure]       (pr1uA) at \ngPt{A12}{\ngColC}{\ngRowU} {$\PrP_1$};
\node[pressure]       (pr1dA) at \ngPt{A12}{\ngColC}{\ngRowD} {$\PrP_2$};

\node[pressure,ghost] (p0u)  at \ngPt{Bv0}{\ngColA}{\ngRowU} {$\PrP_1$};
\node[pressure,ghost] (p0d)  at \ngPt{Bv0}{\ngColA}{\ngRowD} {$\PrP_2$};
\node[primary]        (pv0u) at \ngPt{Bv0}{\ngColB}{\ngRowU} {$\PrimP^v$};
\node[primary,ghost]  (pv0d) at \ngPt{Bv0}{\ngColB}{\ngRowD} {$\PrimP^v$};
\ngOutsideSingle{gout0}{Bv0}{\ngColC}

\node[pressure]       (p1u)  at \ngPt{Bv1}{\ngColA}{\ngRowU} {$\PrP_1$};
\node[pressure]       (p1d)  at \ngPt{Bv1}{\ngColA}{\ngRowD} {$\PrP_2$};
\node[primary]        (pv1u) at \ngPt{Bv1}{\ngColB}{\ngRowU} {$\PrimP^v$};
\node[primary]        (pv1d) at \ngPt{Bv1}{\ngColB}{\ngRowD} {$\PrimP^v$};
\ngOutsideSingle{gout1}{Bv1}{\ngColC}

\draw[move] (pu1u.north) -- (pu0d.south);
\draw[move] (d0g.east)   -- (d0.west);
\draw[move] (bl0g.east)  -- (bl0u.west);
\draw[move] (gu0g.north) -- ++(0,1.35);
\draw[move] (pr0u.east)  -- (p1u.west);
\draw[move] (pr0d.east)  -- (p1d.west);
\draw[move] (pv0d.south) -- (pv1u.north);

\pgfresetboundingbox
\path[use as bounding box]
    ($(Bu0.west)+(-0.03,1.95)$)
    rectangle
    ($(Bv1.east)+(0.03,-2.00)$);

\end{tikzpicture}%
\endgroup%
}
\caption{
Illustrative figure for the \NOT{} gate in the case $\val{u}=0$, which forces $\val{v}=1$.
}
\label{fig:uniform-not-gate}
\end{figure}

\begin{figure}[t]
\centering
\resizebox{\linewidth}{!}{\begingroup%
\def\agBoxW{3.95}
\def\agBoxH{2.15}
\def\agXSep{4.75}
\def\agOutX{14.55}
\def\agRowSep{2.70}
\def\agGroupSep{3.25}
\def\agColA{0.25}%
\def\agColB{0.50}%
\def\agColC{0.75}%
\def\agColL{0.34}%
\def\agColR{0.66}%
\def\agOutPLeft{0.18}%
\def\agOutPRight{0.34}%
\def\agOutPrim{0.60}%
\def\agOutExt{0.84}%
\def\agRowU{0.69}%
\def\agRowMU{0.59}%
\def\agRowM{0.50}%
\def\agRowML{0.41}%
\def\agRowD{0.31}%
\def\agOutsideSep{0.20}%
\newcommand{\agPt}[3]{($ (#1.south west) + (#2*\agBoxW cm,#3*\agBoxH cm) $)}%
\newcommand{\agOutsidePair}[3]{%
    \node[externalsmall] (#1a) at ($(#2.south west)+(#3*\agBoxW cm,0.5*\agBoxH cm)+(0,\agOutsideSep)$) {};
    \node[externalsmall] (#1b) at ($(#2.south west)+(#3*\agBoxW cm,0.5*\agBoxH cm)+(0,-\agOutsideSep)$) {};
}%
\newcommand{\agOutsideSingle}[3]{%
    \node[externalsmall] (#1) at ($(#2.south west)+(#3*\agBoxW cm,0.5*\agBoxH cm)$) {};
}%
\begin{tikzpicture}[
    >=Latex,
    every node/.style={font=\small},
    battlefield/.style={draw, rounded corners=4pt, line width=0.9pt, minimum width=\agBoxW cm, minimum height=\agBoxH cm},
    boxlabel/.style={font=\small, fill=white, inner sep=1pt},
    move/.style={->, line width=0.9pt},
    tok/.style={circle, minimum size=7.6mm, inner sep=0pt, font=\tiny},
    primary/.style={tok, draw=blue!70!black, fill=blue!18},
    detector/.style={tok, draw=orange!85!black, fill=orange!22},
    blocker/.style={tok, draw=green!55!black, fill=green!22},
    guard/.style={tok, draw=violet!80!black, fill=violet!18},
    pressure/.style={tok, draw=red!75!black, fill=red!18},
    ghost/.style={densely dotted, line width=0.9pt, fill=white},
    externalsmall/.style={circle, draw=gray!70, densely dotted, line width=0.9pt, fill=gray!8, pattern={Dots[distance=0.90mm, radius=0.18mm]}, pattern color=gray!60, minimum size=7.6mm, inner sep=0pt}
]

\node[battlefield] (Bu0) at (0,0) {};
\node[battlefield] (Au01) at (\agXSep,0) {};
\node[battlefield] (Au02) at (2*\agXSep,0) {};

\node[battlefield] (Bu1) at (0,-\agRowSep) {};
\node[battlefield] (Au11) at (\agXSep,-\agRowSep) {};
\node[battlefield] (Au12) at (2*\agXSep,-\agRowSep) {};

\node[battlefield] (Bv0) at (0,-\agRowSep-\agGroupSep) {};
\node[battlefield] (Av01) at (\agXSep,-\agRowSep-\agGroupSep) {};
\node[battlefield] (Av02) at (2*\agXSep,-\agRowSep-\agGroupSep) {};

\node[battlefield] (Bv1) at (0,-2*\agRowSep-\agGroupSep) {};
\node[battlefield] (Av11) at (\agXSep,-2*\agRowSep-\agGroupSep) {};
\node[battlefield] (Av12) at (2*\agXSep,-2*\agRowSep-\agGroupSep) {};

\node[battlefield] (Bw0) at (\agOutX,-1.35) {};
\node[battlefield] (Bw1) at (\agOutX,-7.25) {};

\node[boxlabel, below=2pt of Bu0] {battlefield $\BitB^u_0$};
\node[boxlabel, below=2pt of Au01] {battlefield $\AuxB^u_{0,1}$};
\node[boxlabel, below=2pt of Au02] {battlefield $\AuxB^u_{0,2}$};

\node[boxlabel, below=2pt of Bu1] {battlefield $\BitB^u_1$};
\node[boxlabel, below=2pt of Au11] {battlefield $\AuxB^u_{1,1}$};
\node[boxlabel, below=2pt of Au12] {battlefield $\AuxB^u_{1,2}$};

\node[boxlabel, below=2pt of Bv0] {battlefield $\BitB^v_0$};
\node[boxlabel, below=2pt of Av01] {battlefield $\AuxB^v_{0,1}$};
\node[boxlabel, below=2pt of Av02] {battlefield $\AuxB^v_{0,2}$};

\node[boxlabel, below=2pt of Bv1] {battlefield $\BitB^v_1$};
\node[boxlabel, below=2pt of Av11] {battlefield $\AuxB^v_{1,1}$};
\node[boxlabel, below=2pt of Av12] {battlefield $\AuxB^v_{1,2}$};

\node[boxlabel, below=2pt of Bw0] {battlefield $\BitB^w_0$};
\node[boxlabel, below=2pt of Bw1] {battlefield $\BitB^w_1$};

\agOutsidePair{ginu0}{Bu0}{\agColA}
\node[primary]        (pu0u) at \agPt{Bu0}{\agColB}{\agRowU} {$\PrimP^u$};
\node[primary]        (pu0d) at \agPt{Bu0}{\agColB}{\agRowD} {$\PrimP^u$};
\node[detector,ghost] (du0g) at \agPt{Bu0}{\agColC}{\agRowM} {$\DeP$};

\node[detector]       (du0) at \agPt{Au01}{\agColL}{\agRowM} {$\DeP$};
\node[blocker,ghost]  (blu01g) at \agPt{Au01}{\agColR}{\agRowM} {$\BlP$};

\node[blocker]        (blu02u) at \agPt{Au02}{\agColA}{\agRowU} {$\BlP$};
\node[blocker]        (blu02d) at \agPt{Au02}{\agColA}{\agRowD} {$\BlP$};
\node[guard,ghost]    (gu0g) at \agPt{Au02}{\agColB}{\agRowM} {$\GuP$};
\node[pressure,ghost] (pru01g) at \agPt{Au02}{\agColC}{\agRowU} {$\PrP_1$};
\node[pressure,ghost] (pru02g) at \agPt{Au02}{\agColC}{\agRowM} {$\PrP_2$};
\node[pressure,ghost] (pru03g) at \agPt{Au02}{\agColC}{\agRowD} {$\PrP_3$};

\agOutsidePair{ginu1}{Bu1}{\agColA}
\node[primary,ghost]  (pu1u) at \agPt{Bu1}{\agColB}{\agRowU} {$\PrimP^u$};
\node[primary]        (pu1d) at \agPt{Bu1}{\agColB}{\agRowD} {$\PrimP^u$};
\node[detector]       (du1) at \agPt{Bu1}{\agColC}{\agRowM} {$\DeP$};

\node[detector,ghost] (du1g) at \agPt{Au11}{\agColL}{\agRowM} {$\DeP$};
\node[blocker]        (blu11) at \agPt{Au11}{\agColR}{\agRowM} {$\BlP$};

\node[blocker,ghost]  (blu12g) at \agPt{Au12}{\agColA}{\agRowU} {$\BlP$};
\node[blocker]        (blu12) at \agPt{Au12}{\agColA}{\agRowD} {$\BlP$};
\node[guard]          (gu1) at \agPt{Au12}{\agColB}{\agRowM} {$\GuP$};
\node[pressure]       (pru1) at \agPt{Au12}{\agColC}{\agRowM} {$\PrP_1$};

\agOutsidePair{ginv0}{Bv0}{\agColA}
\node[primary]        (pv0u) at \agPt{Bv0}{\agColB}{\agRowU} {$\PrimP^v$};
\node[primary,ghost]  (pv0d) at \agPt{Bv0}{\agColB}{\agRowD} {$\PrimP^v$};
\node[detector]       (dv0) at \agPt{Bv0}{\agColC}{\agRowM} {$\DeP$};

\node[detector,ghost] (dv0g) at \agPt{Av01}{\agColL}{\agRowM} {$\DeP$};
\node[blocker]        (blv01) at \agPt{Av01}{\agColR}{\agRowM} {$\BlP$};

\node[blocker,ghost]  (blv02g) at \agPt{Av02}{\agColA}{\agRowU} {$\BlP$};
\node[blocker]        (blv02) at \agPt{Av02}{\agColA}{\agRowD} {$\BlP$};
\node[guard]          (gv0) at \agPt{Av02}{\agColB}{\agRowM} {$\GuP$};
\node[pressure]       (prv04) at \agPt{Av02}{\agColC}{\agRowU} {$\PrP_1$};
\node[pressure]       (prv05) at \agPt{Av02}{\agColC}{\agRowM} {$\PrP_2$};
\node[pressure]       (prv06) at \agPt{Av02}{\agColC}{\agRowD} {$\PrP_3$};

\agOutsidePair{ginv1}{Bv1}{\agColA}
\node[primary]        (pv1u) at \agPt{Bv1}{\agColB}{\agRowU} {$\PrimP^v$};
\node[primary]        (pv1d) at \agPt{Bv1}{\agColB}{\agRowD} {$\PrimP^v$};
\node[detector,ghost] (dv1g) at \agPt{Bv1}{\agColC}{\agRowM} {$\DeP$};

\node[detector]       (dv1) at \agPt{Av11}{\agColL}{\agRowM} {$\DeP$};
\node[blocker,ghost]  (blv11g) at \agPt{Av11}{\agColR}{\agRowM} {$\BlP$};

\node[blocker]        (blv12u) at \agPt{Av12}{\agColA}{\agRowU} {$\BlP$};
\node[blocker]        (blv12d) at \agPt{Av12}{\agColA}{\agRowD} {$\BlP$};
\node[guard,ghost]    (gv1g) at \agPt{Av12}{\agColB}{\agRowM} {$\GuP$};
\node[pressure,ghost] (prv1g) at \agPt{Av12}{\agColC}{\agRowM} {$\PrP_1$};

\node[pressure]       (pwu1) at \agPt{Bw0}{\agOutPLeft}{\agRowU} {$\PrP^u_1$};
\node[pressure]       (pwu2) at \agPt{Bw0}{\agOutPLeft}{\agRowM} {$\PrP^u_2$};
\node[pressure]       (pwu3) at \agPt{Bw0}{\agOutPLeft}{\agRowD} {$\PrP^u_3$};
\node[pressure,ghost] (pwv4) at \agPt{Bw0}{\agOutPRight}{\agRowU} {$\PrP^v_1$};
\node[pressure,ghost] (pwv5) at \agPt{Bw0}{\agOutPRight}{\agRowM} {$\PrP^v_2$};
\node[pressure,ghost] (pwv6) at \agPt{Bw0}{\agOutPRight}{\agRowD} {$\PrP^v_3$};
\node[primary]        (pw0u) at \agPt{Bw0}{\agOutPrim}{\agRowU} {$\PrimP^w$};
\node[primary]        (pw0d) at \agPt{Bw0}{\agOutPrim}{\agRowD} {$\PrimP^w$};
\agOutsideSingle{gout0}{Bw0}{\agOutExt}

\node[pressure,ghost] (pwu7) at \agPt{Bw1}{\agOutPLeft}{\agRowM} {$\PrP^u_1$};
\node[pressure]       (pwv8) at \agPt{Bw1}{\agOutPRight}{\agRowM} {$\PrP^v_1$};
\node[primary,ghost]  (pw1u) at \agPt{Bw1}{\agOutPrim}{\agRowU} {$\PrimP^w$};
\node[primary]        (pw1d) at \agPt{Bw1}{\agOutPrim}{\agRowD} {$\PrimP^w$};
\agOutsideSingle{gout1}{Bw1}{\agOutExt}

\draw[move] (du0g.east) -- (du0.west);
\draw[move] (blu01g.east) -- (blu02u.west);
\draw[move] (gu0g.north) -- ++(0,1.05);
\draw[move] (pru01g.east) -- (pwu1.west);
\draw[move] (pru02g.east) -- (pwu2.west);
\draw[move] (pru03g.east) -- (pwu3.west);

\draw[move] (dv1g.east) -- (dv1.west);
\draw[move] (blv11g.east) -- (blv12u.west);
\draw[move] (gv1g.south) -- ++(0,-1.05);
\draw[move] (prv1g.east) -- (pwv8.west);

\draw[move] (pu1u.north) -- (pu0d.south);
\draw[move] (pv0d.south) -- (pv1u.north);
\draw[move] (pw1u.north) -- (pw0d.south);

\coordinate (agTop) at ($(gu0g.north)+(0,1.15)$);
\coordinate (agBottom) at ($(gv1g.south)+(0,-1.15)$);

\pgfresetboundingbox
\path[use as bounding box]
    ($(Bu0.west |- agTop)+(-0.03,0.03)$)
    rectangle
    ($(Bw1.east |- agBottom)+(0.03,-0.03)$);

\end{tikzpicture}%
\endgroup%
}
\caption{Illustrative figure for \AND{} gate in the case $\val{u}=0$ and $\val{v}=1$, which forces $\val{w}=0$.}
\label{fig:uniform-and-gate}
\end{figure}

\begin{figure}[t]
\centering
\resizebox{\linewidth}{!}{\begingroup%
\def\pgBoxW{3.95}
\def\pgBoxH{2.15}
\def\pgXSep{4.75}
\def\pgYGap{3.25}
\def\pgHalfGap{1.625}

\def\pgColA{0.25}%
\def\pgColB{0.50}%
\def\pgColC{0.75}%
\def\pgColL{0.34}%
\def\pgColR{0.66}%

\def\pgRowU{0.69}%
\def\pgRowM{0.50}%
\def\pgRowD{0.31}%

\def\pgRowPone{0.80}%
\def\pgRowPtwo{0.62}%
\def\pgRowPthree{0.44}%
\def\pgRowPfour{0.26}%

\def\pgOutsideSep{0.18}%

\newcommand{\pgPt}[3]{($ (#1.south west) + (#2*\pgBoxW cm,#3*\pgBoxH cm) $)}%

\newcommand{\pgOutsidePair}[3]{%
  \node[externalsmall] (#1a) at ($(#2.south west)+(#3*\pgBoxW cm,0.5*\pgBoxH cm)+(0,\pgOutsideSep)$) {};
  \node[externalsmall] (#1b) at ($(#2.south west)+(#3*\pgBoxW cm,0.5*\pgBoxH cm)+(0,-\pgOutsideSep)$) {};
}%

\newcommand{\pgOutsideSingle}[3]{%
  \node[externalsmall] (#1) at ($(#2.south west)+(#3*\pgBoxW cm,0.5*\pgBoxH cm)$) {};
}%

\begin{tikzpicture}[
  >=Latex,
  every node/.style={font=\small},
  battlefield/.style={
      draw,
      rounded corners=4pt,
      line width=0.9pt,
      minimum width=\pgBoxW cm,
      minimum height=\pgBoxH cm
  },
  boxlabel/.style={font=\small, fill=white, inner sep=1pt},
  move/.style={->, line width=0.95pt},
  tok/.style={circle, minimum size=7.6mm, inner sep=0pt, font=\tiny},
  primary/.style={tok, draw=blue!70!black, fill=blue!18},
  detector/.style={tok, draw=orange!85!black, fill=orange!22},
  blocker/.style={tok, draw=green!55!black, fill=green!22},
  guard/.style={tok, draw=violet!80!black, fill=violet!18},
  pressure/.style={tok, draw=red!75!black, fill=red!18},
  bias/.style={tok, draw=brown!85!black, fill=brown!18},
  ghost/.style={densely dotted, line width=0.9pt, fill=white},
  externalsmall/.style={
      circle,
      draw=gray!70,
      densely dotted,
      line width=0.9pt,
      fill=gray!8,
      pattern={Dots[distance=0.90mm, radius=0.18mm]},
      pattern color=gray!60,
      minimum size=7.6mm,
      inner sep=0pt
  }
]

\node[battlefield] (Bv0) at (3*\pgXSep,0) {};
\node[battlefield] (Bv1) at (3*\pgXSep,-\pgYGap) {};
\node[battlefield] (Bw0) at (3*\pgXSep,-2*\pgYGap) {};
\node[battlefield] (Bw1) at (3*\pgXSep,-3*\pgYGap) {};

\node[battlefield] (Bu0) at (0,-\pgHalfGap) {};
\node[battlefield] (A01) at (\pgXSep,-\pgHalfGap) {};
\node[battlefield] (A02) at (2*\pgXSep,-\pgHalfGap) {};

\node[battlefield] (Bu1) at (0,-2*\pgYGap-\pgHalfGap) {};
\node[battlefield] (A11) at (\pgXSep,-2*\pgYGap-\pgHalfGap) {};
\node[battlefield] (A12) at (2*\pgXSep,-2*\pgYGap-\pgHalfGap) {};

\node[boxlabel, below=2pt of Bu0] {battlefield $\BitB^u_0$};
\node[boxlabel, below=2pt of A01] {battlefield $\AuxB^u_{0,1}$};
\node[boxlabel, below=2pt of A02] {battlefield $\AuxB^u_{0,2}$};
\node[boxlabel, below=2pt of Bv0] {battlefield $\BitB^v_0$};
\node[boxlabel, below=2pt of Bv1] {battlefield $\BitB^v_1$};
\node[boxlabel, below=2pt of Bw0] {battlefield $\BitB^w_0$};
\node[boxlabel, below=2pt of Bw1] {battlefield $\BitB^w_1$};
\node[boxlabel, below=2pt of Bu1] {battlefield $\BitB^u_1$};
\node[boxlabel, below=2pt of A11] {battlefield $\AuxB^u_{1,1}$};
\node[boxlabel, below=2pt of A12] {battlefield $\AuxB^u_{1,2}$};

\pgOutsidePair{gin0}{Bu0}{\pgColA}
\node[primary]        (pu0u) at \pgPt{Bu0}{\pgColB}{\pgRowU} {$\PrimP^u$};
\node[primary]        (pu0d) at \pgPt{Bu0}{\pgColB}{\pgRowD} {$\PrimP^u$};
\node[detector,ghost] (du0g) at \pgPt{Bu0}{\pgColC}{\pgRowM} {$\DeP$};

\node[detector]       (du0)  at \pgPt{A01}{\pgColL}{\pgRowM} {$\DeP$};
\node[blocker,ghost]  (bl0g) at \pgPt{A01}{\pgColR}{\pgRowM} {$\BlP$};

\node[blocker]        (bl0u) at \pgPt{A02}{\pgColA}{\pgRowU} {$\BlP$};
\node[blocker]        (bl0d) at \pgPt{A02}{\pgColA}{\pgRowD} {$\BlP$};
\node[guard,ghost]    (gu0g) at \pgPt{A02}{\pgColB}{\pgRowM} {$\GuP$};
\node[pressure,ghost] (pr0a) at \pgPt{A02}{\pgColC}{\pgRowPone}   {$\PrP_1$};
\node[pressure,ghost] (pr0b) at \pgPt{A02}{\pgColC}{\pgRowPtwo}   {$\PrP_2$};
\node[pressure,ghost] (pr0c) at \pgPt{A02}{\pgColC}{\pgRowPthree} {$\PrP_3$};
\node[pressure,ghost] (pr0d) at \pgPt{A02}{\pgColC}{\pgRowPfour}  {$\PrP_4$};

\pgOutsidePair{gin1}{Bu1}{\pgColA}
\node[primary,ghost]  (pu1u) at \pgPt{Bu1}{\pgColB}{\pgRowU} {$\PrimP^u$};
\node[primary]        (pu1d) at \pgPt{Bu1}{\pgColB}{\pgRowD} {$\PrimP^u$};
\node[detector]       (du1)  at \pgPt{Bu1}{\pgColC}{\pgRowM} {$\DeP$};

\node[detector,ghost] (du1g) at \pgPt{A11}{\pgColL}{\pgRowM} {$\DeP$};
\node[blocker]        (bl1)  at \pgPt{A11}{\pgColR}{\pgRowM} {$\BlP$};

\node[blocker,ghost]  (bl1u) at \pgPt{A12}{\pgColA}{\pgRowU} {$\BlP$};
\node[blocker]        (bl1d) at \pgPt{A12}{\pgColA}{\pgRowD} {$\BlP$};
\node[guard]          (gu1)  at \pgPt{A12}{\pgColB}{\pgRowM} {$\GuP$};
\node[pressure]       (pr1a) at \pgPt{A12}{\pgColC}{\pgRowPone}   {$\PrP_1$};
\node[pressure]       (pr1b) at \pgPt{A12}{\pgColC}{\pgRowPtwo}   {$\PrP_2$};
\node[pressure]       (pr1c) at \pgPt{A12}{\pgColC}{\pgRowPthree} {$\PrP_3$};
\node[pressure]       (pr1d) at \pgPt{A12}{\pgColC}{\pgRowPfour}  {$\PrP_4$};

\node[pressure]       (pv0a) at \pgPt{Bv0}{\pgColA}{\pgRowPone}   {$\PrP_1$};
\node[pressure]       (pv0b) at \pgPt{Bv0}{\pgColA}{\pgRowPtwo}   {$\PrP_2$};
\node[pressure]       (pv0c) at \pgPt{Bv0}{\pgColA}{\pgRowPthree} {$\PrP_3$};
\node[pressure]       (pv0d) at \pgPt{Bv0}{\pgColA}{\pgRowPfour}  {$\PrP_4$};
\node[primary]        (qv0u) at \pgPt{Bv0}{\pgColB}{\pgRowU} {$\PrimP^v$};
\node[primary]        (qv0d) at \pgPt{Bv0}{\pgColB}{\pgRowD} {$\PrimP^v$};
\pgOutsideSingle{gv0}{Bv0}{\pgColC}

\node[bias]           (bv1u) at \pgPt{Bv1}{\pgColA}{\pgRowU} {$\BiP_1$};
\node[bias]           (bv1d) at \pgPt{Bv1}{\pgColA}{\pgRowD} {$\BiP_2$};
\node[primary,ghost]  (qv1u) at \pgPt{Bv1}{\pgColB}{\pgRowU} {$\PrimP^v$};
\node[primary]        (qv1d) at \pgPt{Bv1}{\pgColB}{\pgRowD} {$\PrimP^v$};
\pgOutsideSingle{gv1}{Bv1}{\pgColC}

\node[bias]           (bw0u) at \pgPt{Bw0}{\pgColA}{\pgRowU} {$\BiP_1$};
\node[bias]           (bw0d) at \pgPt{Bw0}{\pgColA}{\pgRowD} {$\BiP_2$};
\node[primary]        (qw0u) at \pgPt{Bw0}{\pgColB}{\pgRowU} {$\PrimP^w$};
\node[primary]        (qw0d) at \pgPt{Bw0}{\pgColB}{\pgRowD} {$\PrimP^w$};
\pgOutsideSingle{gw0}{Bw0}{\pgColC}

\node[pressure,ghost] (pw1a) at \pgPt{Bw1}{\pgColA}{\pgRowPone}   {$\PrP_1$};
\node[pressure,ghost] (pw1b) at \pgPt{Bw1}{\pgColA}{\pgRowPtwo}   {$\PrP_2$};
\node[pressure,ghost] (pw1c) at \pgPt{Bw1}{\pgColA}{\pgRowPthree} {$\PrP_3$};
\node[pressure,ghost] (pw1d) at \pgPt{Bw1}{\pgColA}{\pgRowPfour}  {$\PrP_4$};
\node[primary,ghost]  (qw1u) at \pgPt{Bw1}{\pgColB}{\pgRowU} {$\PrimP^w$};
\node[primary]        (qw1d) at \pgPt{Bw1}{\pgColB}{\pgRowD} {$\PrimP^w$};
\pgOutsideSingle{gw1}{Bw1}{\pgColC}

\draw[move] (pu1u.north) -- (pu0d.south);
\draw[move] (du0g.east) -- (du0.west);
\draw[move] (bl0g.east) -- (bl0u.west);
\draw[move] (gu0g.north) -- ++(0,1.15);
\draw[move] (pr0a.east) -- (pv0a.west);
\draw[move] (pr0b.east) -- (pv0b.west);
\draw[move] (pr0c.east) -- (pv0c.west);
\draw[move] (pr0d.east) -- (pv0d.west);
\draw[move] (qv1u.north) -- (qv0d.south);
\draw[move] (qw1u.north) -- (qw0d.south);
\draw[move] (bw0u.west) -- ++(-1.00,0.55);
\draw[move] (bw0d.west) -- ++(-1.00,0.22);

\coordinate (pgTop) at ($(Bv0.north)+(0,0.18)$);
\coordinate (pgBottom) at ($(Bw1.south)+(0,-0.55)$);
\pgfresetboundingbox
\path[use as bounding box]
  ($(Bu0.west |- pgTop)+(-0.03,0.03)$)
  rectangle
  ($(Bv0.east |- pgBottom)+(0.03,-0.03)$);

\end{tikzpicture}%
\endgroup%
}
\caption{Illustrative figure for the \PURIFY{} gate in the case $\val{u}=0$, which forces $\val{v} = \val{w} = 0$.}
\label{fig:uniform-purify-gate}
\end{figure}

\paragraph{Primary player behavior.}

We frequently use the following simple consequences of uniform tie-breaking.
A player who puts one resource on a battlefield gets no payoff from that battlefield if some opponent puts two or more resources there.
If all players who use the battlefield put exactly one resource there, then they share the battlefield uniformly.
A player with two resources can either split them across two battlefields, or put both resources on one battlefield and beat all one-resource players on that battlefield.
The primary players use three resources to combine these effects: one resource can stay on each bit battlefield, while the third resource decides which bit battlefield is won against all one-resource auxiliary players.

The reserve battlefield argument works as follows.
Suppose a player uses a resource on a battlefield for which it has zero value.
Then this resource can be moved to one of the player's private reserve battlefields.
If this deviation were not profitable, then many opponents would have to be placing resources on this player's reserve battlefields.
Summing this over all players who use zero-valued battlefields gives a contradiction, because all such resources must themselves come from players using zero-valued battlefields.
This proves that zero-valued battlefields are not used in any approximate WSNE (\cref{lm:reserve}).

We can now describe the behavior of primary players.
Given the reserve battlefield argument, the primary player $\PrimP^v$ can only use $\BitB^v_0$, $\BitB^v_1$, and its private reserve battlefields.
It will not put all three resources on one bit battlefield.
Indeed, putting two resources there already beats every possible opponent on that battlefield, because every auxiliary player that may use a bit battlefield has only one resource.
Moving the third resource to the other bit battlefield keeps the payoff from the first battlefield and gives additional payoff from the second.
The primary player also will not use its reserve battlefields.
If it uses a reserve battlefield, then one of the two bit battlefields is missing a resource or has only one resource.
Moving the reserve resource to such a bit battlefield gives more payoff than the reserve option.
The constants are chosen so that this improvement is larger than the approximation parameter, even after accounting for the constantly many one-resource auxiliary players that can interact with a bit battlefield.
Thus, in any approximate WSNE, $\PrimP^v$ only plays the two actions $(2,1)$ and $(1,2)$ over $(\BitB^v_0,\BitB^v_1)$ (\cref{lm:primary}).
This gives the desired probabilistic encoding of variables.

\paragraph{Auxiliary player behavior.}
Fix a gate, an input variable $u$ of this gate, and a bit value $s \in \{0,1\}$.
Consider the auxiliary group that checks whether $\PrimP^u$ puts two resources on $\BitB^u_s$ with high probability.
If $\PrimP^u$ puts two resources on $\BitB^u_s$ with high probability, then the detector does not get enough payoff from $\BitB^u_s$, and moves to the first auxiliary battlefield of its group.
In the figures, this is the first arrow leaving the input bit battlefield.
We call this detector active.
If $\PrimP^u$ puts only one resource on $\BitB^u_s$ with high enough probability, then the detector gets enough payoff from $\BitB^u_s$, and stays there.
We call this detector inactive.
This is the first step in converting the value of an input variable into incentives for auxiliary players.

The blocker and guard propagate the detector's behavior to the pressure players.
If the detector is inactive, then the first auxiliary battlefield is free.
The blocker puts one resource on each of the two auxiliary battlefields of the group, the guard stays on the second auxiliary battlefield, and the pressure players also stay on the second auxiliary battlefield.
Thus an inactive detector creates no pressure on any output bit battlefield.

If the detector is active, then it occupies the first auxiliary battlefield.
The blocker then puts both resources on the second auxiliary battlefield with high probability.
This makes the second auxiliary battlefield unattractive for the pressure players, so they leave it and choose between their selected output bit battlefield and their reserve battlefields.
This is the movement from the auxiliary battlefield toward the output bit battlefield in the figures.
If the output primary player has not yet put two resources on that selected bit battlefield with high probability, then the pressure players strictly prefer the selected bit battlefield to their reserves (\cref{lm:pressure,lm:pressure2}).
Thus an active detector creates conditional pressure on a selected output bit battlefield.

This pressure is what forces the output value.
Suppose some pressure players move to an output bit battlefield $\BitB^v_s$.
If the primary player $\PrimP^v$ puts only one resource on $\BitB^v_s$, then it has to share that battlefield with the pressure players.
If instead it puts two resources on $\BitB^v_s$, then it wins the battlefield outright.
The constants and the number of pressure players are chosen so that this payoff comparison forces $\PrimP^v$ to put two resources on the selected bit battlefield with high probability.
In this way the gadgets implement gate constraints.

\paragraph{Gates.}
For a \NOT{} gate $v=\NOT(u)$, the auxiliary group that checks the case $\val{u}=s$ sends pressure to $\BitB^v_{1-s}$.
\Cref{fig:uniform-not-gate} shows the case $\val{u}=0$.
The primary player $\PrimP^u$ puts two resources on $\BitB^u_0$ with high probability, so the detector for bit value $0$ is active.
This makes the pressure players leave the second auxiliary battlefield and look at $\BitB^v_1$.
At the same time, $p^u_1$ is close to $0$, so the detector for bit value $1$ is inactive, and the corresponding pressure players stay on their second auxiliary battlefield.
Now suppose for contradiction that $p^v_1$ is not close to $1$.
Then the released pressure players move to $\BitB^v_1$.
Given this pressure, the primary player $\PrimP^v$ strictly prefers the action $(1,2)$ over $(2,1)$ on $(\BitB^v_0,\BitB^v_1)$.
Hence $\PrimP^v$ must play $(1,2)$ with probability $1$, contradicting that $p^v_1$ is not close to $1$.
Therefore $p^v_1$ is close to $1$, and so $\val{v}=1$.
The case $\val{u}=1$ is symmetric and forces $\val{v}=0$.
If $\val{u}\in(0,1)$, then the \PURE{} gate imposes no condition on the output, so no further argument is needed.

For an \AND{} gate $w=\AND(u,v)$, the construction uses the same auxiliary mechanism for both input variables.
For each input variable $\sigma \in \{u,v\}$, the auxiliary group that checks the case $\val{\sigma}=s$ sends pressure to $\BitB^w_s$.
The main difference from the \NOT{} gate is the number of pressure players.
There are more pressure players in the groups for bit value $0$ than in the groups for bit value $1$.
This matches the logic of \AND{}: one input equal to $0$ should already force the output to $0$, while both inputs must be equal to $1$ to force the output to $1$.

\Cref{fig:uniform-and-gate} shows the case $\val{u}=0$ and $\val{v}=1$, which is one of the cases where the output must be $0$.
The detector for bit value $0$ of input $u$ is active and releases pressure players toward $\BitB^w_0$.
The detector for bit value $1$ of input $u$ is inactive, so its pressure players stay on their second auxiliary battlefield.
If $p^w_0$ were not close to $1$, then the released pressure players would move to $\BitB^w_0$.
There are enough of them to make $\PrimP^w$ strictly prefer putting two resources on $\BitB^w_0$, even after accounting for the possible effect of the other input.
Thus $\val{w}=0$.
The case $\val{v}=0$ is symmetric.
If instead $\val{u}=\val{v}=1$, then the groups for bit value $1$ of both inputs release pressure toward $\BitB^w_1$, while the groups for bit value $0$ of both inputs create no pressure on the output.
Together, the released pressure players force $\PrimP^w$ to put two resources on $\BitB^w_1$, and hence $\val{w}=1$.
All other input cases are unrestricted in \PURE{}.

The \PURIFY{} gate has one input and two outputs.
Consider a gate $(v,w)=\PURIFY(u)$.
The gate again uses detector, blocker, guard, and pressure players, but it also uses bias players.
The purpose of the gate is different from \NOT{} and \AND{}.
If the input is in $\{0,1\}$, then both outputs must copy it.
If the input lies in $(0,1)$, then at least one of the two outputs must be in $\{0,1\}$.

The construction is arranged so that the two detectors give four implications.
If the detector for bit value $0$ is active, then pressure players force the first output $v$ to be $0$.
If the detector for bit value $0$ is inactive, then no pressure players move to $\BitB^v_0$, and the bias players force $v$ to be $1$.
Similarly, if the detector for bit value $1$ is active, then pressure players force the second output $w$ to be $1$.
If the detector for bit value $1$ is inactive, then no pressure players move to $\BitB^w_1$, and the bias players force $w$ to be $0$.
The bias players are needed in the cases where the corresponding detector is inactive.

\Cref{fig:uniform-purify-gate} shows the case $\val{u}=0$.
The detector for bit value $0$ is active and the detector for bit value $1$ is inactive.
The active detector sends pressure that forces $\val{v}=0$, while the inactive detector leaves the bias players to force $\val{w}=0$.
Thus both outputs copy the Boolean input.
If $\val{u}=1$, the symmetric argument gives $\val{v}=1$ and $\val{w}=1$.
Finally, suppose $\val{u}\in(0,1)$.
The active and inactive thresholds are chosen so that, using the fact that $p^u_0 + p^u_1 = 1$, at least one detector has a specified behavior.
Therefore, one of the four implications above applies, and at least one of the two outputs lies in $\{0,1\}$, as required.

This proves that every approximate WSNE of the constructed Blotto game gives a valid solution to the \PURE{} instance.
All players in the base construction have at most three resources.
To make every player have exactly three resources, the proof adds private high-value battlefields to players with fewer than three resources.
In any approximate WSNE, these players must put one resource on each such private battlefield, and the remaining resources simulate the original construction.
The proof then returns from the modified no-resource tie-breaking convention to the standard uniform tie-breaking convention.
The two games differ only on empty battlefields, and this changes every payoff by at most a small additive term.
After normalization, we obtain \cref{thm:hardness-unif-wsne}.

\paragraph{Approximate NE and small budgets.}
The proof above uses the well-supported condition in several places.
For example, \cref{lm:reserve} shows that no player uses a zero-valued battlefield by proving that any such action is worse than another action by more than the approximation parameter.
This conclusion is valid in a WSNE, but not in an approximate NE.
In an approximate NE, a player may put small probability on a much worse action as long as the contribution to expected regret is small.
Since every player can in principle allocate resources to every battlefield, these small probabilities can accumulate over all players and change the incentives on a battlefield by an amount proportional to the number of players.
This is why the direct consequence for approximate NE gives hardness only for approximation parameter $\Theta(1/n)$ (\cref{cor:ne}).
If we restrict the action space so that a player may allocate resources only to battlefields for which it has positive value, then each battlefield in the construction can be affected by only constantly many players, so the accumulated error remains constant, and the same construction gives hardness for constant-approximate NE (\cref{cor:ne2}).

The hardness construction uses three resources for the primary players in an essential way: one resource stays on each bit battlefield, and the third resource determines which bit battlefield receives two resources.
The allocation of two resources to a bit battlefield is essential to induce the detectors, and as a consequence, the other auxiliary players, to change their actions.
On the other hand, the allocation of a single resource to the other bit battlefield ensures that the primary player still gets a positive payoff from it, which is essential to ensure that the primary player chooses the suitable action among the two when influenced by the pressure and bias players on the output side.
At the other extreme, when every player has one resource, each pure strategy is simply to choose one battlefield.
The game then reduces to a singleton congestion game with player-specific payoff functions.\footnote{There is a small bookkeeping issue under the standard uniform tie-breaking convention, because players may also receive payoff from battlefields on which all players allocate zero resources. This is handled in \cref{app:singleton-congestion}.}
Such games have polynomial-time computable pure Nash equilibria~\cite{milchtaich1996congestion,ackermann2009pure}.
The case of two resources lies between these two: it does not reduce to singleton congestion games, but the three-resource encoding above is also difficult to implement using only two resources.
We leave this case open.

\paragraph{Non-uniform tie-breaking.}
The hardness overview so far used uniform tie-breaking.
We also prove a hardness result for non-uniform tie-breaking.
The reduction is again from \PURE{}, but it uses a different source of payoff differences.
In the uniform reduction, these differences come from player-specific values, budgets of more than one, and auxiliary battlefields.
In the non-uniform reduction, the values are identical and every player has one resource; the payoff differences come from the tie-breaking rule itself.

For every variable $v$, the primary player $\PrimP^v$ chooses between two bit battlefields $\BitB^v_0$ and $\BitB^v_1$.
The probability that $\PrimP^v$ chooses $\BitB^v_1$ encodes the value of $v$.
Each bit battlefield also has dummy players.
The tie-breaking rule ensures that, in every approximate WSNE, the dummy players play their own battlefield with probability $1$.
This makes unrelated battlefields unattractive for the other players and keeps the influence of each player local.

The gate logic is implemented by auxiliary players and the tie-breaking shares.
For each variable $v$ and bit value $s \in \{0,1\}$, there is an auxiliary player that chooses between the input battlefield $\BitB^v_s$ and an output battlefield determined by the gate.
The tie-breaking rule makes this auxiliary player move to the output battlefield when the encoded input value is $s$.
Once it moves there, the same tie-breaking rule changes the payoff of the output primary player and forces the required output choice.
In this way, the tie-breaking rule replaces the detector, blocker, guard, and pressure-player chain from the uniform reduction.

The proof then shows that these local effects implement the \NOT{}, \AND{}, and \PURIFY{} constraints.
The tie-breaking rules used in the construction are monotone.
Moreover, the construction gives hardness not only for constant-approximate WSNE, but also for constant-approximate NE (\cref{thm:nonunif}).

\subsection{Membership}
\label{sec:overview:membership}

We next explain the PPAD-membership result.
We derive the result for approximate equilibria instead of exact equilibria.
In two-player normal-form games, once the supports are fixed, the equilibrium conditions are linear.
This gives a rational polyhedral structure when the payoffs are rational.
For three or more players, the payoff of a pure action is multilinear in the mixed strategies of the other players.
Thus, after fixing supports, the equilibrium conditions become polynomial equations, and isolated equilibria may be irrational even when all payoffs are rational~\cite{etessami2010complexity}.
The same phenomenon already appears in small multiplayer Blotto games with rational values and uniform tie-breaking (\cref{ex:isolated-irrational}).
Therefore we work with $\eps$-approximate equilibria, where $\eps$ is part of the input and may be inverse exponential in the input size.

We use the input and output conventions described in the preliminaries and in \cref{sec:membership}: budgets are represented in unary, and the output is given in a compact representation.
The compact representation we use is the standard layered-graph flow representation for discrete Blotto games~\cite{behnezhad2023fast}.
For each player $i$, we build a layered acyclic graph $D_i$.
The layers correspond to battlefields, and an $s_i$--$t_i$ path corresponds to a pure allocation of player $i$'s resources.
A mixed strategy can therefore be represented by the unit flow it induces in this graph.
For a given flow, say $f_i$, the edge variables determine the marginal probabilities, say $p_{ijr}(f_i)$, that player $i$ allocates exactly $r$ resources to battlefield $j$.
Conversely, since $D_i$ is acyclic, every unit flow can be decomposed in polynomial time into a polynomial-support distribution over pure allocations~\cite{ahuja1993network}.

The key point is that these marginals are enough to compute payoffs.
Blotto payoffs are additive over battlefields, and the payoff from one battlefield depends only on how many resources each player allocates to that battlefield.
Under uniform tie-breaking, the expected share of a battlefield can be computed efficiently from the marginals.
Fix a player $i$, a battlefield $j$, and an amount $r$.
If player $i$ allocates $r$ resources to battlefield $j$, then she receives positive payoff only when no opponent allocates more than $r$ resources there.
Conditional on this, her share is $1/(t+1)$ if exactly $t$ opponents also allocate exactly $r$ resources.
A dynamic program over the opponents computes this expected share.
This gives polynomial-size arithmetic circuits for the utilities and, by the Baur--Strassen theorem, for their gradients~\cite{baur1983complexity} (\cref{lm:compact-payoff}).

We can then view the game as a compact concave game.
Player $i$'s strategy space is the unit-flow polytope $\cF_i$ in $D_i$.
For fixed opponents' flows, player $i$'s utility is linear in her own flow variables, and hence concave.
The flow polytopes have polynomial-size linear descriptions and strong separation oracles.
Together with the utility and gradient circuits from \cref{lm:compact-payoff}, this lets us apply the PPAD-membership theorem for concave games with strong separation oracles (\cref{thm:concave-games-so}, \cite{papadimitriou2023computational}).
Thus computing an $\eps$-NE under uniform tie-breaking is in PPAD (\cref{thm:membership-ne}).

The WSNE result follows from the NE result using the standard conversion from approximate NE to approximate WSNE.
We first compute a sufficiently accurate approximate NE in the flow representation and decompose each player's flow into a polynomial-support distribution over pure allocations.
Then we remove actions whose payoff is too far below the player's best-response value, as in the proof of \cref{lm:wsne2ne}.
The best-response value can be computed efficiently in the flow representation.
For fixed opponents' flows, player $i$'s payoff is linear over the unit-flow polytope, so a best response is a maximum-weight path in the acyclic graph $D_i$.
This can be found by dynamic programming.
Therefore the support-cleaning step can be implemented in polynomial time, giving PPAD membership for computing an $\eps$-WSNE (\cref{cor:membership-wsne}).

Finally, uniform tie-breaking is used only in the payoff-computation step.
The rest of the membership proof uses the flow representation, concavity in each player's own variables, and separation for the flow polytopes.
Therefore the same argument applies to non-uniform tie-breaking rules when the relevant expected shares, utilities, and gradients have polynomial-size arithmetic circuits.
It does not automatically apply to arbitrary succinct non-uniform tie-breaking rules, because evaluating a player's expected share of a battlefield may itself be computationally hard; this is illustrated in \cref{app:nonuniform-tiebreaking-hardness}.

\section{Hardness for Computing Approximate Equilibrium}
\label{sec:hardness}

We first focus on uniform tie-breaking.
For each battlefield, all players who allocate the maximum resources have an equal chance of winning the battlefield. Formally, if the players in set $S \subseteq [n]$ tie for battlefield $j$, then each $i \in S$ has a $1/|S|$ probability of winning battlefield $j$ and has an expected payoff of $v_{ij}/|S|$ from this battlefield.

We prove PPAD-hardness for computing a constant-approximate WSNE in Blotto games. The corresponding hardness result for approximate NE, with approximation inverse-linear in the number of players, will then follow from \cref{lm:wsne2ne}.

\begin{theorem}\label{thm:hardness-unif-wsne}
It is PPAD-hard to compute a $\delta$-WSNE in Colonel Blotto games with uniform tie-breaking for sufficiently small constant $\delta > 0$. The result holds even if the players in the Colonel Blotto games have exactly (or at most) three resources.
\end{theorem}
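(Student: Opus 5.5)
We reduce from \PURE{} (\cref{def:pure}), in the normalized form where every variable is the output of exactly one gate and the input of exactly one gate, following the gadget architecture of \cref{sec:overview:hardness}. The first step is to work in a modified game. In it, an empty battlefield pays nobody, and payoffs are unnormalized with constant-size values. For each variable $v$ we create a primary player $\PrimP^v$ with three resources and two bit battlefields $\BitB^v_0,\BitB^v_1$, both of value $1$. For each gate and each input $u$ and bit $s$ we create a group: a detector (one resource, values on $\BitB^u_s$ and the first auxiliary battlefield), a blocker (two resources, values on both auxiliary battlefields), a guard (one resource, value on the second auxiliary battlefield), and pressure players (one resource each, values on the second auxiliary battlefield and on a selected output bit battlefield). \PURIFY{} gates additionally get bias players. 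Every player also gets a few private reserve battlefields of small value $\rho$.

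The pressure players are routed as follows. For \NOT{}, group $(u,s)$ targets $\BitB^v_{1-s}$. For \AND{}, group $(\sigma,s)$ targets $\BitB^w_s$, with more pressure players in the $s=0$ groups than in the $s=1$ groups. For \PURIFY{}, group $(u,0)$ targets $\BitB^v_0$, group $(u,1)$ targets $\BitB^w_1$, and the bias players sit on $\BitB^v_1$ and $\BitB^w_0$. The decoding maps $p^v_1$ to $\val{v}$ piecewise-linearly: it is $0$ below $\alpha$, $1$ above $1-\alpha$, and linear on $[\alpha,1-\alpha]$. Since each bit battlefield is touched by only $O(1)$ one-resource players, every payoff comparison below is local and involves constantly many terms.

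The proof then proceeds through a chain of lemmas, each valid in any $\delta$-WSNE for $\delta$ a sufficiently small constant.
\begin{enumerate}
\item \textbf{Reserve lemma.} No player plays an action using a zero-valued battlefield. Suppose a player did; moving that resource to one of its own private reserves must not gain more than $\delta$. That forces many opponents onto its reserves, which again are zero-valued for those opponents. Summing over all such players, weighted by probability, gives a contradiction.
\item \textbf{Primary lemma.} $\PrimP^v$ plays only $(2,1)$ and $(1,2)$ on $(\BitB^v_0,\BitB^v_1)$. Two resources already beat all one-resource opponents, so a third resource on the same battlefield is wasted. A reserve resource should instead go to a bit battlefield where $\PrimP^v$ has at most one resource, which gains at least a constant share $\gg\rho+\delta$.
\item \textbf{Detector lemma.} If $p^u_s\ge 1-\alpha'$, the detector is active, i.e., it plays the first auxiliary battlefield. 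If $p^u_s\le \alpha''$, it is inactive and stays on $\BitB^u_s$. The thresholds are chosen so that, because $p^u_0+p^u_1=1$, every value of $p^u_1$ triggers at least one of the four cases.
\item \textbf{Blocker, guard, and pressure lemmas.} An active detector makes the blocker stack both resources on the second auxiliary battlefield. This releases the pressure players, and they move to their selected bit battlefield unless $\PrimP$ already puts two resources there with probability near $1$. An inactive detector keeps the blocker split, so the guard and pressure players stay put.
\item \textbf{Output lemmas.} Released pressure, or the bias players when their detector is inactive, make the corresponding pure action strictly better for the output primary player by more than $\delta$. Hence that action gets probability $1$. This contradicts the supposition that the output value was not forced, so the gate constraints hold.
\end{enumerate}
Combining these lemmas gives a valid \PURE{} assignment from any $\delta$-WSNE.

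Finally we handle the three remaining issues. To make every budget exactly three, we give players with fewer resources private battlefields of large value. In any WSNE they must place one resource on each of these, and the leftover resources reproduce the original game. Next we switch back to standard uniform tie-breaking. The two games differ only on empty battlefields, and we can scale those battlefields' total contribution to change every payoff by at most a small constant; the margins then absorb it, provided each player's total value is a constant. Last, we normalize payoffs to $[0,1]$ by dividing by a constant, which turns the margin into a constant $\delta$.

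The main obstacle is choosing constants consistently: $\rho$, the thresholds $\alpha,\alpha',\alpha''$, the numbers of pressure and bias players, and the values. Every comparison must hold with a margin exceeding $\delta$ uniformly over the unconstrained behavior of the other players sharing a battlefield. This is hardest for the \AND{} gate, where the pressure from the opposing input must not overturn the forced output, and for the \PURIFY{} thresholds. I would first set all one-resource players' values as powers of a few fixed constants, tabulate the worst-case shares on each bit battlefield, and solve the resulting finite list of linear inequalities.
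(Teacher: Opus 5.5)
Your plan follows the paper's construction and lemma chain closely, but step 4 asserts something false. In doing so it skips the one place where the gadget needs a genuinely quantitative argument.

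When the detector $\DeP^u_s$ is active, the blocker $\BlP^u_s$ can \emph{never} stack on $\AuxB^u_{s,2}$ with probability $1$ in an approximate WSNE. If it did, the guard and the pressure players would earn $0$ on $\AuxB^u_{s,2}$ and leave for their reserves. The split $(1,1)$ would then pay the blocker $1/2+\eta_{\BlP}$, strictly more than the $\eta_{\BlP}$ it gets from stacking. So the blocker must mix, and you have to show that the residual splitting probability $1-\beta$ is small enough that the pressure players still strictly prefer to abandon $\AuxB^u_{s,2}$. This is what the guard is for, and your proposal never uses it.

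The paper (\cref{lm:blocker}) shows $\beta \ge 1-6(\rho_{\GuP}+\eps)$. Otherwise the guard earns more than $\rho_{\GuP}+\eps$ on $\AuxB^u_{s,2}$ and so sits there with probability $1$; stacking then beats splitting by $\eta_{\BlP}/2-1/2>\eps$. A pressure player therefore earns at most $(1-\beta)/2\le 3(\rho_{\GuP}+\eps)$ on $\AuxB^u_{s,2}$. This is below its reserve value $\rho$ only because the guard's reserve value $\rho_{\GuP}$ is chosen several times smaller than $\rho$ (\cref{lm:pressure}).

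In your construction every player, guards included, has the same reserve value $\rho$ (and guards and pressure players value $\AuxB^u_{s,2}$ alike). The same argument then yields only the bound $3(\rho+\eps)>\rho$, so the release of the pressure players does not follow. You need either a much smaller reserve-to-value ratio for guards than for pressure players, as in the paper, or a different argument that explicitly handles the mixed blocker--guard behavior.

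Two smaller steps also need an argument you did not give.
\begin{itemize}
\item In the primary lemma, moving a reserve resource onto a bit battlefield where $\PrimP^v$ already has one resource gains nothing unless that battlefield is contested. The paper secures this by observing that on some $\BitB^v_s$ the primary has at most one resource with probability at least $1/2$. Hence $\DeP^v_s$ earns at least $1/16>\eta_{\DeP}+\eps$ there and occupies it with probability $1$.
\item When returning to standard tie-breaking, the empty-battlefield payoffs are not something you can scale: each is $v_{ij}/n$. The perturbation is therefore at most $C/n$, which is below the margin only when $n\ge 4C/\eps$. The paper enforces this by padding the \PURE{} instance with disconnected dummy components.
\end{itemize}
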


\begin{proof}
We do a reduction from the \PURE{} problem (\cref{def:pure}). We are given an instance of \PURE{}. We use the same notation as \cref{def:pure}. Let $V$ denote the set of nodes/variables and $\val{v}$ the value of $v \in V$.
The variables are related through the \NOT{}, \AND{}, and \PURIFY{} gates. Every variable is the input to and the output of exactly one gate.
Throughout the proof, we will use several constants. These are listed in \cref{tab:reduc-params}. Their use will be explained when they are introduced in the proof.

\begin{table}[htbp]
    \centering
    \begin{tabular}{c|c|c}
        parameter & value & description \\
        \hline
        $\eps$ & $10^{-5}$ & unnormalized approximation parameter for $\eps$-WSNE \\
        $\alpha$ & $1/50$ & truncation threshold parameter \\
        $\rho_{\GuP}$ & $10^{-4}$ & reserve battlefield value for guards \\
        $\rho$ & $1/1400$ & reserve battlefield value for all other players \\
        $\eta_{\DeP}$ & $1/20$ & detector's value for the first auxiliary battlefield \\
        $\eta_{\BlP}$ & $11/10$ & blocker's value for the second auxiliary battlefield \\
        $\eta_{\PrP}$ & $3/10$ & pressure player's value for the output-side bit battlefield \\
        $k_{\NOT}$ & 2 & number of pressure players on each side of \NOT{} \\
        $k_{\AND,0}$ & 3 & number of pressure players on the $0$-side of each input of \AND{} \\
        $k_{\AND,1}$ & 1 & number of pressure players on the $1$-side of each input of \AND{} \\
        $k_{\PURIFY}$ & 4 & number of pressure players on each side of \PURIFY{} \\
        $k_{\BIAS}$ & 2 & number of bias players on each side of \PURIFY{} \\
        $R$ & 4 & number of reserve battlefields per player \\
    \end{tabular}
    \caption{Parameters used in the reduction}
    \label{tab:reduc-params}
\end{table}

Given the instance of the \PURE{} problem, we construct an instance of the Blotto game that encodes a solution to the \PURE{} instance in every $\eps$-WSNE of the game, for sufficiently small constant $\eps > 0$.
For the analysis below, we first consider a slightly modified tie-breaking convention: if all players allocate zero resources to a battlefield, then every player receives payoff zero from that battlefield; otherwise, ties among the highest allocators are broken uniformly.
For now, we also use unnormalized payoffs in the construction.
These two assumptions make the analysis cleaner.
Let the unnormalized approximation parameter be $\eps = 10^{-5}$.
In the construction, we ensure that every player’s total payoff is bounded by $3$, so normalization only changes the approximation parameter by a constant factor.
At the end of the proof, we explain how to transfer the result back to the usual uniform tie-breaking convention, where an empty battlefield is split uniformly among all players.

Our constructed game will have $\Theta(|V|)$ players and battlefields.
The players will be of two types: {primary} and {auxiliary}; the auxiliary players will further have five roles: detectors, blockers, guards, pressure players, and bias players.
The battlefields will be of three types: {bit}, {auxiliary}, and {reserve}.
There will be exactly $|V|$ primary players and $2|V|$ bit battlefields, and the interaction of the primary players with the bit battlefields will encode the solution of the \PURE{} instance.
There will be $\Theta(|V|)$ auxiliary players and $4|V|$ auxiliary battlefields, and the interaction of the auxiliary players with the bit and auxiliary battlefields will help implement the logic of the \NOT{}, \AND{}, and \PURIFY{} gates. Finally, $\Theta(|V|)$ reserve battlefields give outside options to players and ensure that no player unnecessarily interacts with battlefields for which it has a zero value, which makes the proof cleaner.

We now describe the primary players and the bit battlefields.
For each node $v \in V$ of the \PURE{} instance, we have a primary player named $\PrimP^v$. Corresponding to node $v \in V$, there are also two bit battlefields named $\BitB^v_0$ and $\BitB^v_1$.
The player $\PrimP^v$ has $3$ resources and has a value of $1$ for each of the two battlefields $\BitB^v_0$ and $\BitB^v_1$. The player $\PrimP^v$ also has a value of $\rho$ for a few specific reserve battlefields; we will discuss this in more detail when we describe the reserve battlefields. The player $\PrimP^v$ has zero value for all other battlefields, which includes all auxiliary battlefields and the bit battlefields of the form $\BitB^{u}_s$ for $u \in V \setminus \{v\}$ and $s \in \{0,1\}$.

We now describe the auxiliary players and auxiliary battlefields, first at a high level to provide an overview of the overall construction, then formally.
There are four auxiliary battlefields per input variable of each gate; hence, there are four auxiliary battlefields for each \NOT{} and \PURIFY{} gate, and eight auxiliary battlefields for each \AND{} gate.
Further, for each input variable $u$ of a gate, two of the four auxiliary battlefields help implement the $\val{u}=0$ case, and two help implement the $\val{u}=1$ case.

The auxiliary players have five roles: detectors $\DeP$, blockers $\BlP$, guards $\GuP$, pressure players $\PrP$, and bias players $\BiP$. The names roughly correspond to their behaviors.
Each auxiliary player is associated with exactly one gate; and within that gate, it is associated with exactly one input variable, say $u$; and for that input variable $u$, it is associated with exactly one of the cases among $\val{u}=0$ and $\val{u}=1$.
Detectors, blockers, guards, and pressure players are present in each of the \NOT{}, \AND{}, and \PURIFY{} gates. Bias players are used only in the \PURIFY{} gate. The behavior of each role is largely similar across the three types of gates. We next formally describe the auxiliary players and battlefields for each of the three gate types.
The values of the players are also summarized in \cref{tab:cb-players} for easier reference. The values for the reserve battlefields will be described later in the proof. All unspecified values for bit and auxiliary battlefields are zero.

\medskip
\noindent
$v = \NOT(u)$.
We have already introduced the primary players $\PrimP^u$ and $\PrimP^v$ and the bit battlefields $\BitB^u_0$, $\BitB^u_1$, $\BitB^v_0$, and $\BitB^v_1$ that interact with this \NOT{} gate.

We have $2 \times 2 = 4$ auxiliary battlefields named $\AuxB^{u}_{s,t}$ for $s \in \{0,1\}$ and $t \in [2]$, where the $\AuxB^{u}_{0,t}$ battlefields help implement the logic for the $\val{u} = 0$ case and $\AuxB^{u}_{1,t}$ for the $\val{u} = 1$ case.

We have $2 (1 + 1 + 1 + k_{\NOT})$ auxiliary players: for each side $s \in \{0,1\}$, one detector $\DeP^u_s$, one blocker $\BlP^u_s$, one guard $\GuP^u_s$, and $k_{\NOT}$ identical pressure players of class $\PrP^u_s$.
The auxiliary players on side $s=0$ help implement the $\val{u}=0$ case, and those on side $s=1$ help implement the $\val{u}=1$ case.
We next specify the resources and values of all the auxiliary players in the gate, except for their values for the reserve battlefields, which we will do later. Let $s \in \{0,1\}$; the construction is symmetric across $s$.
\begin{itemize}
    \item Detector $\DeP^u_s$. It has $1$ resource and has a value of $1$ for the bit battlefield $\BitB^u_s$ and a value of $\eta_{\DeP}$ for the auxiliary battlefield $\AuxB^{u}_{s,1}$.

    \item Blocker $\BlP^u_s$. It has $2$ resources and has a value of $1$ for the auxiliary battlefield $\AuxB^{u}_{s,1}$ and a value of $\eta_{\BlP}$ for the auxiliary battlefield $\AuxB^{u}_{s,2}$.

    \item Guard $\GuP^u_s$. It has $1$ resource and has a value of $1$ for the auxiliary battlefield $\AuxB^{u}_{s,2}$.

    \item Pressure players $\PrP^u_s$; there are $k_{\NOT}$ of them. Each player has $1$ resource and has a value of $1$ for the auxiliary battlefield $\AuxB^{u}_{s,2}$ and a value of $\eta_{\PrP}$ for the bit battlefield $\BitB^v_{1-s}$.
\end{itemize}

\medskip
\noindent
$w = \AND(u,v)$.
The high-level ideas in the construction are similar to those of the \NOT{} gate. We have two inputs $u$ and $v$, so we make two copies of the auxiliary construction in the \NOT{} gate, but connect them to the same output $w$. Additionally, as the \AND{} gate outputs $0$ even if only one of the inputs is $0$ (and the other can be in $[0,1]$), we adjust the number of pressure players on the $0$ side and the $1$ side of the inputs to accommodate this bias towards $0$ output. The formal details are below.

We have already introduced the primary players $\PrimP^\sigma$ and the bit battlefields $\BitB^\sigma_s$ for $\sigma \in \{u,v,w\}$ and $s \in \{0,1\}$ that interact with this \AND{} gate.
We have $2 \times 2 \times 2 = 8$ auxiliary battlefields named $\AuxB^{\sigma}_{s,t}$ for $\sigma \in \{u,v\}$, $s \in \{0,1\}$, and $t \in [2]$.
For each input $\sigma \in \{u,v\}$ and side $s \in \{0,1\}$, the battlefields $\AuxB^\sigma_{s,1}$ and $\AuxB^\sigma_{s,2}$ help implement the $\val{\sigma}=s$ case.

We have $4(1+1+1) + 2(k_{\AND,0}+k_{\AND,1})$ auxiliary players: for each $\sigma \in \{u,v\}$ and $s \in \{0,1\}$, one detector $\DeP^\sigma_s$, one blocker $\BlP^\sigma_s$, one guard $\GuP^\sigma_s$, and pressure players of class $\PrP^\sigma_s$.
There are $k_{\AND,0}$ pressure players of class $\PrP^\sigma_0$ and $k_{\AND,1}$ pressure players of class $\PrP^\sigma_1$, for each $\sigma \in \{u,v\}$.
We next formally specify the resources and values of all the auxiliary players in the gate, except for their values for the reserve battlefields. Let $\sigma \in \{u,v\}$ and $s \in \{0,1\}$; the construction is symmetric across $\sigma$ and $s$, except for the number of pressure players.
\begin{itemize}
    \item Detector $\DeP^\sigma_s$. It has $1$ resource and has a value of $1$ for the bit battlefield $\BitB^\sigma_s$ and a value of $\eta_{\DeP}$ for the auxiliary battlefield $\AuxB^\sigma_{s,1}$.

    \item Blocker $\BlP^\sigma_s$. It has $2$ resources and has a value of $1$ for the auxiliary battlefield $\AuxB^\sigma_{s,1}$ and a value of $\eta_{\BlP}$ for the auxiliary battlefield $\AuxB^\sigma_{s,2}$.

    \item Guard $\GuP^\sigma_s$. It has $1$ resource and has a value of $1$ for the auxiliary battlefield $\AuxB^\sigma_{s,2}$.

    \item Pressure players $\PrP^\sigma_s$; there are $k_{\AND,0}$ of them if $s = 0$ and $k_{\AND,1}$ of them if $s = 1$, for each $\sigma$. Each player has $1$ resource and has a value of $1$ for the auxiliary battlefield $\AuxB^\sigma_{s,2}$ and a value of $\eta_{\PrP}$ for the bit battlefield $\BitB^w_s$.
\end{itemize}

\medskip
\noindent
$(v, w) = \PURIFY(u)$.
The overall construction is again similar to the \NOT{} gate.
We have only one input $u$ like the \NOT{} gate, but we now have two outputs $v$ and $w$. The design of the auxiliary battlefields, detectors, blockers, and guards is identical to the \NOT{} gate. The differences are in the count and behavior of the pressure players and the introduction of bias players.
At a high level, the pressure players associated with the $\val{u}=0$ side influence only the $\val{v}=0$ side, but do not affect the $\val{w}=0$ side of the construction. Similarly, the pressure players associated with the $\val{u}=1$ side influence only the $\val{w}=1$ side, but not the $\val{v}=1$ side. The bias players interact with the remaining $\val{w}=0$ and $\val{v}=1$ sides of the construction. We next specify the formal details.

We have already introduced the primary players $\PrimP^\sigma$ and the bit battlefields $\BitB^\sigma_s$ for $\sigma \in \{u,v,w\}$ and $s \in \{0,1\}$ that interact with this \PURIFY{} gate.
We have $2 \times 2 = 4$ auxiliary battlefields named $\AuxB^u_{s,t}$ for $s \in \{0,1\}$ and $t \in [2]$.
For each side $s \in \{0,1\}$, the battlefields $\AuxB^u_{s,1}$ and $\AuxB^u_{s,2}$ help implement the $\val{u}=s$ case.

We have $2(1+1+1+k_{\PURIFY}+k_{\BIAS})$ auxiliary players: for each side $s \in \{0,1\}$, one detector $\DeP^u_s$, one blocker $\BlP^u_s$, one guard $\GuP^u_s$, $k_{\PURIFY}$ pressure players of class $\PrP^u_s$, and $k_{\BIAS}$ bias players of class $\BiP^u_s$.
Half of the players of each role help implement the $\val{u}=0$ case and half the $\val{u}=1$ case.
We next formally specify the resources and values of all the auxiliary players in the gate, except for their values for the reserve battlefields. Let $s \in \{0,1\}$; the construction is symmetric across $s$ for the detectors, blockers, and guards.
\begin{itemize}
    \item Detector $\DeP^u_s$, blocker $\BlP^u_s$, and guard $\GuP^u_s$ have exactly the same properties as in the \NOT{} gate described earlier.

    \item Pressure players $\PrP^u_s$; there are $k_{\PURIFY}$ of them. Each player has $1$ resource and has a value of $1$ for the auxiliary battlefield $\AuxB^u_{s,2}$. In addition, if $s=0$, it has a value of $\eta_{\PrP}$ for the bit battlefield $\BitB^v_0$; if $s=1$, it has a value of $\eta_{\PrP}$ for the bit battlefield $\BitB^w_1$.

    \item Bias players $\BiP^u_s$; there are $k_{\BIAS}$ of them. Each player has $1$ resource and has a value of $1$ for the bit battlefield $\BitB^w_0$ if $s = 0$, and a value of $1$ for the bit battlefield $\BitB^v_1$ if $s = 1$.
\end{itemize}
Remember that the auxiliary players of all three gates have zero values for bit and auxiliary battlefields for which the values have not been explicitly specified above.

\begin{table}[htbp]
    \small
    \centering
    \begin{tabular}{| c || c || c || c | c |}
        \hline
        \multicolumn{5}{| c |}{primary players; \quad $v \in V$} \\
        \hline
         & count & resources & $\BitB^v_0$ & $\BitB^v_1$ \\
        \hline
        $\PrimP^v$ & 1 & 3 & 1 & 1 \\
        \hline
    \end{tabular}
    \medskip

    \begin{tabular}{| c || c || c || c | c | c | c |}
        \hline
        \multicolumn{7}{|c|}{auxiliary players for $v = \NOT(u)$; \quad $s \in \{0,1\}$} \\
        \hline
         & count & resources & $\BitB^u_s$ & $\AuxB^{u}_{s,1}$ & $\AuxB^{u}_{s,2}$ & $\BitB^v_{1-s}$ \\
        \hline
        $\DeP^u_s$ & 1 & 1 & 1 & $\eta_{\DeP}$ & 0 & 0 \\
        $\BlP^u_s$ & 1 & 2 & 0 & 1 & $\eta_{\BlP}$ & 0 \\
        $\GuP^u_s$ & 1 & 1 & 0 & 0 & 1 & 0 \\
        $\PrP^u_s$ & $k_{\NOT}$ & 1 & 0 & 0 & 1 & $\eta_{\PrP}$ \\
        \hline
    \end{tabular}
    \medskip

    \begin{tabular}{| c || c || c || c | c | c | c | c | c | c | c |}
        \hline
        \multicolumn{11}{|c|}{auxiliary players for $w = \AND(u, v)$; \quad $\sigma \in \{ u, v \}$} \\
        \hline
         & count & resources
         & $\BitB^{\sigma}_0$ & $\AuxB^{\sigma}_{0,1}$ & $\AuxB^{\sigma}_{0,2}$ & $\BitB^w_0$
         & $\BitB^{\sigma}_1$ & $\AuxB^{\sigma}_{1,1}$ & $\AuxB^{\sigma}_{1,2}$ & $\BitB^w_1$ \\
        \hline
        $\DeP^\sigma_0$ & 1 & 1
         & 1 & $\eta_{\DeP}$ & 0 & 0 & 0 & 0 & 0 & 0 \\
        $\BlP^\sigma_0$ & 1 & 2
         & 0 & 1 & $\eta_{\BlP}$ & 0 & 0 & 0 & 0 & 0 \\
        $\GuP^\sigma_0$ & 1 & 1
         & 0 & 0 & 1 & 0 & 0 & 0 & 0 & 0 \\
        $\PrP^\sigma_0$ & $k_{\AND,0}$ & 1
         & 0 & 0 & 1 & $\eta_{\PrP}$ & 0 & 0 & 0 & 0 \\
        $\DeP^\sigma_1$ & 1 & 1
         & 0 & 0 & 0 & 0 & 1 & $\eta_{\DeP}$ & 0 & 0 \\
        $\BlP^\sigma_1$ & 1 & 2
         & 0 & 0 & 0 & 0 & 0 & 1 & $\eta_{\BlP}$ & 0 \\
        $\GuP^\sigma_1$ & 1 & 1
         & 0 & 0 & 0 & 0 & 0 & 0 & 1 & 0 \\
        $\PrP^\sigma_1$ & $k_{\AND,1}$ & 1
         & 0 & 0 & 0 & 0 & 0 & 0 & 1 & $\eta_{\PrP}$ \\
        \hline
    \end{tabular}
    \medskip

    \begin{tabular}{| c || c || c || c | c | c | c | c | c | c | c | c | c |}
        \hline
        \multicolumn{13}{|c|}{auxiliary players for $(v, w) = \PURIFY(u)$} \\
        \hline
         & count & resources
         & $\BitB^u_0$ & $\AuxB^{u}_{0,1}$ & $\AuxB^{u}_{0,2}$
         & $\BitB^u_1$ & $\AuxB^{u}_{1,1}$ & $\AuxB^{u}_{1,2}$
         & $\BitB^v_0$ & $\BitB^v_1$ & $\BitB^w_0$ & $\BitB^w_1$ \\
        \hline
        $\DeP^u_0$ & 1 & 1
         & 1 & $\eta_{\DeP}$ & 0 & 0 & 0 & 0 & 0 & 0 & 0 & 0 \\
        $\BlP^u_0$ & 1 & 2
         & 0 & 1 & $\eta_{\BlP}$ & 0 & 0 & 0 & 0 & 0 & 0 & 0 \\
        $\GuP^u_0$ & 1 & 1
         & 0 & 0 & 1 & 0 & 0 & 0 & 0 & 0 & 0 & 0 \\
        $\DeP^u_1$ & 1 & 1
         & 0 & 0 & 0 & 1 & $\eta_{\DeP}$ & 0 & 0 & 0 & 0 & 0 \\
        $\BlP^u_1$ & 1 & 2
         & 0 & 0 & 0 & 0 & 1 & $\eta_{\BlP}$ & 0 & 0 & 0 & 0 \\
        $\GuP^u_1$ & 1 & 1
         & 0 & 0 & 0 & 0 & 0 & 1 & 0 & 0 & 0 & 0 \\
        $\PrP^u_0$ & $k_{\PURIFY}$ & 1
         & 0 & 0 & 1 & 0 & 0 & 0 & $\eta_{\PrP}$ & 0 & 0 & 0 \\
        $\PrP^u_1$ & $k_{\PURIFY}$ & 1
         & 0 & 0 & 0 & 0 & 0 & 1 & 0 & 0 & 0 & $\eta_{\PrP}$ \\
        $\BiP^u_0$ & $k_{\BIAS}$ & 1
         & 0 & 0 & 0 & 0 & 0 & 0 & 0 & 0 & 1 & 0 \\
        $\BiP^u_1$ & $k_{\BIAS}$ & 1
         & 0 & 0 & 0 & 0 & 0 & 0 & 0 & 1 & 0 & 0 \\
        \hline
    \end{tabular}
    \caption{Player counts, resources, and values for bit and auxiliary battlefields. Values for reserve battlefields are described separately.}
    \label{tab:cb-players}
\end{table}

\medskip
We now describe the reserve battlefields. We have already added all the players in the game; we only need to specify the values of these players for the reserve battlefields.
Let $n$ be the total number of players in our construction.
Notice that we have $|V|$ primary players corresponding to the $|V|$ variables of the \PURE{} instance. Further, as every variable is the input to exactly one gate, there are at most $|V|$ gates. For each gate, we added a constant number of auxiliary players. Therefore, the total number of players is $n = \Theta(|V|)$.
For each player $i \in [n]$, we add a set $\cR_i$ of reserve battlefields, where $|\cR_i| = R = 4$. So, we add a total of $R n = \Theta(|V|)$ reserve battlefields.
If a player $i$ is not a guard, then it has value $\rho$ for each battlefield in $\cR_i$ and value $0$ for all other reserve battlefields. If player $i$ is a guard, then it has value $\rho_{\GuP}$ for each battlefield in $\cR_i$ and value $0$ for all other reserve battlefields.

\medskip
This completes our construction of the Blotto game based on the \PURE{} instance. We now describe how to map an $\eps$-WSNE of the Blotto game to a solution of the \PURE{} instance; correctness will be proven later.
Fix an $\eps$-WSNE of this game. Let $p^v_1$ be the probability that the primary player $\PrimP^v$ for $v \in V$ puts at least two resources on the bit battlefield $\BitB^v_1$. We set the solution $\valonly$ of the \PURE{} instance as
\begin{equation}\label{eq:mapping}
    \val{v} = \frac{ \min(1-\alpha, \max(\alpha, p^v_1)) - \alpha }{1 - 2\alpha}.
\end{equation}

\paragraph{Analysis.}
We show that the Blotto game we constructed encodes a solution to the \PURE{} instance in every $\eps$-WSNE.
Assume that the players are playing an $\eps$-WSNE.
First, we show that no player allocates resources to battlefields for which they have zero value.

\begin{lemma}\label{lm:reserve}
No player puts positive probability on any action that allocates resources to battlefields with zero value.
\end{lemma}
\begin{proof}
We will make use of the reserve battlefields to prove this claim.
For the sake of contradiction, assume there are players who assign positive probability to pure actions that allocate resources to zero-valued battlefields, and let the set of such players be $S$.
Fix an $i \in S$. Let $\ba_i$ be a supported pure action of player $i$ that allocates a resource to a zero-valued battlefield, say $j$.

By construction, player $i$ has the set $\cR_i$ of $R = 4$ reserve battlefields for which it has a positive value of at least $\rho_{\GuP}$. Since player $i$ has zero value for battlefield $j$, we get $j \notin \cR_i$. Further, as every player has at most $3$ resources, and as player $i$ has allocated one of those resources to battlefield $j$ in the pure action $\ba_i$, there must be at least $R - (3 - 1) = 2$ reserve battlefields in $\cR_i$ without any resources allocated by player $i$ in action $\ba_i$.

Let $q_i$ be the expected number of opponent resources placed on battlefields in $\cR_i$.
As argued above, at least two battlefields in $\cR_i$ do not get any resource in action $\ba_i$.
Among these two battlefields, at least one of them, say $j'$, gets expected number of opponent resources at most $q_i/2$.
Consider the deviation from $\ba_i$ that moves the resource from zero-valued battlefield $j$ to reserve battlefield $j'$.
Let $X$ be the random variable denoting the number of opponent resources on $j'$. If $X \le 1$, then after the deviation, player $i$ obtains a payoff of at least $\rho_{\GuP}/2$ from $j'$. Since $\Exp[X] \le q_i/2$, Markov's inequality gives
\[
    \Prob[X \ge 2] \le \frac{\Exp[X]}{2} \le \frac{q_i}{4} \implies \Prob[X \le 1] \ge 1 - \frac{q_i}{4}.
\]
Therefore, the deviation gains at least
\[
    \frac{\rho_{\GuP}}{2} \left( 1 - \frac{q_i}{4} \right),
\]
which must be at most $\eps$. Hence
\[
    \frac{\rho_{\GuP}}{2} \left( 1 - \frac{q_i}{4} \right) \le \eps
    \iff q_i \ge 4 \left( 1 - \frac{2 \eps}{\rho_{\GuP}} \right)
    = 4 (1 - 2 \cdot 10^{-5} \cdot 10^4)
    = 4 (1 - 0.2) = 3.2,
\]
where we substituted $\eps = 10^{-5}$ and $\rho_{\GuP} = 10^{-4}$. Thus, every player $i \in S$ receives strictly more than $3$ expected opponent resources on its reserve battlefields $\cR_i$. As the reserve battlefields for which each player has positive values are disjoint, i.e., $\cR_i \cap \cR_{i'} = \emptyset$ for $i \neq i'$, summing over $i \in S$, we get
\[
    \sum_{i \in S} q_i > 3 |S|.
\]
On the other hand, every resource counted in $\sum_{i \in S} q_i$ is placed by some opponent on another player's positive-valued reserve battlefields. Such a battlefield has value $0$ for the player placing the resource. Therefore, every player contributing resources to this sum also belongs to $S$.
Since every player has at most three resources, the total expected number of resources that players in $S$ can place on other players' positive-valued reserve battlefields is at most $3|S|$, which gives us a contradiction.
\end{proof}

Given \cref{lm:reserve}, let's upper bound the maximum number of auxiliary players that can allocate resources to any bit battlefield $\BitB^u_s$ for $u \in V$ and $s \in \{0,1\}$. We claim it is $\Delta = 7$.
First, notice that blockers and guards have zero value for all bit battlefields.
For detectors, there is exactly one player $\DeP^u_s$ that has positive value for $\BitB^u_s$.
For pressure players, a maximum of six players can have a positive value for $\BitB^u_s$, which happens if $u$ is the output of an \AND{} gate and $s = 0$.
Finally, for bias players, a maximum of two players can have a positive value for $\BitB^u_s$, which can happen if $u$ is the output of a \PURIFY{} gate, but in that case there will be no pressure players who value $\BitB^u_s$. Overall, the upper bound of $\Delta = 7$ holds.

For $u \in V$ and $s \in \{0,1\}$, let $p^u_s$ be the probability that the primary player $\PrimP^u$ puts at least two resources on the bit battlefield $\BitB^u_s$. As player $\PrimP^u$ has three resources, we have
\[
    p^u_0 + p^u_1 \le 1, \text{ for all $u \in V$. }
\]
The next lemma proves that in fact $p^u_0 + p^u_1 = 1$.
\begin{lemma}\label{lm:primary}
For any $u \in V$, the primary player $\PrimP^u$ plays only the following two actions with positive probability: allocate resources $(2, 1)$ or $(1, 2)$ to the bit battlefields $(\BitB^u_0, \BitB^u_1)$.
\end{lemma}
\begin{proof}
From \cref{lm:reserve}, we know that player $\PrimP^u$ allocates resources only to $\BitB^u_0$, $\BitB^u_1$, or its positive-valued reserve battlefields.

First, we argue that $\PrimP^u$ will not allocate three resources to $\BitB^u_0$ or $\BitB^u_1$ with positive probability. For contradiction, say $\PrimP^u$ allocates three resources to $\BitB^u_s$ with positive probability. For this action, it gets a payoff of $1$ from $\BitB^u_s$ but $0$ from $\BitB^u_{1-s}$. However, it can still get a payoff of $1$ from $\BitB^u_s$ by putting just two resources, as every other player that has a positive value for $\BitB^u_s$ has only one resource. On the other hand, by allocating the third resource to $\BitB^u_{1-s}$, it gets an additional payoff of at least $1/(1 + \Delta) = 1/8 > \eps$, where $\Delta = 7$ is the maximum number of auxiliary players that have a positive value for $\BitB^u_{1-s}$, as argued earlier.

Second, we claim that $\PrimP^u$ will not allocate any resources to reserve battlefields with positive probability. For contradiction, say it does so in pure action $a$.
Notice that in the overall mixed strategy, as $\PrimP^u$ has only three resources, with probability at least $1/2$, $\PrimP^u$ allocates at most one resource to either $\BitB^u_0$ or $\BitB^u_1$; say $\BitB^u_s$ is that battlefield.

We claim that the detector $\DeP^u_s$ must allocate its resource to $\BitB^u_s$ with probability $1$. With probability at least $1/2$, $\DeP^u_s$ has to compete with at most one resource from player $\PrimP^u$ and at most one resource each from at most $\Delta - 1$ auxiliary players. So, $\DeP^u_s$ gets an expected payoff of at least $1/(2(1+\Delta)) = 1/16$, which is more than $\eps$ greater than its value of $\eta_{\DeP} = 1/20$ for the auxiliary battlefield $\AuxB^u_{s,1}$.

Let's now focus again on the primary player and the action $a$. It allocates at most two resources to $\BitB^u_0$ and $\BitB^u_1$ in pure action $a$.
We have three cases.
If $\BitB^u_s$ gets zero resources in $a$, then moving the resource allocated to the reserve battlefield to $\BitB^u_s$ increases the payoff by at least $1/(1+\Delta) - \rho > \eps$.
Similarly, if $\BitB^u_s$ gets only one resource in $a$, then moving the resource allocated to the reserve battlefield to $\BitB^u_s$ increases the payoff by at least $(1 - 1/2) - \rho > \eps$.
On the other hand, if $\BitB^u_s$ gets two resources in $a$, then $\BitB^u_{1-s}$ gets zero resources in $a$; moving the resource allocated to the reserve battlefield to $\BitB^u_{1-s}$ increases the payoff by at least $1/(1+\Delta) - \rho > \eps$.
\end{proof}

\cref{lm:primary} tells us that player $\PrimP^u$ does resource allocation $(2,1)$ and $(1,2)$ to battlefields $(\BitB^u_0, \BitB^u_1)$ with probabilities $p^u_0$ and $p^u_1 = 1 - p^u_0$, respectively.
Combining this with \eqref{eq:mapping}, we know that if $p^u_1 = 1 - p^u_0 \ge 1 - \alpha$, then $\val{u} = 1$, and if $p^u_0 = 1 - p^u_1 \ge 1 - \alpha$, then $\val{u} = 0$.
We now analyze the behavior of the detectors given the actions of the primary players.

\begin{lemma}\label{lm:detector}
For any $u \in V$ and $s \in \{0,1\}$, the detector $\DeP^u_s$ plays the following action depending upon $p^u_s$:
\begin{itemize}
    \item if $p^u_s \le 11/20$, then $\DeP^u_s$ puts its only resource on the bit battlefield $\BitB^u_s$ with probability $1$;
    \item if $p^u_s \ge 20/21$, then $\DeP^u_s$ puts its only resource on the auxiliary battlefield $\AuxB^u_{s,1}$ with probability $1$.
\end{itemize}
\end{lemma}
\begin{proof}
Fix $p^u_s$.
Let's compute bounds on the maximum and minimum payoffs that player $\DeP^u_s$ gets for choosing the bit battlefield $\BitB^u_s$.
Notice that with probability $p^u_s$ the primary player $\PrimP^u$ allocates $2$ resources to $\BitB^u_s$, and in this case $\DeP^u_s$ gets a payoff of $0$ for playing $\BitB^u_s$.
On the other hand, with probability $1 - p^u_s$, player $\PrimP^u$ allocates $1$ resource to $\BitB^u_s$.
In the best-case scenario, no other auxiliary player allocates any resources to $\BitB^u_s$, and the detector $\DeP^u_s$ gets an expected payoff of $(1 - p^u_s)/2$.
In the worst-case scenario, all auxiliary players who have positive value for $\BitB^u_s$ allocate resources to it, and the detector $\DeP^u_s$ gets an expected payoff of $(1 - p^u_s)/(1+\Delta) = (1 - p^u_s)/8$.

Let us now focus on the auxiliary battlefield $\AuxB^u_{s,1}$. Only the detector $\DeP^u_s$ and blocker $\BlP^u_s$ have positive value for this battlefield.
Although the blocker $\BlP^u_s$ has two resources, we claim that it never allocates both of them to $\AuxB^u_{s,1}$, because allocating both resources to $\AuxB^u_{s,1}$ gives it payoff $1$, while the alternative action of allocating both resources to $\AuxB^u_{s,2}$ gives it payoff $\eta_{\BlP} = 11/10 > 1 + \eps$.
So, detector $\DeP^u_s$ competes with either zero or one resource on battlefield $\AuxB^u_{s,1}$ and gets a payoff of either $\eta_{\DeP} = 1/20$ or $\eta_{\DeP}/2 = 1/40$, respectively.

Player $\DeP^u_s$ must play $\BitB^u_s$ with probability $1$ if the minimum possible payoff of $\BitB^u_s$ is $\eps$ more than the maximum possible payoff of $\AuxB^u_{s,1}$, i.e.,
\[
    \frac{1 - p^u_s}{1 + \Delta} > \eta_{\DeP} + \eps
    \iff \frac{1 - p^u_s}{8} > \frac{1}{20} + 10^{-5}
    \iff p^u_s < 1 - \frac{8}{20} - 8 \cdot 10^{-5}
    \impliedby p^u_s \le \frac{11}{20}.
\]
On the other hand, player $\DeP^u_s$ must play $\AuxB^u_{s,1}$ with probability $1$ if the maximum possible payoff of $\BitB^u_s$ is $\eps$ less than the minimum possible payoff of $\AuxB^u_{s,1}$, i.e.,
\[
    \frac{1 - p^u_s}{2} + \eps < \frac{\eta_{\DeP}}{2}
    \iff \frac{1 - p^u_s}{2} + 10^{-5} < \frac{1}{40}
    \iff p^u_s > 1 - \frac{1}{20} + 2 \cdot 10^{-5}
    \impliedby p^u_s \ge \frac{20}{21}. \qedhere
\]
\end{proof}

We now analyze the behavior of blockers and guards given the action of the detectors.

\begin{lemma}\label{lm:blocker}
For any $u \in V$ and $s \in \{0,1\}$,
the blocker $\BlP^u_s$ randomizes only over the two resource allocations $(1,1)$ and $(0,2)$ over the auxiliary battlefields $(\AuxB^u_{s,1}, \AuxB^u_{s,2})$.
Further, the blocker $\BlP^u_s$ and guard $\GuP^u_s$ play the following actions depending upon the action of the detector $\DeP^u_s$:
\begin{itemize}
    \item if detector $\DeP^u_s$ puts its only resource on the bit battlefield $\BitB^u_s$ with probability $1$,
    then the blocker $\BlP^u_s$ allocates one resource each to $\AuxB^u_{s,1}$ and $\AuxB^u_{s,2}$ with probability $1$,
    and the guard $\GuP^u_s$ allocates its only resource to $\AuxB^u_{s,2}$ with probability $1$;
    \item if detector $\DeP^u_s$ puts its only resource on the auxiliary battlefield $\AuxB^u_{s,1}$ with probability $1$,
    then the blocker $\BlP^u_s$ allocates both resources to $\AuxB^u_{s,2}$ with probability $\beta \ge 1 - 6 (\rho_{\GuP} + \eps)$ and allocates one resource each to $\AuxB^u_{s,1}$ and $\AuxB^u_{s,2}$ with probability $1 - \beta$, and the guard $\GuP^u_s$ mixes among the strategies of allocating its resource to $\AuxB^u_{s,2}$ and its reserve battlefields.
\end{itemize}
\end{lemma}

\begin{proof}
First, notice that the blocker $\BlP^u_s$ never allocates any of its two resources to a reserve battlefield. If it allocates resources to reserve battlefields, then it does not allocate any resources to at least one of the two auxiliary battlefields $\AuxB^u_{s,1}$ or $\AuxB^u_{s,2}$.
Detector $\DeP^u_s$, with one resource, is the only other player that has a positive value for $\AuxB^u_{s,1}$, so the blocker $\BlP^u_s$ gets a payoff of at least $1/2 > \rho + \eps$ from $\AuxB^u_{s,1}$ by allocating a resource to it.
Similarly, guard $\GuP^u_s$ and at most four pressure players of class $\PrP^u_s$ (four is the maximum number of pressure players that have positive value for any auxiliary battlefield, which in particular happens for the \PURIFY{} gate), each with one resource, are the only other players that have a positive value for $\AuxB^u_{s,2}$, so the blocker $\BlP^u_s$ gets a payoff of at least $\eta_{\BlP}/6 = 11/60 > \rho + \eps$ from $\AuxB^u_{s,2}$ by allocating a resource to it.
Overall, moving the resource allocated to a reserve battlefield to one of the two auxiliary battlefields, whichever has not been allocated a resource, increases the payoff by more than $\eps$.
In the proof of \cref{lm:detector}, we argued that the blocker never allocates both of its resources to $\AuxB^u_{s,1}$; the argument uses the fact that allocating both resources to $\AuxB^u_{s,2}$ is more than $\eps$ better than allocating both resources to $\AuxB^u_{s,1}$. In summary, the blocker $\BlP^u_s$ randomizes only over the two actions $(1,1)$ and $(0,2)$ of resource allocations over the auxiliary battlefields $(\AuxB^u_{s,1}, \AuxB^u_{s,2})$.

Let us look at the case when the detector $\DeP^u_s$ puts its only resource on the bit battlefield $\BitB^u_s$ with probability $1$.
Then the blocker $\BlP^u_s$ has no competition in the auxiliary battlefield $\AuxB^u_{s,1}$, and can get a payoff of $1$ from it by allocating one resource.
On the other hand, in the battlefield $\AuxB^u_{s,2}$, blocker $\BlP^u_s$ has to compete with at most one guard $\GuP^u_s$ and at most four pressure players of class $\PrP^u_s$, and receives a payoff of at least $\eta_{\BlP}/6$ by allocating a resource. So, by playing action $(1,1)$, the blocker gets a payoff of at least $1 + \eta_{\BlP}/6 = 1 + 11/60 = 71/60$. On the other hand, by playing action $(0,2)$, the blocker gets a payoff of $11/10 = 66/60$.
Therefore, the blocker plays $(1,1)$ over $(\AuxB^u_{s,1}, \AuxB^u_{s,2})$ if the detector plays $\BitB^u_s$.
Finally, if the blocker allocates only one resource to $\AuxB^u_{s,2}$, then the guard has to compete with the blocker and at most four pressure players for $\AuxB^u_{s,2}$, and gets a payoff of at least $1/6$ by allocating its resource to $\AuxB^u_{s,2}$. On the other hand, a reserve battlefield gives a payoff of just $\rho_{\GuP} < 1/6 - \eps$. So, the guard allocates its resource to $\AuxB^u_{s,2}$ with probability $1$.

Let us now look at the case when the detector $\DeP^u_s$ puts its only resource on the auxiliary battlefield $\AuxB^u_{s,1}$ with probability $1$.
Let $\beta$ be the probability that blocker $\BlP^u_s$ plays the action $(0,2)$ over battlefields $(\AuxB^u_{s,1}, \AuxB^u_{s,2})$.
We claim that $\beta \ge 1 - 6 (\rho_{\GuP} + \eps)$.
For contradiction, let us assume that $\beta < 1 - 6 (\rho_{\GuP} + \eps)$. So, with probability $1 - \beta > 6 (\rho_{\GuP} + \eps)$ the blocker plays action $(1,1)$ and allocates only one resource to $\AuxB^u_{s,2}$.
In this case, the guard $\GuP^u_s$ has to compete with one resource of the blocker and at most four pressure players for battlefield $\AuxB^u_{s,2}$, and gets an expected payoff of at least $(1 - \beta)/6 > \rho_{\GuP} + \eps$, which is $\eps$ more than the payoff of $\rho_{\GuP}$ for the reserve battlefields. So, the guard plays $\AuxB^u_{s,2}$ with probability $1$ if $\beta < 1 - 6 (\rho_{\GuP} + \eps)$. Further, if the guard plays $\AuxB^u_{s,2}$ with probability $1$, then the expected payoff of the blocker for playing $(0,2)$ is $\eta_{\BlP} = 11/10$, and the expected payoff for playing $(1,1)$ is at most $1/2 + \eta_{\BlP}/2 = 21/20$, where $1/2$ comes from $\AuxB^u_{s,1}$, shared with $\DeP^u_s$, and at most $\eta_{\BlP}/2$ comes from $\AuxB^u_{s,2}$, shared with at least $\GuP^u_s$. As $11/10 - 21/20 = 1/20 > \eps$, the blocker must play $(0,2)$ with probability $\beta = 1$, which contradicts that $\beta < 1 - 6 (\rho_{\GuP} + \eps)$.
\end{proof}

We next analyze the behavior of the pressure players as a function of the blockers.

\begin{lemma}\label{lm:pressure}
For any $u \in V$ and $s \in \{0,1\}$, the pressure players $\PrP^u_s$ play the following actions depending upon the action of the blocker $\BlP^u_s$:
\begin{itemize}
    \item if blocker $\BlP^u_s$ allocates one resource each to $\AuxB^u_{s,1}$ and $\AuxB^u_{s,2}$ with probability $1$,
    then all pressure players $\PrP^u_s$ allocate their resources to $\AuxB^u_{s,2}$ with probability $1$;
    \item if blocker $\BlP^u_s$ allocates two resources to $\AuxB^u_{s,2}$ with probability $\beta \ge 1 - 6(\rho_{\GuP} + \eps)$,
    then no pressure player of class $\PrP^u_s$ allocates any resources to $\AuxB^u_{s,2}$; instead, they mix over their positive-valued reserve battlefields and their positive-valued bit battlefield on the output side of the gate.
\end{itemize}
\end{lemma}
\begin{proof}
Let us consider the case where the blocker $\BlP^u_s$ allocates one resource each to $\AuxB^u_{s,1}$ and $\AuxB^u_{s,2}$ with probability $1$.
Then any pressure player of class $\PrP^u_s$ has to compete with one resource of the blocker, at most one guard, and at most three other pressure players on the battlefield $\AuxB^u_{s,2}$; hence, it gets a payoff of at least $1/6$ for playing $\AuxB^u_{s,2}$.
This payoff is clearly $\eps$ more than the value $\rho$ for the positive-valued reserve battlefields.
For the positive-valued bit battlefield on the output side, given \cref{lm:primary}, any pressure player of class $\PrP^u_s$ has to compete with at least one resource of the primary player, and therefore gets a payoff of at most $\eta_{\PrP}/2 = 3/20$. As $1/6 - 3/20 > \eps$, all these pressure players must play $\AuxB^u_{s,2}$ with probability $1$.

Let us now consider the case where the blocker $\BlP^u_s$ allocates two resources to $\AuxB^u_{s,2}$ with probability $\beta \ge 1 - 6(\rho_{\GuP} + \eps)$. Then with probability $1 - \beta$ it allocates one resource to $\AuxB^u_{s,2}$ by \cref{lm:blocker}.
Then any pressure player of class $\PrP^u_s$ gets $0$ with probability $\beta$ and $\le 1/2$ with probability $1 - \beta$ for playing $\AuxB^u_{s,2}$, so its expected utility for playing $\AuxB^u_{s,2}$ is at most $(1 - \beta)/2 \le 3(\rho_{\GuP} + \eps)$. On the other hand, it gets a payoff of $\rho$ for playing a positive-valued reserve battlefield, and $\rho - 3(\rho_{\GuP} + \eps) = 1/1400 - 3 \cdot 10^{-4} - 3 \cdot 10^{-5} > 10^{-5} = \eps$, so it will not allocate any resource to $\AuxB^u_{s,2}$.
\end{proof}

\cref{lm:pressure} tells us that if the blocker $\BlP^u_s$ puts two resources on the auxiliary battlefield $\AuxB^u_{s,2}$ with high probability, then the pressure players $\PrP^u_s$ play the auxiliary battlefield $\AuxB^u_{s,2}$ with probability $0$ and mix only over their positive-valued reserve battlefields and their positive-valued bit battlefield on the output side.
For this case, we now analyze the behavior of these pressure players as a function of the primary player on the output side.

\begin{lemma}\label{lm:pressure2}
For any $u, v \in V$ and $s, r \in \{0,1\}$ such that $u$ is an input and $v$ an output of a gate, consider the pressure players $\PrP^u_s$ who have positive values for the auxiliary battlefield $\AuxB^u_{s,2}$ and the bit battlefield $\BitB^v_r$.
If the blocker $\BlP^u_s$ allocates two resources to $\AuxB^u_{s,2}$ with probability $\beta \ge 1 - 6(\rho_{\GuP} + \eps)$
and the primary player $\PrimP^v$ has $p^v_r < 1 - \alpha$,
then every pressure player of class $\PrP^u_s$ puts its only resource on the bit battlefield $\BitB^v_r$ with probability $1$.
\end{lemma}

\begin{proof}
First, as the blocker $\BlP^u_s$ allocates two resources to battlefield $\AuxB^u_{s,2}$ with probability $\beta \ge 1 - 6(\rho_{\GuP} + \eps)$,
using \cref{lm:pressure} we know that all pressure players of class $\PrP^u_s$ mix only over their positive-valued reserve battlefields and their positive-valued bit battlefield $\BitB^v_r$.

The primary player $\PrimP^v$ allocates only one resource to $\BitB^v_r$ with probability $1 - p^v_r$.
So, with probability $1 - p^v_r$, any pressure player of class $\PrP^u_s$ has to compete with one resource of $\PrimP^v$,
at most one detector $\DeP^v_r$,
and at most five other pressure players in $\BitB^v_r$; the maximum occurs on the $s=r=0$ side of an \AND{} gate.
Therefore, it gets an expected payoff of at least $\frac{(1 - p^v_r)\eta_{\PrP}}{8}$ for playing $\BitB^v_r$. On the other hand, it gets a payoff of $\rho$ for playing a suitable reserve battlefield. As $(1 - p^v_r)\eta_{\PrP}/8 > \alpha \eta_{\PrP} / 8 > \rho + \eps$, where the last inequality follows from the constants in \cref{tab:reduc-params}, every pressure player of class $\PrP^u_s$ plays $\BitB^v_r$ with probability $1$.
\end{proof}

In a similar spirit to \cref{lm:pressure2}, we next analyze the behavior of the bias players as a function of the action of the primary player on the output side.

\begin{lemma}\label{lm:bias}
Consider $u, v \in V$ and $s \in \{0,1\}$ such that $u$ is the input of a \PURIFY{} gate and $v$ is the output for which the bias players of class $\BiP^u_s$ have value $1$ for the bit battlefield $\BitB^v_s$.
Each bias player of class $\BiP^u_s$ puts its only resource on the bit battlefield $\BitB^v_s$ with probability $1$ if $p^v_s < 1 - \alpha$.
\end{lemma}
\begin{proof}
The primary player $\PrimP^v$ allocates only one resource to $\BitB^v_s$ with probability $1 - p^v_s$.
So, with probability $1 - p^v_s$, any bias player of class $\BiP^u_s$ has to compete with one resource of $\PrimP^v$, at most one detector $\DeP^v_s$, and at most one other bias player of class $\BiP^u_s$ on the battlefield $\BitB^v_s$. Therefore, it gets an expected payoff of at least $(1 - p^v_s)/4$ for playing $\BitB^v_s$. On the other hand, it gets a payoff of $\rho$ for playing a suitable reserve battlefield. As $(1 - p^v_s)/4 > \alpha/4 > \rho + \eps$, the player plays $\BitB^v_s$ with probability $1$.
\end{proof}

In the previous lemmas, we analyzed the behavior of each auxiliary role as a function of the strategies of the primary players and other auxiliary players.
What remains is to connect these lemmas, suitably adapted to the individual gates, and prove that the gates work correctly.

\medskip
\noindent
$v = \NOT(u)$.
We prove that our construction correctly implements the \NOT{} gate.
Let us consider the case when $\val{u} = 0$.
In our game, this corresponds to $p^u_0 \ge 1 - \alpha = 49/50$ and $p^u_1 = 1 - p^u_0 \le \alpha = 1/50$.
Let us analyze the behavior of the auxiliary players and the output player $\PrimP^v$.

Let us first focus on the $s=0$ side of the gate.
As $p^u_0 \ge 49/50 \ge 20/21$, applying \cref{lm:detector}, we know that the 0-side detector $\DeP^u_0$ chooses the auxiliary battlefield $\AuxB^u_{0,1}$.
Given this, using \cref{lm:blocker}, we know that the 0-side blocker $\BlP^u_0$ plays action $(0,2)$ over $(\AuxB^u_{0,1}, \AuxB^u_{0,2})$ with probability $\beta \ge 1 - 6(\rho_{\GuP} + \eps)$.
Then, using \cref{lm:pressure}, we know that the 0-side pressure players $\PrP^u_0$ choose either the bit battlefield $\BitB^v_1$ or reserve battlefields with probability $1$.

Now, focusing on the $s=1$ side of the gate.
As $p^u_1 \le 1/50 \le 11/20$, using \cref{lm:detector}, the 1-side detector $\DeP^u_1$ chooses the bit battlefield $\BitB^u_1$.
Given this, using \cref{lm:blocker}, we know that the 1-side blocker $\BlP^u_1$ plays $(1,1)$ over $(\AuxB^u_{1,1}, \AuxB^u_{1,2})$ with probability $1$.
Then, using \cref{lm:pressure}, we know that all the 1-side pressure players $\PrP^u_1$ choose $\AuxB^u_{1,2}$ with probability $1$.

Summarizing the behavior of the pressure players, we know that the 1-side pressure players $\PrP^u_1$ play $\AuxB^u_{1,2}$, but the 0-side pressure players $\PrP^u_0$ may play either $\BitB^v_1$ or reserve battlefields. Note that there are $k_{\NOT} = 2$ pressure players of class $\PrP^u_0$.

Let us now focus on the behavior of the primary player $\PrimP^v$; we will prove that $p^v_1 \ge 1 - \alpha$. Let us assume $p^v_1 < 1 - \alpha$ for contradiction.
From \cref{lm:pressure2}, we know that if $p^v_1 < 1 - \alpha$, then the pressure players $\PrP^u_0$ choose $\BitB^v_1$ with probability $1$.
In that case, $\PrimP^v$'s payoff for playing action $(2,1)$ over $(\BitB^v_0, \BitB^v_1)$ will be $1$ from $\BitB^v_0$ and at most $1/(k_{\NOT}+1) = 1/3$ from $\BitB^v_1$, so a total of at most $4/3$.
On the other hand, $\PrimP^v$'s payoff for playing action $(1,2)$ will be $1$ from $\BitB^v_1$ and at least $1/2$ from $\BitB^v_0$, because on $\BitB^v_0$ it may have to compete with the detector $\DeP^v_0$ but does not have to compete with any pressure players $\PrP^u_1$. So, it gets a total of at least $3/2 > 4/3 + \eps$ for playing $(1,2)$. Therefore, $\PrimP^v$ must play $(1,2)$ with probability $p^v_1 = 1$ in any $\eps$-WSNE, which contradicts that $p^v_1 < 1 - \alpha$. So, $p^v_1 \ge 1 - \alpha$, which implies $\val{v} = 1$, as required.

Notice that our constructed \NOT{} gate is symmetric across $s = 0$ and $s = 1$. So, for the case $\val{u} = 1$, a symmetric argument proves that $\val{v} = 0$. Finally, \PURE{} puts no restriction on the output of the \NOT{} when the input $\val{u} \in (0, 1)$, so we do not need to analyze this case.

\medskip
\noindent
$w = \AND(u, v)$. We do a case analysis on the input values $\val{u}$ and $\val{v}$.

Case $\val{u} = 0$.
As we did in the analysis of the \NOT{} gate above, applying \cref{lm:detector,lm:blocker,lm:pressure} in sequence, we get that the pressure players $\PrP^u_1$ play the auxiliary battlefield $\AuxB^u_{1,2}$, but the pressure players $\PrP^u_0$ may play either $\BitB^w_0$ or reserve battlefields.
We claim that the primary player $\PrimP^w$ has $p^w_0 \ge 1 - \alpha$.
For contradiction, assume $p^w_0 < 1 - \alpha$.
From \cref{lm:pressure2}, we know that if $p^w_0 < 1 - \alpha$, then all the $k_{\AND,0} = 3$ pressure players of class $\PrP^u_0$ choose $\BitB^w_0$ with probability $1$.
Then, in the battlefield $\BitB^w_0$, $\PrimP^w$ has to compete with at least these $k_{\AND,0}$ pressure players, so action $(1,2)$ over $(\BitB^w_0, \BitB^w_1)$ gives it payoff at most $1/4 + 1 = 5/4$.
On the other hand, in the battlefield $\BitB^w_1$, $\PrimP^w$ has to compete with at most one detector $\DeP^w_1$ and at most $k_{\AND,1} = 1$ pressure player of class $\PrP^v_1$ from the other input $v$. So, action $(2,1)$ over $(\BitB^w_0, \BitB^w_1)$ gives it payoff at least $1 + 1/3 = 4/3 > 5/4 + \eps$.
So, $\PrimP^w$ must play $(2,1)$ with probability $p^w_0 = 1$, which contradicts $p^w_0 < 1 - \alpha$. Therefore, $p^w_0 \ge 1 - \alpha$, which implies $\val{w} = 0$, as required.

Case $\val{v} = 0$.
The behavior of the \AND{} gate and our construction of the corresponding gadget are symmetric across the two inputs $u$ and $v$, and an analysis similar to the $\val{u} = 0$ case holds.

Case $\val{u} = \val{v} = 1$.
As we did earlier, applying \cref{lm:detector,lm:blocker,lm:pressure} in sequence, we get that the pressure players $\PrP^u_0$ and $\PrP^v_0$ play auxiliary battlefields $\AuxB^u_{0,2}$ and $\AuxB^v_{0,2}$, respectively, but the pressure players $\PrP^u_1$ and $\PrP^v_1$ may play either $\BitB^w_1$ or reserve battlefields depending upon the action of the primary player $\PrimP^w$.
We claim that the primary player $\PrimP^w$ has $p^w_1 \ge 1 - \alpha$.
For contradiction, assume $p^w_1 < 1 - \alpha$.
From \cref{lm:pressure2}, we know that if $p^w_1 < 1 - \alpha$, then all $2 k_{\AND,1} = 2$ pressure players of classes $\PrP^u_1$ and $\PrP^v_1$ choose $\BitB^w_1$ with probability $1$.
Then, in the battlefield $\BitB^w_1$, $\PrimP^w$ has to compete with at least these $2 k_{\AND,1}$ pressure players, so action $(2,1)$ over $(\BitB^w_0, \BitB^w_1)$ gives it payoff at most $1 + 1/3 = 4/3$.
On the other hand, in the battlefield $\BitB^w_0$, $\PrimP^w$ has to compete with at most one detector $\DeP^w_0$, so action $(1,2)$ over $(\BitB^w_0, \BitB^w_1)$ gives it payoff at least $1/2 + 1 = 3/2 > 4/3 + \eps$.
So, $\PrimP^w$ must play $(1,2)$ with probability $p^w_1 = 1$, which contradicts $p^w_1 < 1 - \alpha$. Therefore, $p^w_1 \ge 1 - \alpha$, which implies $\val{w} = 1$, as required.

Other cases. For other values of $\val{u}$ and $\val{v}$, the output of the \AND{} gate is unrestricted.

\medskip
\noindent
$(v, w) = \PURIFY(u)$.
For $s \in \{0,1\}$, let us call the detector $\DeP^u_s$ \textit{active} if it allocates its only resource to the auxiliary battlefield $\AuxB^u_{s,1}$ with probability $1$ and \textit{inactive} if it allocates its only resource to the bit battlefield $\BitB^u_s$ with probability $1$. For the cases where we are unsure about the action of $\DeP^u_s$, we call it \textit{unspecified}.
Let us revisit the statement of \cref{lm:detector}. For $s \in \{0,1\}$, if $p^u_s \le 11/20$, then the detector $\DeP^u_s$ is inactive, but if $p^u_s \ge 20/21$, then the detector is active. When $p^u_s \in (11/20, 20/21)$, the action of the detector $\DeP^u_s$ is unspecified; however, knowing this will not be necessary for the analysis below.
Consider the following exhaustive cases depending upon the value of $p^u_1 = 1 - p^u_0$ in $[0,1]$,
\begin{align}
    p^u_1 = 1 - p^u_0 &\le \frac{1}{21}, &\text{then $\DeP^u_0$ is active and $\DeP^u_1$ is inactive,} \label{eq:purify:1} \\
    \frac{1}{21} < p^u_1 = 1 - p^u_0 &< \frac{9}{20}, &\text{then $\DeP^u_0$ is unspecified and $\DeP^u_1$ is inactive,} \\
    \frac{9}{20} \le p^u_1 = 1 - p^u_0 &\le \frac{11}{20}, &\text{then $\DeP^u_0$ and $\DeP^u_1$ are both inactive,} \\
    \frac{11}{20} < p^u_1 = 1 - p^u_0 &< \frac{20}{21}, &\text{then $\DeP^u_0$ is inactive and $\DeP^u_1$ is unspecified,} \\
    \frac{20}{21} \le p^u_1 = 1 - p^u_0 &, &\text{then $\DeP^u_0$ is inactive and $\DeP^u_1$ is active.} \label{eq:purify:2}
\end{align}
In each of the above cases, at least one of the two detectors in the gate, $\DeP^u_0$ or $\DeP^u_1$, is either active or inactive but not unspecified.
As we did in the analysis of the \NOT{} and \AND{} gates, applying \cref{lm:blocker,lm:pressure}, we get the following:
if the detector $\DeP^u_s$ is inactive, then the pressure players $\PrP^u_s$ play auxiliary battlefield $\AuxB^u_{s,2}$ with probability $1$;
if the detector $\DeP^u_s$ is active, then the pressure players $\PrP^u_s$ play either their positive-valued bit battlefield on the output side or their positive-valued reserve battlefields.
Using this, we can prove the following properties about the behavior of the primary players $\PrimP^v$ and $\PrimP^w$:
\begin{enumerate}
    \item If $\DeP^u_0$ is active, then $p^v_0 = 1 - p^v_1 \ge 1 - \alpha$, equivalently $\val{v} = 0$.

    For contradiction, assume $p^v_0 < 1 - \alpha$.
    From \cref{lm:pressure2}, we know that if $p^v_0 < 1 - \alpha$, then all the $k_{\PURIFY} = 4$ pressure players $\PrP^u_0$ choose $\BitB^v_0$ with probability $1$.
    Then, in the battlefield $\BitB^v_0$, $\PrimP^v$ has to compete with at least these $k_{\PURIFY}$ pressure players, so an action $(1,2)$ over $(\BitB^v_0, \BitB^v_1)$ gives it a payoff of at most $1/5 + 1 = 6/5$.
    On the other hand, in the battlefield $\BitB^v_1$, $\PrimP^v$ has to compete with at most one detector $\DeP^v_1$ and at most $k_{\BIAS} = 2$ bias players of class $\BiP^u_1$. So, an action $(2,1)$ over $(\BitB^v_0, \BitB^v_1)$ gives it a payoff of at least $1 + 1/4 = 5/4 > 6/5 + \eps$.
    So, $\PrimP^v$ must play $(2,1)$ with probability $p^v_0 = 1$, which contradicts $p^v_0 < 1 - \alpha$, so $p^v_0 \ge 1 - \alpha \iff \val{v} = 0$, as required.

    \item If $\DeP^u_1$ is inactive, then $p^w_0 = 1 - p^w_1 \ge 1 - \alpha$, equivalently $\val{w} = 0$.

    For contradiction, assume $p^w_0 < 1 - \alpha$.
    From \cref{lm:bias}, we know that if $p^w_0 < 1 - \alpha$, then all the $k_{\BIAS} = 2$ bias players $\BiP^u_0$ choose $\BitB^w_0$ with probability $1$.
    Then, in the battlefield $\BitB^w_0$, $\PrimP^w$ has to compete with at least these $k_{\BIAS}$ players, so an action $(1,2)$ over $(\BitB^w_0, \BitB^w_1)$ gives it a payoff of at most $1/3 + 1 = 4/3$.
    On the other hand, in the battlefield $\BitB^w_1$, $\PrimP^w$ has to compete only with at most one detector $\DeP^w_1$. Note that $\DeP^u_1$ is inactive, so the pressure players $\PrP^u_1$ play the auxiliary battlefield $\AuxB^u_{1,2}$ and not the bit battlefield $\BitB^w_1$, as discussed earlier.
    So, action $(2,1)$ over $(\BitB^w_0, \BitB^w_1)$ gives it a payoff of at least $1 + 1/2 = 3/2 > 4/3 + \eps$.
    So, $\PrimP^w$ must play $(2,1)$ with probability $p^w_0 = 1$, which contradicts $p^w_0 < 1 - \alpha$, so $p^w_0 \ge 1 - \alpha \implies \val{w} = 0$, as required.

    \item If $\DeP^u_1$ is active, then $\val{w} = 1$.
    Our construction of this gate is symmetric if we switch the roles of the $s=0$ side and $s=1$ side and the roles of the output players $\PrimP^v$ and $\PrimP^w$.
    In particular, the proof of this case---$\DeP^u_1$ is active then $\val{w} = 1$---is analogous to the case above---$\DeP^u_0$ is active then $\val{v} = 0$.

    \item If $\DeP^u_0$ is inactive, then $\val{v} = 1$. Again due to the symmetry in the construction, the proof is analogous to the case where $\DeP^u_1$ is inactive and $\val{w} = 0$.
\end{enumerate}
Given the above results, we can now check the correctness of the \PURIFY{} gate using a simple case analysis on the input value $\val{u}$.
\begin{itemize}
    \item $\val{u} = 0 \iff p^u_0 \ge 1 - \alpha = 49/50$. As $p^u_0 \ge 49/50 \ge 20/21$, using \eqref{eq:purify:1}, we know that $\DeP^u_0$ is active and $\DeP^u_1$ is inactive, which further imply that $\val{v} = 0$ and $\val{w} = 0$, respectively.
    \item $\val{u} = 1 \iff p^u_1 \ge 1 - \alpha = 49/50$. As $p^u_1 \ge 49/50 \ge 20/21$, using \eqref{eq:purify:2}, we know that $\DeP^u_0$ is inactive and $\DeP^u_1$ is active, which further imply that $\val{v} = 1$ and $\val{w} = 1$, respectively.
    \item $\val{u} \in (0, 1)$. From \eqref{eq:purify:1}--\eqref{eq:purify:2}, we know that at least $\DeP^u_0$ or $\DeP^u_1$ is either active or inactive, i.e., both detectors are never unspecified simultaneously.
    Thus, in every case at least one of the four implications above applies, and hence at least one of $\val{v}$ or $\val{w}$ lies in $\{ 0, 1 \}$.
\end{itemize}

We have now shown the correctness of all the gates. Our constructed Blotto game encodes a solution to the given \PURE{} instance in every $\eps$-WSNE of the game.

Notice that all players have at most three resources in the constructed game.
We can slightly tweak the game to ensure every player has exactly three resources:
For every player $i$ that has $B_i < 3$ resources, we add $3 - B_i$ additional battlefields to the game, denoted by the set $\cS_i$, where $|\cS_i| = 3 - B_i$, and increase the budget of player $i$ to $3$. For each battlefield $j \in \cS_i$, player $i$ has value $3/2$ and all other players have value $0$.
The proof of \cref{lm:reserve} continues to apply after adding these new battlefields: all players still have at most three resources and the reserve-battlefield structure is unchanged.
Consequently, in any $\eps$-WSNE, no opponent of player $i$ allocates resources to the new private battlefields $\cS_i$ or to player $i$'s four reserve battlefields $\cR_i$, since these battlefields have value $0$ for every other player.
We claim that every supported pure action of player $i$ allocates exactly one resource to each battlefield in $\cS_i$.
First, if such an action leaves a battlefield in $\cS_i$ unused, player $i$ can move a resource to it from an old battlefield, or from a new private battlefield that already receives at least two resources.
Since every old battlefield has individual value at most $\eta_{\BlP} = 11/10$, this deviation gains at least $3/2 - 11/10 = 2/5 > \eps$, contradicting the well-supported condition.
Conversely, suppose a supported pure action allocates at least two resources to a battlefield in $\cS_i$.
Since player $i$ has three resources and four reserve battlefields, some battlefield in $\cR_i$ receives no resource from $i$.
Moving one resource from the former battlefield to the latter leaves the private-battlefield payoff unchanged and increases the reserve-battlefield payoff by at least $\rho_{\GuP} = 10^{-4} > \eps = 10^{-5}$, again contradicting the well-supported condition.
Thus, every player $i$ allocates the remaining $3 - |\cS_i| = B_i$ resources to the old battlefields and simulates the original game.

Finally, we return to the standard uniform tie-breaking convention.
Recall that so far we assumed that, if all players allocate zero resources to a battlefield, then every player receives payoff zero from that battlefield.
Let $G$ be the game analyzed above with this modified no-resource tie-breaking convention, and let $G'$ be the same game with the standard uniform tie-breaking convention, where an empty battlefield is split uniformly among all players.
Let $n$ be the number of players, and let $C$ be an upper bound on the total value of any player across all battlefields.
For the base construction, one may take $C = 3$; after the exactly-three-resources extension above, one may take $C = 6$.
For every player $i$, every pure action $\ba_i$, and every mixed profile $\bx_{-i}$, the payoff of $\ba_i$ in $G'$ differs from its payoff in $G$ by at most $C/n$.
Indeed, the two games differ only on battlefields to which all players allocate zero resources, and such a battlefield $j$ contributes at most $v_{ij}/n$ to player $i$ in $G'$.
Therefore, every $\eps'$-WSNE of $G'$ is an $(\eps' + 2C/n)$-WSNE of $G$.
Taking $\eps' = \eps/2$ and assuming $n \ge 4C/\eps$, the above analysis applies.
We may ensure this lower bound on $n$ by padding the \PURE{} instance with disconnected satisfiable dummy components.
Thus, after normalizing payoffs to $[0,1]$, the standard uniform tie-breaking version is PPAD-hard for $(\eps/2C)$-WSNE.
\end{proof}

We can combine \cref{thm:hardness-unif-wsne} with \cref{lm:wsne2ne} to get the following hardness result for computing an approximate NE.

\begin{corollary}\label{cor:ne}
It is PPAD-hard to compute a $(\kappa/n)$-NE in Colonel Blotto games with uniform tie-breaking for a sufficiently small constant $\kappa > 0$, where $n$ is the number of players. The result holds even if the players in the Colonel Blotto games have exactly three resources.
\end{corollary}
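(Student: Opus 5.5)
The plan is a direct reduction: an approximate NE of the hard games from \cref{thm:hardness-unif-wsne} is converted into an approximate WSNE via \cref{lm:wsne2ne}, and hardness transfers.

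Fix the constant $\delta>0$ for which \cref{thm:hardness-unif-wsne} gives PPAD-hardness of computing a $\delta$-WSNE in normalized Blotto games with uniform tie-breaking where every player has exactly three resources. Set $\kappa = \delta^2/8$. Given a $(\kappa/n)$-NE $\bx$ of such a game with $n$ players, \cref{lm:wsne2ne} says that removing every supported pure strategy whose payoff is more than $\delta/2$ below the best-response payoff, and renormalizing, yields a $\delta$-WSNE. Hence an algorithm for $(\kappa/n)$-NE yields a $\delta$-WSNE, which encodes a \PURE{} solution via \eqref{eq:mapping}.

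It remains to check that this conversion is polynomial, so that the reduction is valid. Every player has exactly three resources, so each player has at most $\binom{m+2}{3} = O(m^3)$ pure strategies. The explicit mixed strategies therefore have polynomial size. The payoff of each pure strategy against $\bx_{-i}$ is computable in polynomial time. One way is the marginal dynamic program sketched in \cref{sec:overview:membership}: each battlefield's expected share depends only on the opponents' marginal distributions over $\{0,1,2,3\}$ on that battlefield. Best-response values are then maxima over polynomially many pure strategies. Thus the remark following \cref{lm:wsne2ne} applies, and the pruning runs in polynomial time.

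The main point needing care is the $n$ in the approximation bound: the $n$ in $\kappa/n$ must be the number of players of the constructed game, which is exactly the $n$ used by \cref{lm:wsne2ne}. The normalization to $[0,1]$ must also match the normalization already used at the end of \cref{thm:hardness-unif-wsne}. Both hold by construction, so the $(\kappa/n)$-NE problem is PPAD-hard even with exactly three resources per player.
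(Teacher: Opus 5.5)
Your proposal is correct and matches the paper's proof: you take $\delta$ as the normalized WSNE constant from \cref{thm:hardness-unif-wsne}, set $\kappa=\delta^2/8$, apply the pruning of \cref{lm:wsne2ne}, and note that with three resources per player there are only polynomially many pure strategies, so supported payoffs and best-response values can be computed in polynomial time. Citing the marginal dynamic program for payoff evaluation is a sound way to spell out the polynomial-time claim that the paper states in one line.
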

\begin{proof}
Let $\delta > 0$ be the constant from \cref{thm:hardness-unif-wsne}, after normalizing payoffs to lie in $[0,1]$. Set $\kappa = \delta^2/8$. Suppose there is a polynomial-time algorithm that computes a $(\kappa/n)$-NE for the Blotto games constructed in \cref{thm:hardness-unif-wsne}.
In these games, each player has polynomially many pure strategies, so the output profile has polynomial support and the payoffs of supported pure strategies and the best-response values can be computed in polynomial time.
Since $\kappa/n = \delta^2/8n$, \cref{lm:wsne2ne} converts the output of this algorithm into a $\delta$-WSNE in polynomial time.
This contradicts \cref{thm:hardness-unif-wsne}. Hence, computing a $(\kappa/n)$-NE is PPAD-hard.
\end{proof}

A natural question is whether we can prove PPAD-hardness for computing an $\eps$-NE for constant $\eps$.
The main obstacle is that the proof of \cref{thm:hardness-unif-wsne} uses the well-supported condition in an essential way.
In several places, we show that if one pure action is worse than another pure action by more than the approximation parameter, then the worse action is not played at all.
This conclusion is valid in an approximate WSNE, but it is not valid in an approximate NE.
In an approximate NE, a player may still put a small probability on a significantly suboptimal action, as long as the contribution of that action to the player’s expected regret is small.
This issue already appears in \cref{lm:reserve}.
There, the WSNE condition is used to prove that no player allocates resources to battlefields for which it has zero value.
If we only had an $\eps$-NE, then each player could still put probability roughly $O(\eps)$ on such actions.
Since the Blotto game allows every player to allocate resources to every battlefield, these small probabilities may accumulate over all $n$ players and perturb the incentives in a battlefield by as much as $O(n\eps)$.
This is exactly why \cref{lm:wsne2ne} and \cref{cor:ne} only give hardness for $\eps = O(1/n)$, rather than for constant $\eps$.
Hence, our current techniques do not easily give hardness for $\eps$-NE for constant $\eps$.
However, this does not rule out a different proof of the same.

We can, however, recover PPAD-hardness for $\eps$-NE for constant $\eps$ if we restrict the action spaces so that players are allowed to allocate resources only to battlefields for which they have positive value. Under this restriction, the interaction graph of our constructed game has constant degree. Indeed, every player has positive value for only constantly many battlefields, and every battlefield has positive value for only constantly many players. Since every player has at most three resources, every player also has only constantly many pure strategies. The hardness result is formally stated in the corollary below. The proof of the corollary is provided in \cref{app:proof:ne2}, and uses an extension of \cref{lm:wsne2ne} for constant-degree interaction graphs.

\begin{corollary}\label{cor:ne2}
It is PPAD-hard to compute a $\kappa$-NE, for constant $\kappa > 0$, in Colonel Blotto games with uniform tie-breaking and
action spaces restricted so that each player $i$ may allocate resources only to battlefields $j$ with $v_{ij} > 0$.
The result holds even if the players in the Colonel Blotto games have exactly three resources.
\end{corollary}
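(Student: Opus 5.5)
The plan is to reuse the game constructed in the proof of \cref{thm:hardness-unif-wsne}, with action spaces restricted to positive-valued battlefields, and to show that any $\kappa$-NE can be converted in polynomial time into a $\delta$-WSNE of the restricted game. First I will check that the analysis of \cref{thm:hardness-unif-wsne} still goes through in the restricted game. \cref{lm:reserve} becomes vacuous, since zero-valued battlefields are no longer available. Every later lemma compares only actions that remain available: the primary player's moves between bit and reserve battlefields, the detector's two options, the blocker's $(1,1)$ versus $(0,2)$, and so on. So every $\delta$-WSNE of the restricted game, with $\delta=\eps/2C$ after normalization, still encodes a \PURE{} solution via \eqref{eq:mapping}. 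The "exactly three resources" padding and the return to standard tie-breaking work as before. One subtlety is the empty-battlefield correction: players may no longer be able to receive shares of battlefields they cannot use, but the bound of $C/n$ per player still holds, and padding handles it.

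The core step is a local analogue of \cref{lm:wsne2ne}. Define the interaction graph $H$ on players, joining $i$ and $i'$ if they both have positive value for a common battlefield. In the restricted game, $u_i$ depends only on $\bx_i$ and on $\bx_{i'}$ for neighbors $i'$ of $i$. By construction $H$ has degree at most some constant $d$, because each battlefield has constantly many interested players and each player values constantly many battlefields. Each player also has $O(1)$ pure strategies.

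Given a $\kappa$-NE $\bx$, call a pure action $\ba_i$ bad if it is more than $\delta/2$ below the best response. The regret bound gives $\Pr[\text{bad}_i]\le 2\kappa/\delta$. I then delete bad actions and renormalize to obtain $\bx'$. Each $\bx'_{i'}$ is within total variation $O(\kappa/\delta)$ of $\bx_{i'}$. Hence for every player $i$ and pure action $\ba_i$, the payoff changes by at most $d\cdot O(\kappa/\delta)$, since only the at most $d$ neighbors matter and payoffs lie in $[0,1]$. Choosing $\kappa=c\delta^2/d$ for a small constant $c$ makes this shift at most $\delta/4$. Every surviving action was within $\delta/2$ of the best response under $\bx$, so it is within $\delta/2+2\cdot\delta/4=\delta$ under $\bx'$. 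Thus $\bx'$ is a $\delta$-WSNE. The cleaning step is polynomial because supports are constant-size per player and payoffs are computable from the local neighborhood.

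The main obstacle is the bounded-degree claim, so I will verify it carefully from \cref{tab:cb-players}. Each bit battlefield is valued by the primary player plus at most $\Delta=7$ auxiliaries. Each auxiliary battlefield is valued by at most $1+1+1+4$ players, namely the detector, blocker, guard, and pressure players. Reserve and private $\cS_i$ battlefields are valued only by their owner. This bounds $d$ by a constant of roughly $15$. Everything else is routine, and combining the conversion with \cref{thm:hardness-unif-wsne} yields PPAD-hardness of $\kappa$-NE for a constant $\kappa>0$.
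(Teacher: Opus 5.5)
Your proposal is correct and follows essentially the same route as the paper. The paper proves a bounded-dependency-degree version of \cref{lm:wsne2ne}: from a $(\delta^2/8d)$-NE, delete actions more than $\delta/2$ below the best response and renormalize. It then observes that \cref{lm:reserve} becomes automatic under the restricted action spaces, so the proof of \cref{thm:hardness-unif-wsne} goes through unchanged, and it uses the constant interaction degree of the construction to take $\kappa=\delta^2/8d$; your explicit degree count and your remark about empty-battlefield shares spell out what the paper states more briefly.
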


\paragraph{Number of resources.}
The hardness results in \cref{thm:hardness-unif-wsne,cor:ne,cor:ne2} use players with at most three resources (\cref{thm:hardness-unif-wsne} also shows how to make every player have exactly three resources).
It is natural to ask whether the hardness result holds when every player has a smaller resource budget.

The case in which every player has a single resource is tractable. In this case, each pure strategy simply chooses one battlefield, and the resulting best-response structure is equivalent to that of a singleton congestion game with player-specific payoff functions.\footnote{Under the standard uniform tie-breaking convention, a player may also receive value from battlefields on which all players allocate zero resources. If the player deviates to one of these battlefields, this contribution changes. This creates a small bookkeeping issue, but it does not affect the reduction to a singleton congestion game; see \cref{app:singleton-congestion}.} Player-specific singleton congestion games admit pure Nash equilibria, and such equilibria can be computed in polynomial time~\cite{milchtaich1996congestion,ackermann2009pure}. Therefore, when $B_i=1$ for every player $i$, a pure Nash equilibrium of the corresponding Blotto game exists and can be computed in polynomial time.
However, even in this single-resource setting, if the tie-breaking is non-uniform, the problem is PPAD-hard (\cref{thm:nonunif}).

The case in which every player has at most two resources remains unresolved.
This case no longer reduces to a singleton congestion game, since a player can either concentrate both resources on one battlefield or split them across two battlefields.
At the same time, the reduction in \cref{thm:hardness-unif-wsne} uses the third resource in an essential way: by \cref{lm:primary}, each primary player is forced to keep one resource on each of its two bit battlefields, while the remaining resource encodes the value of the corresponding circuit variable through the choice between the allocations $(2,1)$ and $(1,2)$.
In these two actions, the allocation of two resources to a bit battlefield is essential to induce the detectors, and as a consequence, the other auxiliary players, to change their actions.
On the other hand, the allocation of a single resource to the other bit battlefield ensures that the primary player still gets a positive payoff from it, which is essential to ensure that the primary player chooses the suitable action among the two when influenced by the pressure and bias players on the output side.
We currently do not know how to implement this logic using only two resources for the primary player.
We leave the complexity of computing approximate equilibria in Blotto games with uniform tie-breaking, player-specific values, and at most two resources per player as an open problem.

\paragraph{Non-uniform tie-breaking.}
With non-uniform tie-breaking, the tie-breaking rule itself can create the payoff differences needed for hardness.
Indeed, computing a constant-approximate WSNE, and even a constant-approximate NE, is PPAD-hard even when every player has one resource and all values are identical.
Thus all payoff differences in this construction come from the tie-breaking rules, rather than from player-specific values or larger budgets.

We use monotone tie-breaking rules, as defined in \cref{sec:prelim}.
Monotonicity means that if an additional player joins a tie on a battlefield, then the shares of the previously tied players weakly decrease.
The proof has a similar high-level structure as the uniform tie-breaking reduction: we reduce from \PURE{} and encode the values of the circuit variables by the choices of players in the Blotto game.
The main difference is that the gate logic is implemented by the non-uniform tie-breaking rules.
This gives the following theorem.
The proof is deferred to \cref{app:proof:nonunif}.

\begin{theorem}\label{thm:nonunif}
It is PPAD-hard to compute a $\delta$-WSNE in Colonel Blotto games with non-uniform monotone tie-breaking, even when every player has one resource and all values are identical, for a sufficiently small constant $\delta > 0$.
Moreover, it is PPAD-hard to compute a $\delta'$-NE in the same class of games, for a sufficiently small constant $\delta' > 0$.
\end{theorem}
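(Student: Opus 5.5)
We reduce again from \PURE{} (\cref{def:pure}), following the sketch in \cref{sec:overview:hardness}. Every player has one resource and every player values every battlefield at $1$, so all payoff differences come from the tie-breaking rules $\tau_j$.

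\textbf{Players and battlefields.} For each variable $v$ we create:
\begin{itemize}
\item a primary player $\PrimP^v$;
\item two bit battlefields $\BitB^v_0,\BitB^v_1$;
\item a constant number of dummy players $\DumP$ attached to each bit battlefield (and to each auxiliary battlefield we need).
\end{itemize}
For each gate, each input $u$, and each side $s\in\{0,1\}$ we create a constant number of auxiliary players $\AuxP^u_s$. The intended support of $\AuxP^u_s$ is $\{\BitB^u_s,\ \text{an output battlefield}\}$; for \NOT{} the output battlefield is $\BitB^v_{1-s}$, for \AND{} it is $\BitB^w_s$, and for \PURIFY{} it follows the same pattern as the uniform construction. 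For \PURIFY{} we also add bias-type players.

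\textbf{Tie-breaking rules.} On each battlefield $j$ we give the tie-breaking rule as an explicit table over the constantly many ``relevant'' players plus a default rule for everyone else. The rule is monotone, and a tie involving any irrelevant player gives that player share $0$ or a tiny share. We then check that each rule is monotone, and also that it remains monotone once all other players are included.

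\textbf{Step 1 (localization, analogue of \cref{lm:reserve}).} The dummy players receive a large share on their own battlefield whenever they are present. We show that in any $\delta$-WSNE each dummy plays its own battlefield with probability $1$. Consequently any non-designated player deviating onto a foreign battlefield obtains share essentially $0$ there. Hence every player only uses its intended battlefields, which bounds the number of players on each battlefield by a constant $\Delta$.

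\textbf{Step 2 (encoding).} $\PrimP^v$ chooses between $\BitB^v_0$ and $\BitB^v_1$, and we set $p^v_1$ to be the probability it picks $\BitB^v_1$. We decode with the truncation map \eqref{eq:mapping}.

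\textbf{Step 3 (detection).} On $\BitB^u_s$ we choose $\tau$ so that the share of $\AuxP^u_s$ drops sharply when $\PrimP^u$ is present. On its output battlefield, $\AuxP^u_s$ gets a fixed intermediate share. This reproduces the active/inactive thresholds of \cref{lm:detector}, with gaps like $11/20$ versus $20/21$, in one step. It replaces the detector--blocker--guard--pressure chain: $\AuxP^u_s$ moves to the output battlefield exactly when $p^u_s$ is large.

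\textbf{Step 4 (forcing the output).} On the output battlefield $\BitB^v_r$, the rule gives $\PrimP^v$ a small share whenever enough pressuring auxiliary players are present. This makes the other bit battlefield strictly better by a constant, so $p^v_{1-r}$ is pushed to $1$. The \NOT{}, \AND{}, and \PURIFY{} case analyses are then copied from the proof of \cref{thm:hardness-unif-wsne}. For \AND{}, the asymmetry is realized by letting a single $0$-side player already crush $\PrimP^w$'s share, whereas both $1$-side players are needed. For \PURIFY{}, the bias players get a share profile that pushes $v$ toward $1$ and $w$ toward $0$, but is dominated by the pressure players.

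\textbf{Step 5 (from WSNE to NE).} Every player has only $O(1)$ positive-probability-relevant actions. The interaction graph under the localized rules has constant degree, because playing a foreign battlefield yields nothing and is dominated. So we apply the constant-degree extension of \cref{lm:wsne2ne} used for \cref{cor:ne2} to get constant $\delta'$.

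\textbf{Main obstacle.} The hard part is Step 1 combined with monotonicity, made robust to approximate NE. A stray player on a foreign battlefield must not perturb shares by more than a constant fraction. So the tables must be designed so that adding an unexpected player can only decrease designated players' shares by a controlled amount; monotonicity forbids increases but permits decreases, which is the dangerous direction. Our first attempt is rules where irrelevant players get share $0$ and do not reduce others' shares at all. This is consistent with monotonicity, since the shares of the original tied players stay equal, which is allowed under $\ge$. Irrelevant players then become payoff-invisible, and the $\Theta(n)$ accumulation issue of \cref{cor:ne} disappears.
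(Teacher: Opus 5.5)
Your architecture matches the paper's: dummies anchor each bit battlefield, one auxiliary player per input side replaces the detector--blocker--guard--pressure chain, and pressure is a reduction of the primary player's tie share. However, Step~5 does not go through. \cref{lm:wsne2ne2} needs a bound on the dependency degree of the game itself, i.e., of $u_i$ over all pure profiles, and this game has none. Since shares must sum to one, a player on a battlefield to which it is non-local gets zero only if some local player is present; otherwise it splits the battlefield with the other intruders. Also, under the all-zero tie a player collects payoff from an empty local battlefield, which depends on whether anybody plays there. So each $u_i$ depends on $\Theta(n)$ players, and with identical values the action restriction behind \cref{cor:ne2} is vacuous. ``A foreign battlefield yields nothing'' holds only in profiles where the dummies are at home, and your Step~1 gives this only for a WSNE; in a $\delta'$-NE every dummy may leave home with small positive probability. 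You need the argument of \cref{lm:ne-to-wsne-nonunif}:
prune every action more than a constant $\Delta$ below the best response; rerun the dummy-localization argument on the surviving supports, with slack in the sink-component payoff count for the pruned mass (at most $\eta$ per player); deduce that both dummies of any battlefield are simultaneously absent with probability at most $\eta^2$ in the unpruned profile; conclude that every non-local action of a primary or auxiliary player is pruned. Only then bound the payoff change of local actions, via the at most five relevant opponents per local battlefield plus the $\eta^2$ event, using that non-local deviations pay exactly $0$ against the pruned profile. Step~1 itself also needs more than ``dummies get a large share at home'': a dummy strictly gains by moving onto a battlefield whose own dummies have left, so a global counting argument is required. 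The paper uses two dummies per battlefield, with shares $A=7/10$ alone and $a=7/20$ together, so that $2a+2(A-a)\sqrt{a-\eps}-2\eps>1$ for the WSNE parameter $\eps$, together with a sink strongly connected component argument.

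Second, your gadget wiring is inverted. With one resource and monotone tie-breaking, an auxiliary player joining the tie can only weakly lower $\PrimP^v$'s share. So pressure necessarily repels the primary player, as your Step~4 says; but you keep the output battlefields of the uniform construction, where pressure attracts the primary's third resource. With $\AuxP^u_s$ sent to $\BitB^v_{1-s}$, the case $\val{u}=s$ pushes $\PrimP^v$ onto $\BitB^v_s$, so your \NOT{} gadget copies its input. With $0$-side players sent to $\BitB^w_0$, your \AND{} gadget is pushed to output $1$ when an input is $0$. The uniform \PURIFY{} pattern ($\AuxP^u_0$ to $\BitB^v_0$, $\AuxP^u_1$ to $\BitB^w_1$) is inverted the same way. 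Every wire must be flipped, as in the paper:
for \NOT{}, $\AuxP^u_s$ goes to $\BitB^v_s$; for \AND{}, the $s$-side player of each input goes to $\BitB^w_{1-s}$, with the larger penalty on the $0$-side; for \PURIFY{}, $\AuxP^u_0$ goes to $\BitB^v_1$ and $\AuxP^u_1$ to $\BitB^w_0$, with bias penalties on $\BitB^v_0$ and $\BitB^w_1$. The paper builds these biases directly into the primary's tie share rather than adding bias players. Hence the gate analyses cannot simply be copied from \cref{thm:hardness-unif-wsne}; each payoff comparison must be redone in the reversed direction.
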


\section{Membership in PPAD}
\label{sec:membership}

We now study PPAD membership for the equilibrium computation problem in multiplayer Blotto games with uniform tie-breaking.
We discuss non-uniform tie-breaking at the end of the section.
Our main result is that computing an inverse-exponential approximate equilibrium is in PPAD.
Before stating the formal theorem, we discuss two points that are important for the formulation of the result: first, why we work with approximate equilibria rather than exact equilibria, and second, how the representation conventions from \cref{sec:prelim} are used in the membership proof.

\paragraph{Irrational equilibria.}
We first explain why the membership result is formulated for approximate equilibria.
In the standard two-player settings, rational payoffs lead to a rational polyhedral equilibrium structure.
For a bimatrix game with rational payoffs, once the supports of the two players are fixed, the equilibrium conditions are linear equalities and inequalities with rational coefficients.
Thus each support pair describes a rational polyhedron: isolated equilibria are rational, and if the equilibrium set has positive dimension, the extreme points are rational.
Similarly, in two-player constant-sum Blotto games, equilibria can be described by linear programs with rational coefficients, and hence have rational extreme points~\cite{ahmadinejad2019duels,behnezhad2023fast}.

The situation changes for three or more players.
The expected payoff of a pure action is multilinear in the mixed strategies of the other players.
Consequently, after fixing supports, the indifference conditions are polynomial rather than linear.
Even with rational payoffs, such systems may have isolated irrational solutions; this is a standard phenomenon in multiplayer finite games~\cite{etessami2010complexity}.
The same phenomenon already appears in small multiplayer Blotto games with rational values and uniform tie-breaking.

\begin{example}
\label{ex:isolated-irrational}
Consider a uniform tie-breaking Blotto game with three players and three battlefields.
Player $1$ has budget $B_1 = 3$, while players $2$ and $3$ have budgets $B_2 = B_3 = 1$.
The value vectors are
\[
    v_1 = (4, 7, 6), \qquad
    v_2 = (3, 0, 6), \qquad
    v_3 = (0, 7, 4).
\]
The values are left unnormalized for readability; scaling a player's value vector by a positive constant does not change the exact equilibria.

We now describe the isolated irrational equilibrium. Let
\[
    x^* = \frac{-73 + \sqrt{8689}}{70}, \qquad
    y^* = \frac{121 - \sqrt{8689}}{32}, \qquad
    z^* = \frac{109 - \sqrt{8689}}{84}.
\]
Consider the mixed strategy profile:
player $1$ plays $L=(0,1,2)$ with probability $x^*$ and $R=(0,2,1)$ with probability $1-x^*$;
player $2$ plays $C_2=(0,0,1)$ with probability $y^*$ and $A=(1,0,0)$ with probability $1-y^*$;
and player $3$ plays $C_3=(0,0,1)$ with probability $z^*$ and $B=(0,1,0)$ with probability $1-z^*$.
This mixed profile is an isolated irrational Nash equilibrium.
The details are given in \cref{app:proof:isolated-irrational}.
\end{example}

The example shows that irrational isolated equilibria can occur in rational multiplayer Blotto games.
This particular instance also has a rational equilibrium (\cref{app:proof:isolated-irrational}), so the example does not show that rational equilibria are absent.
We leave open whether there is a rational uniform-tie-breaking Blotto game with no rational equilibrium; such examples are known for general multiplayer normal-form games~\cite{etessami2010complexity}.
The example nevertheless shows that the rational-polyhedral structure present in two-player bimatrix games and two-player constant-sum Blotto games does not extend to the multiplayer model.
We focus on $\eps$-approximate equilibria, and allow $\eps$ to be inverse exponential in the input size, which gives arbitrarily high-precision approximations while keeping the output representable with polynomially many bits.

\paragraph{Input representation.}
We use the input representation introduced in \cref{sec:prelim}, and recall the details that are important for membership.
The input of a discrete Blotto instance consists of the $n$ budgets $(B_i)_{i \in [n]}$, the $n \times m$ values $(v_{ij})_{i \in [n], j \in [m]}$, and the approximation parameter $\eps > 0$.
The values $v_{ij}$ are rational numbers, represented in binary by their numerators and denominators.
The approximation parameter $\eps$ is also represented in binary.
The budgets are represented in unary.

The unary representation of budgets is important for the membership result.
The compact representations used below have size polynomial in $m$ and in the budgets $B_i$.
This is polynomial in the input length only when the budgets are part of the input in unary.
For example, even the marginal distribution of player $i$ on a single battlefield has one probability for each possible amount in $\{ 0, 1, \ldots, B_i \}$.
Thus, if $B_i$ were encoded in binary, this marginal vector could already have length exponential in the input size.

For this reason, throughout the membership section we use the standard pseudo-polynomial representation for discrete Blotto games: budgets are represented in unary, or equivalently the input is of length $\Omega(\sum_i B_i)$.
Under this convention, quantities whose natural size depends polynomially on the budgets are polynomial in the input length.
For example, the marginal distribution of player $i$ on a single battlefield, which may require $B_i+1$ probabilities as discussed earlier, is now polynomial in input size.
This is also the convention used in previous polynomial-time algorithms for discrete Blotto games: their running times are polynomial in the budget values $B_i$, rather than polynomial in $\log(B_i)$~\cite{ahmadinejad2019duels,behnezhad2023fast,beaglehole2023sampling,kontogiannis2025efficient}.

Throughout this section, utilities are normalized to lie in $[0,1]$. For
example, this can be enforced by scaling each player's values so that $\sum_{j\in[m]} v_{ij}\le 1$.


\paragraph{Output (flow) representation.}
As discussed in \cref{sec:prelim}, explicitly listing a mixed strategy over all pure allocations may be too large.
A mixed strategy of player $i$ is a distribution over pure allocations in $\cA_i$.
Even under unary budgets, listing such a distribution explicitly over all pure allocations is not the right representation for the membership result, since $|\cA_i|=\binom{m+B_i-1}{B_i}$ may still be exponentially large when both $m$ and $B_i$ are polynomial in input size.
Instead, we represent a mixed strategy compactly by the induced flow in the layered Blotto graph~\cite{behnezhad2023fast}.


Fix a player $i$.
Define the vertex set $V_i = \{(0,0),(m,B_i)\} \cup \{(j,b) : 1 \le j \le m-1,\ 0 \le b \le B_i\}$.
The directed acyclic graph $D_i=(V_i,E_i)$ has source $s_i=(0,0)$ and sink $t_i=(m,B_i)$.
For every battlefield $j \in [m]$,
budget level $b \in \{0\} \cup [B_i]$,
and allocation amount $r \in \{0\} \cup [B_i-b]$
such that both $(j-1,b)$ and $(j,b+r)$ belong to $V_i$,
we add the edge $((j-1,b),(j,b+r))$ to $E_i$.
This edge represents allocating $r$ resources to battlefield $j$,
after allocating $b$ resources to the first $j-1$ battlefields.
By construction, every vertex and edge of $D_i$ lies on an $s_i$--$t_i$ path.
Moreover, every $s_i$--$t_i$ path corresponds to a pure allocation of player $i$, and every pure allocation corresponds to such a path.

Let $\cF_i$ be the polytope of unit $s_i$--$t_i$ flows in $D_i$.
The graph $D_i$ has $O(mB_i)$ vertices and $O(mB_i^2)$ edges, and therefore $\cF_i$ has a polynomial-size linear description.
Let
\[
    B = \max_{i \in [n]} B_i,
    \qquad
    d_i = |E_i|,
    \qquad
    d = \sum_{i \in [n]} d_i.
\]
Thus, $d = O(m n B^2)$, which is polynomial in the input length under the unary-budget representation.
For notational convenience, we extend every flow $f_i \in \cF_i$ by setting $f_i(e)=0$ for each edge $e \notin E_i$.
For a flow $f_i \in \cF_i$, define
\[
    p_{ijr}(f_i)
    =
    \sum_{b=0}^{B_i-r}
    f_i\bigl((j-1,b),(j,b+r)\bigr).
\]
This is the marginal probability that player $i$ allocates exactly $r$ resources to battlefield $j$.

The flow representation is equivalent to a mixed Blotto strategy.
Given any mixed strategy over pure allocations, send the probability of each pure allocation along its corresponding $s_i$--$t_i$ path in $D_i$.
This gives a unit flow $f_i$, and the quantities $p_{ijr}(f_i)$ are exactly the battlefield marginals of the mixed strategy.
Conversely, since $D_i$ is acyclic, every unit flow $f_i\in\cF_i$ can be decomposed in polynomial time into a convex combination of $s_i$--$t_i$ paths~\cite{ahuja1993network}.
This decomposition gives an explicit mixed strategy over pure allocations that induces the same flow.

We will focus on the compact flow representation rather than the path decomposition.
This loses no payoff-relevant information.
In a Blotto game, payoffs are additive over battlefields, and the payoff contribution from battlefield $j$ depends only on how many resources each player allocates to battlefield $j$.
Therefore expected payoffs are determined by the battlefield marginals $p_{ijr}$.
Thus a flow profile $f=(f_i)_{i\in[n]}$ is a polynomial-size representation of the mixed-strategy profile for the purposes of computing utilities and best responses.

\paragraph{Payoff computation.}
We next show that expected utilities can be computed efficiently from the flow representation under uniform tie-breaking.
Fix player $i$, battlefield $j$, and an amount $r \in \{0\} \cup [B_i]$.
We interpret $p_{kjr}(f_k) = 0$ whenever $r > B_k$.
For an opponent $k \neq i$, define
\[
    L_{kjr}(f_k) = \sum_{\ell < r} p_{kj\ell}(f_k).
\]
Thus $L_{kjr}$ is the probability that opponent $k$ allocates less than $r$ to battlefield $j$.

If player $i$ allocates $r$ resources to battlefield $j$, then she receives a positive share of battlefield $j$ exactly when no opponent allocates more than $r$ resources to $j$.
Conditional on this event, if exactly $t$ opponents allocate exactly $r$ resources, then uniform tie-breaking gives player $i$ a share $1 / (t + 1)$ of the battlefield.

Order the opponents of player $i$ as $k_1, \ldots, k_{n - 1}$.
For $s = 0, \ldots, n - 1$ and $t = 0, \ldots, s$,
let $c_t^{(s)}$ be the probability that,
among the first $s$ opponents,
exactly $t$ allocate exactly $r$ resources to battlefield $j$,
and all remaining opponents among these allocate less than $r$.
Initialize
\[
    c_0^{(0)} = 1,
    \qquad
    c_t^{(0)} = 0 \quad \text{for } t > 0.
\]
For $s = 1, \ldots, n - 1$, set
\[
    c_t^{(s)}
    = c_t^{(s - 1)} L_{k_sjr}(f_{k_s})
    + c_{t - 1}^{(s - 1)} p_{k_sjr}(f_{k_s}),
\]
where out-of-range terms are interpreted as zero.
The two terms of the above formula capture the two cases where player $k_s$ puts less than $r$ and equal to $r$ resources, respectively.
Then the expected fraction of battlefield $j$ won by player $i$, conditional on allocating $r$ to $j$, is
\[
    h_{ijr}(f_{-i}) = \sum_{t = 0}^{n - 1} \frac{c_t^{(n - 1)}}{t + 1}.
\]
For $r = 0$, this formula gives $h_{ij0} = \prod_{k \neq i} p_{kj0}/n$, so it also accounts for the uniform tie when every player allocates zero resources.
For an edge $e = ((j - 1, b), (j, b + r))$ of $D_i$, define
\[
    C_{i, e}(f_{-i}) = v_{ij} h_{ijr}(f_{-i}).
\]
Then player $i$'s expected utility in the flow profile $f = (f_i)_{i \in [n]}$ can be written as
\[
    U_i(f) = \sum_{e \in E(D_i)} f_i(e) C_{i, e}(f_{-i}),
\]
where $E(D_i)$ denotes the set of edges of $D_i$.
In particular, for fixed $f_{-i}$, the function $U_i(\cdot, f_{-i})$ is linear in player $i$'s own flow.

\begin{lemma}\label{lm:compact-payoff}
For every player $i$, arithmetic circuits for $U_i$ and $\nabla U_i$ can be constructed in polynomial time and have size $O(m n B^2 + m n^2 B)$.
Their values at rational inputs can be computed exactly in time polynomial in the game input length and the encoding length of the query.
\end{lemma}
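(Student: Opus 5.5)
The plan is to build the circuit for $U_i$ bottom-up in three layers and then get the gradient from the Baur--Strassen theorem~\cite{baur1983complexity}.

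\emph{Marginals.} The input gates are the flow variables $f_k(e)$ for all $k\in[n]$ and $e\in E_k$. For each opponent $k$, battlefield $j$ and amount $r\le B_k$, the marginal $p_{kjr}(f_k)$ is a sum of at most $B_k+1$ edge variables. Each edge of $D_k$ feeds exactly one marginal, so all marginals together cost $O(d)=O(mnB^2)$ addition gates. The cumulative quantities $L_{kjr}$ are prefix sums of these marginals, computed incrementally in $r$ with one gate per triple, adding $O(mnB)$ gates.

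\emph{Shares.} For fixed $i$, $j$ and $r\le B_i$, unroll the dynamic program for $c^{(s)}_t$ over $s=1,\dots,n-1$ and $t\le s$. Each step of the recurrence uses a constant number of $+$ and $\times$ gates, so one triple costs $O(n^2)$ gates. To reach the claimed bound I would organize the computation more carefully. One option is to note that each $h_{ijr}$ is a symmetric combination over opponents. The simpler option, which I would try first, is to evaluate the recurrence only up to the needed depth and share work across the $r$ values. The target is $O(n^2B)$ gates per battlefield, hence $O(mn^2B)$ overall. The final combination $h_{ijr}=\sum_t c^{(n-1)}_t/(t+1)$ uses the rational constants $1/(t+1)$, which are allowed as constant gates. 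I also need to check that the $r=0$ case is handled correctly; it reproduces $\prod_k p_{kj0}/n$, consistent with the standard convention.

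\emph{Utility.} Form $C_{i,e}=v_{ij}h_{ijr}$ for each edge $e$ of $D_i$, then $U_i=\sum_{e}f_i(e)C_{i,e}$, which adds $O(d_i)=O(mB^2)$ gates. Correctness of $U_i$ is the argument already given: payoffs are additive over battlefields and depend only on the per-battlefield marginals. The marginals of independent players are independent, so $c^{(s)}_t$ has the stated probabilistic meaning by induction on $s$, splitting on whether opponent $k_s$ plays less than $r$ or exactly $r$.

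\emph{Gradient and evaluation.} Baur--Strassen gives a circuit for all partial derivatives of $U_i$ with size at most a constant times the size of the circuit for $U_i$. This yields $\nabla U_i$ with respect to every $f_k(e)$ within the same asymptotic bound. For exact rational evaluation, the circuit has constant-encoded coefficients, only $+,\times$ gates, and division only by constants. Moreover, it has polynomial depth in the relevant sense: each $c^{(s)}_t$ is a polynomial of degree at most $n-1$ in the marginals. Hence bit-lengths grow at most polynomially, and I would evaluate it as a polynomial of degree $\mathrm{poly}$ with polynomially many terms in the multilinear expansion per step.

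The main obstacle is this bit-length argument. The naive repeated-squaring worry does not apply, but it must be argued that every gate computes a polynomial of polynomially bounded degree in the inputs with rational coefficients of bounded denominator. The denominator is bounded by $\mathrm{lcm}(1,\dots,n)$ times the product of the value denominators. The numerators then have polynomial bit-length, so evaluation takes polynomial time.
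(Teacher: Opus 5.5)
Your circuit is the paper's: marginals as sums of edge flows, incremental prefix sums $L_{kjr}$, the unrolled recurrence for $c^{(s)}_t$, the coefficients $C_{i,e}=v_{ij}h_{ijr}$, and Baur--Strassen for $\nabla U_i$. One correction on the size count: no reorganization is needed. You already observed that one pair $(j,r)$ costs $O(n^2)$ gates, and for a fixed player $i$ there are only $m(B_i+1)=O(mB)$ such pairs. The straightforward circuit therefore already uses $O(mn^2B)$ gates for the shares, which is exactly how the paper counts. You should drop the vague ideas about symmetric combinations or sharing work across $r$; they are unnecessary.

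The real gap is the exact-evaluation step, which you correctly flag as the delicate point but do not actually prove. Knowing that each gate computes a polynomial of degree at most $n$ whose coefficients have bounded denominators does not by itself bound the numerator of that gate's value at a rational query. Three things are missing. First, you need a bound on the magnitude of the value. Second, the query's own denominators enter the value's denominator, raised to the degree. Third, the bound must hold for every intermediate gate of the Baur--Strassen circuit, not only for the output $U_i$.

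The paper closes this directly. At a rational query, it writes each opponent's marginals and prefix sums over one common denominator. In each of the $n-1$ stages of the recurrence, the common denominator is multiplied by the new opponent's denominator. Each new numerator is a sum of two products of old numerators with that opponent's numerators, so bit-lengths grow by only a polynomial amount per stage. The reverse pass only adds adjoints and multiplies them by forward values that have already been computed, so the same control carries over to the derivative circuit.

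If you want to keep your degree-based route, you can finish it via multilinearity. $U_i$ is linear in each player's flow block, so every monomial uses at most one edge variable per player. There are therefore at most $\prod_k |E_k|$ monomials, which has bit-length $O(n\log d)$, with coefficients of the form $v_{ij}/(t+1)$. This gives a polynomial bound on both the numerator and the denominator (the latter via the product of the players' common denominators). For the reverse-mode gates, observe that the adjoint of a gate of formal degree $\delta$ has formal degree at most $\deg U_i-\delta$. With that, the same size-and-degree bound applies to every gate.
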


\begin{proof}
Fix player $i$.
Each opponent's marginal $p_{kjr}$ is a sum of at most $B + 1$ edge flows.
There are $O(m n B)$ such marginals, so computing all of them uses $O(m n B^2)$ arithmetic gates.
For each opponent $k$ and battlefield $j$, computing the prefix sums $L_{kjr}$ successively as $r$ increases uses $O(B)$ further gates, giving $O(m n B)$ additional gates in total.
Thus all the marginals and prefix sums together use $O(m n B^2)$ gates.

For a fixed pair $(j, r)$, the dynamic program defining $h_{ijr}$ has $\sum_{s = 0}^{n - 1} (s + 1) = n(n + 1)/2$ states.
Each state requires only a constant number of additions and multiplications, and the final sum defining $h_{ijr}$ uses $O(n)$ more gates.
Thus each conditional share requires $O(n^2)$ gates.
There are at most $m (B + 1) = O(m B)$ pairs $(j, r)$, so computing all these shares uses $O(m n^2 B)$ gates.
Once the shares are available, forming the coefficients $C_{i, e}$ and taking their weighted sum with the edge flows uses $O(d_i)$ gates.
The total circuit size for $U_i$ is therefore
\[
    O(m n B^2 + m n^2 B + d_i) = O(m n B^2 + m n^2 B),
\]
where we use $d_i = O(m B^2)$.
The circuit can be constructed in polynomial time, and all its rational constants have polynomial bit length.
The Baur--Strassen theorem~\cite[Theorem~2]{baur1983complexity} gives a circuit for $\nabla U_i$ with only a constant-factor increase in size.

At a rational query, each opponent's marginals and prefix sums have polynomial bit length and can be written over a common denominator, obtained by multiplying the denominators of her flow coordinates.
For each fixed pair $(j, r)$, the recurrence for $c_t^{(s)}$ has $n - 1$ stages.
At each stage, the common denominator is multiplied by that of the new opponent, and each new numerator is a sum of two products of previous numerators with that opponent's numerators.
Numerator and denominator lengths therefore increase by at most a polynomial amount per stage, so every state in each of these recurrences has polynomial bit length.
The sums defining $h_{ijr}$ and $U_i$ combine polynomially many such values with rational coefficients of polynomial bit length, so the bound extends to the entire utility circuit.
The reverse differentiation construction uses only additions and multiplications by already computed values of the utility circuit.
The same bit-length argument therefore applies to the derivative circuit, since there are only polynomially many such operations.
Thus both utilities and their derivatives can be evaluated exactly in polynomial time.
\end{proof}

\paragraph{Membership for approximate NE.}

Each player's utility is linear in her own flow, so the game is concave.
We apply the following result of Papadimitriou, Vlatakis-Gkaragkounis, and Zampetakis~\cite[Theorem~4.9 and Lemma~E.3]{papadimitriou2023computational}.
We state it for a product of strategy sets, with utilities and gradients represented by circuits.\footnote{The original search problem also allows certificates that some of the required conditions fail, instead of an approximate equilibrium.
We verify all the required conditions for our game below, so these alternative outputs cannot occur.}

\begin{theorem}
\label{thm:concave-games-so}
Consider a game with feasible region $\mathcal Z = \prod_i \mathcal Z_i \subseteq [-1, 1]^q$, where $q$ is polynomial in the input length.
Suppose $\mathcal Z$ is compact and convex, contains the origin, and is given by a polynomial-time strong separation oracle.
Suppose $\mathcal Z$ contains a Euclidean ball of radius $r > 0$ and lies in the ball of radius $R \le 1$ centered at the origin.
The radii and the center coordinates of the inner ball are given as rationals of polynomial bit length.
On $[-1, 1]^q$, each utility is continuously differentiable, takes values in $[0, 1]$, and is concave in the corresponding player's own variables.
The utilities and their gradients are given by polynomial-size circuits that can be evaluated in polynomial time at rational queries.
A common Lipschitz bound for the utilities on this box, with respect to the Euclidean norm, is given as a rational of polynomial bit length.
Then computing a rational $\eps$-NE in $\mathcal Z$ is in PPAD for every positive rational $\eps$ given in binary.
\end{theorem}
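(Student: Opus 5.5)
Since this theorem is the PPAD-membership result of~\cite{papadimitriou2023computational}, the plan is to retrace their argument: reduce the equilibrium condition to an approximate fixed point of a gradient-projection map and place the fixed-point search in PPAD through a Sperner/\textsc{End-of-Line} argument. Define
\[
    F(z) = \Pi_{\mathcal Z}\bigl(z + \eta\, g(z)\bigr),
    \qquad
    g(z) = \bigl(\nabla_{z_i} u_i(z)\bigr)_i,
\]
with a step $\eta>0$ to be chosen. Because $\mathcal Z=\prod_i \mathcal Z_i$, the Euclidean projection splits player by player. The first step is the standard variational-inequality bound. If $\|F(z)-z\|\le\delta$, the obtuse-angle property of projection gives $\langle g_i(z), z_i'-z_i\rangle \le O(\delta(1/\eta+\|g\|))$ for all $z_i'\in\mathcal Z_i$, since $\mathrm{diam}\le 2R\le 2$. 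Concavity of $u_i$ in $z_i$ then gives $u_i(z_i',z_{-i})-u_i(z)\le\langle g_i(z),z_i'-z_i\rangle$. The gradient norm is bounded by the given Lipschitz constant $L$. So choosing $\eta=1$ and $\delta=\Theta(\eps/(1+L))$ makes every $\delta$-approximate fixed point an $\eps$-NE.

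The second step is to make $F$ computable by a polynomial-size procedure with polynomial bit-length. The gradients come directly from the given circuits. The projection onto $\mathcal Z$ is a convex quadratic minimization over a set given only by a strong separation oracle. It is well-rounded: it contains a ball of radius $r$ with known center and lies inside the ball of radius $R$. The ellipsoid method (Grötschel--Lovász--Schrijver) therefore computes a rational point within any inverse-exponential distance $\gamma$ of the true projection in polynomial time. Projection is $1$-Lipschitz, so this additive error only shifts the displacement by $\gamma$, and it can be absorbed into $\delta$.

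The third step places the search for an approximate fixed point in PPAD. I would first extend the map to the box $[-1,1]^q$ by composing with projection onto $\mathcal Z$, so that it maps the box into $\mathcal Z$. Then I would triangulate the box with mesh $\mu$, color vertices by the Kuhn/Sperner rule according to the sign pattern of $F(z)-z$, and use that finding a panchromatic simplex is in PPAD. A panchromatic simplex gives nearby vertices $z^0,\dots,z^q$ whose displacements jointly "surround" zero.

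I expect the main obstacle to be converting such a simplex into a single $\eps$-NE. The hypotheses bound the Lipschitz constant of the utilities but not of their gradients, so $F$ need not be Lipschitz with a usable constant, and closeness of vertices does not transfer a small displacement to one point. My first approach would be to pick the vertex with the smallest displacement and argue directly with the concavity inequality at that vertex. I would control the change in the linearized gain $\langle g_i(z^k),z_i'-z^k_i\rangle$ across the simplex by the change in utilities ($L\mu$), not by gradient changes, via the concavity sandwich $u_i(z')-u_i(z^k)\le\langle g_i(z^k),z'-z^k\rangle$. Failing that, I would follow~\cite{papadimitriou2023computational} and allow a violation certificate, namely two nearby points exhibiting gradient variation beyond the bound implied by smoothness, as an alternative PPAD solution. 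This is exactly the alternative output mentioned in the footnote. For the Blotto application it can never occur, since each $U_i$ is a polynomial of bounded degree on the box and its gradient is explicitly Lipschitz.
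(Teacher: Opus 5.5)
The paper does not prove this theorem; it imports it from \cite{papadimitriou2023computational}. Measured against the statement as written, your plan has a genuine gap exactly where you suspect one, and neither of your remedies closes it. Your first two steps are sound. The third fails because only the utilities come with a Lipschitz bound, so the gradient field $g$, and with it $F(z)=\Pi_{\mathcal Z}(z+g(z))$, has no a priori modulus of continuity. A Sperner argument turns a panchromatic simplex into an approximate fixed point only by transferring displacement information between nearby vertices through such a modulus.

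Your first patch does not avoid this. The concavity inequality $u_i(z_i',z^k_{-i})-u_i(z^k)\le\langle g_i(z^k),z_i'-z^k_i\rangle$ bounds the gain at $z^k$ by the linearized gain at that same point, so you still need one vertex with small displacement, which a rapidly varying field need not supply. Concretely, take one player with $u(x)=c(x_1+x_2)-M|\langle a,x\rangle-s^*|$, where:
\begin{itemize}
\item the kink is smoothed at a scale far below the mesh;
\item $a=(2,-1)/\sqrt5$ and $M\ge 3c$;
\item $c,M$ are small enough that the projection is inactive near the simplex;
\item $s^*=\langle a,z+\mu(e_1+e_2)\rangle$ for the Kuhn simplex $\{z,\,z+\mu e_1,\,z+\mu(e_1+e_2)\}$.
\end{itemize}
The gradients at the three vertices are $(c,c)+Ma$, $(c,c)-Ma$ and $(c,c)$. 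Under the usual rule (label $0$ if all coordinates are nonnegative, otherwise the index of the first negative one) they receive labels $2$, $1$, $0$. Yet all three have inner product $3c$ with $(1,2)$. So if the simplex is at distance $\Omega(1)$ from the boundary of $\mathcal Z$, every vertex has regret $\Omega(c)$, whatever the mesh, once the smoothing scale is below it.

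Your fallback changes the problem. The statement has no gradient-smoothness constant in its input and asks for an $\eps$-NE, not for an equilibrium or a violation. A certificate of gradient variation beyond the bound therefore refers to a bound that is not there. That the Blotto utilities happen to have Lipschitz gradients does not prove the theorem as stated. (They are also not of bounded degree, since $U_i$ has degree about $n$, though a second-derivative bound of polynomial bit length could be derived the same way the paper bounds first derivatives.)

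What is missing is a fixed-point map whose continuity follows from the utility Lipschitz bound $L$ and own-concavity alone. One choice is the regularized best response
\[
    \phi_i(z)=\arg\max_{y_i\in\mathcal Z_i}\Bigl\{u_i(y_i,z_{-i})-\tfrac{\lambda}{2}\|y_i-z_i\|^2\Bigr\}.
\]
If $\|z-z'\|\le\mu$, the two objectives differ uniformly by at most $(L+2\lambda)\mu$. Then $\lambda$-strong concavity gives $\|\phi_i(z)-\phi_i(z')\|\le 2\sqrt{(L+2\lambda)\mu/\lambda}$, an explicit H\"older modulus (a quantitative Berge maximum theorem). Conversely, suppose $\|\phi(z)-z\|\le\delta$. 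Comparing $\phi_i(z)$ with $z_i+t(y_i-z_i)$ and using concavity of $u_i$ in $y_i$ gives $u_i(y_i,z_{-i})-u_i(z)\le L\delta/t+2\lambda t$ for every $y_i\in\mathcal Z_i$ and $t\in(0,1]$. Hence such a $z$ is an $O(\sqrt{\lambda L\delta})$-NE. Each $\phi_i$ can be evaluated to inverse-exponential accuracy by the ellipsoid method, using the separation oracle, the two balls and the gradient circuits, since the objective is concave with a known gradient. Strong concavity converts the objective error into distance. With $\phi$ in place of $F$, your extension to the box and your Sperner step go through unchanged, and no smoothness constant or violation output is needed.
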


With the strong separation oracle given by the linear flow constraints, the theorem returns an exactly feasible profile, as required for valid Blotto strategies.
We now show that the flow game satisfies the assumptions of \cref{thm:concave-games-so}.

\smallskip
\noindent\emph{Reduced coordinates and separation.}
The inner-ball assumption does not hold in the edge-flow coordinates: flow conservation confines the feasible region to a lower-dimensional affine subspace.
We instead keep only the independent coordinates, using a spanning tree to recover the other edge flows from flow conservation.
This gives another description of the same feasible strategies.
Scaling these coordinates then puts the feasible region inside the unit ball.
The following lemma makes this construction precise.

\begin{lemma}
\label{lm:membership:coordinates}
For each player $i$, let $q_i = d_i - |V_i| + 1$, and set $q = \sum_i q_i \le d$.
In polynomial time, we can construct polytopes $\mathcal Z_i \subseteq \mathbb R^{q_i}$ and affine bijections from $\mathcal Z_i$ to $\cF_i$ of the form
\[
    z_i \longmapsto f_i = d T_i z_i + f_i^0,
\]
for some $d_i \times q_i$ matrices $T_i$ with entries in $\{-1, 0, 1\}$ and unit path flows $f_i^0 \in \cF_i$.
The product $\mathcal Z = \prod_i \mathcal Z_i$ is compact and convex, contains the origin, lies in the unit ball, and admits a polynomial-time strong separation oracle.
If $q > 0$, it also contains a Euclidean ball of radius $1/(d^2 (B + 1)^m)$
whose rational center can be computed in polynomial time.
The center coordinates and radius have polynomial bit length.
\end{lemma}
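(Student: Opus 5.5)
We build the coordinates from a spanning tree of each player's graph and then scale them. Fix player $i$. The graph $D_i$ is connected, so we choose a spanning tree $T$ of its underlying undirected graph. The $q_i = d_i - |V_i| + 1$ non-tree edges each close a unique fundamental cycle, and these cycles give a basis of the circulation space of $D_i$. Every $g$ in this space satisfies $g = \sum_{e \notin T} g(e)\, \chi_{C_e}$, where $\chi_{C_e}\in\{-1,0,1\}^{d_i}$ is the signed incidence vector of the fundamental cycle of $e$.

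Next we fix a base point. We pick any $s_i$--$t_i$ path, for instance the one allocating all $B_i$ resources to battlefield $1$, and let $f_i^0$ be its unit path flow. Every $f_i\in\cF_i$ differs from $f_i^0$ by a circulation $g$. We set $z_i(e) = g(e)/d$ for each non-tree edge $e$ and let the columns of $T_i$ be the vectors $\chi_{C_e}$. Then $f_i = dT_iz_i + f_i^0$, and the map $z_i\mapsto f_i$ is an affine bijection from $\mathcal Z_i := \{z_i : dT_iz_i+f_i^0 \ge 0\}$ onto $\cF_i$. Here $\mathcal Z_i$ is a polytope described by $d_i$ explicit linear inequalities, and it contains $0$ because $f_i^0\in\cF_i$.

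The easy properties follow directly. Every coordinate satisfies $|g(e)|\le 1$, so $|z_i(e)|\le 1/d$. Hence $\|z\|_2 \le \sqrt{q}/d \le 1$, using $q\le d$. The product $\mathcal Z$ is therefore compact, convex and inside the unit ball. For separation, given a rational $z$ we compute $f=dT z+f^0$ and check the $d$ nonnegativity constraints. If some constraint $f_i(e)<0$ is violated, that row is a separating hyperplane with polynomial bit length; otherwise $z$ is feasible. This gives a polynomial-time strong separation oracle.

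The main obstacle is the inner ball, and the plan is to take the center $\bar f_i$ to be a strictly positive flow. Let $N_i$ be the number of $s_i$--$t_i$ paths, and let $\bar f_i(e)$ be the fraction of paths through $e$; equivalently, $\bar f_i$ is the flow of the uniform distribution over pure allocations. Each edge lies on some path, so $\bar f_i(e) \ge 1/N_i$. We have $N_i = \binom{m+B_i-1}{B_i}\le (B+1)^m$. The counts are computed by forward and backward path-counting dynamic programs, whose numbers have polynomial bit length under unary budgets. The center is then $\bar z_i = (\bar f_i - f_i^0)$ restricted to the non-tree coordinates and divided by $d$.

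We then bound the radius. Take $\|z-\bar z\|_2\le r$ with $r = 1/(d^2(B+1)^m)$. For each edge, $|(dT(z-\bar z))(e)| \le d\,\|T_i\|_{\infty}\,\|z-\bar z\|_\infty \le d\cdot q_i\cdot r$, since each row of $T_i$ has at most $q_i$ entries of absolute value $1$. This is at most $d^2 r = 1/(B+1)^m$ and so at most $\bar f_i(e)$, hence every edge flow stays nonnegative and the ball lies in $\mathcal Z$. If the inequality is tight on some edge, we halve $r$ or use $N_i<(B+1)^m$ strictly; the constants are the only part needing care. Applying this per player gives a ball in the product, and the stated radius and center have polynomial bit length.
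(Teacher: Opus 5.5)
Your proposal is correct and follows essentially the same route as the paper. It uses non-tree-edge coordinates of a spanning tree scaled by $1/d$, the polytope cut out by edge nonnegativity with separation via a violated row, and a strictly positive center flow with every edge value at least $(B+1)^{-m}$, together with a $d^2$ amplification bound. The remaining differences are cosmetic: you take an arbitrary tree with the fundamental-cycle basis rather than a tree containing the base path with cut equations, and the uniform distribution over paths rather than equal splitting. Your closing hedge is unnecessary: $d^2 r = (B+1)^{-m} \le 1/N_i \le \bar f_i(e)$ already gives $f_i(e) \ge 0$ at the stated radius, whereas halving $r$ would no longer match the claimed radius.
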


\begin{proof}
For each player $i$, choose a directed $s_i$--$t_i$ path in $D_i$ and let $f_i^0$ denote the $s_i$--$t_i$ flow where the entire one unit of flow goes through this path.
Extend the edges of this path to a spanning tree of the underlying undirected graph of $D_i$.
We use the flows on the $q_i$ edges of $D_i$ outside this tree as the independent coordinates.
Let $z_i$ consist of the flows on these non-tree edges divided by $d$.

Given any vector $z_i \in \mathbb R^{q_i}$, assign the values $d z_i$ to the non-tree edges.
To determine the value on an edge $e$ of the tree, delete $e$ from the tree and sum the required flow conservation equations over either resulting component.
Only $e$ and non-tree edges cross this cut.
If $e$ lies on the chosen $s_i$--$t_i$ path, the cut separates $s_i$ from $t_i$ and must carry one unit of net flow in the direction of $e$; otherwise, the net flow is zero.
Set $f_i(e)$ to the value required by this equation: $f_i^0(e)$ plus a signed sum of the assigned values on the non-tree edges crossing the cut.
Each assignment satisfies its cut equation independently of the values assigned to the other tree edges.
Together these assignments give $f_i = d T_i z_i + f_i^0$, with every entry of $T_i$ in $\{-1, 0, 1\}$.
Both $T_i$ and $f_i^0$ can be constructed in polynomial time.

The reconstructed vector $f_i$ also satisfies the flow conservation equations at every vertex.
To verify the equation at a vertex $v$, delete $v$ from the tree.
Each remaining component was joined to $v$ by a single tree edge, so the reconstructed flows satisfy the cut equation for that component.
When we sum these equations, the terms for edges between components cancel, leaving only the terms for edges entering or leaving $v$.
Since total supply equals total demand, this sum gives the conservation equation at $v$.

Thus, for every $z_i \in \mathbb R^{q_i}$, the reconstructed vector $f_i$ satisfies the flow conservation equations for a unit $s_i$--$t_i$ flow.
To require nonnegative flow on every edge, we define
\[
    \mathcal Z_i = \{z_i \in \mathbb R^{q_i} : d T_i z_i + f_i^0 \ge 0\}.
\]
For every $z_i \in \mathcal Z_i$, the defining inequalities give $f_i(e) \ge 0$ for every edge $e \in E_i$.
Together with the flow conservation equations established above, this shows that the reconstructed vector $f_i$ belongs to $\cF_i$.
Conversely, every $f_i \in \cF_i$ satisfies the cut equations, so the reconstruction recovers $f_i$ when $z_i$ consists of the flows of $f_i$ on the non-tree edges divided by $d$.
The reconstruction is therefore a bijection from $\mathcal Z_i$ to $\cF_i$.

At $z = 0$, the reconstruction gives $f_i = f_i^0 \in \cF_i$ for every player $i$, so $0 \in \mathcal Z$.
Each $D_i$ is acyclic, so every feasible unit flow assigns at most $1$ to each edge.
Hence
\[
    \mathcal Z \subseteq [0, 1/d]^q,
    \qquad
    \|z\|_2 \le \sqrt q / d \le 1
    \quad\text{for all $z \in \mathcal Z$.}
\]
Thus $\mathcal Z$ is compact and convex and lies in the unit ball.

The inequalities defining the polytopes $\mathcal Z_i$ also give a strong separation oracle for $\mathcal Z$.
Given a query $z$, we reconstruct each player's flow and check whether all edge flows are nonnegative.
If they are, then $z \in \mathcal Z$.
Otherwise, choose a player $i$ and an edge $e \in E_i$ with negative reconstructed flow.
We form a vector in $\mathbb R^q$ by placing the negated row of $T_i$ for $e$ in player $i$'s coordinates and zeros in all other coordinates.
This vector separates $z$ from $\mathcal Z$.
The row for $e$ is nonzero, since a zero row would give the nonnegative flow $f_i^0(e)$ at every query.
Its entries are in $\{-1, 0, 1\}$, so the separating vector has infinity norm one.
The oracle uses $O(d^2) = O(m^2 n^2 B^4)$ arithmetic operations and comparisons and runs in polynomial time at rational queries.

To obtain an inner ball when $q > 0$, we construct for each player a feasible flow that is positive on every edge.
We will use their reduced coordinates as the center and bound how far these coordinates can change while keeping every edge flow nonnegative.
For each player $i$, form a unit flow $\bar f_i$ by starting with one unit at $s_i$ and splitting the flow reaching each nonterminal vertex equally among its outgoing edges.
Every vertex has at most $B + 1$ outgoing edges, so each split sends at least a $1 / (B + 1)$ fraction of the flow along each outgoing edge.
Every edge lies on an $s_i$--$t_i$ path of $m$ edges, so following this path gives the lower bound
\[
    \bar f_i(e) \ge \delta := (B + 1)^{-m}
    \qquad\text{for every $e \in E_i$.}
\]

Let $\bar z = (\bar z_i)_i$, where $\bar z_i$ consists of the flows $\bar f_i(e)$ on the non-tree edges divided by $d$.
Since the entries of $T_i$ have absolute value at most $1$, the flow reconstructed from any point $z$ satisfies
\[
    |f_i(e) - \bar f_i(e)| \le d \sqrt q \, \|z - \bar z\|_2 \le d^2 \|z - \bar z\|_2
    \qquad\text{for every $e \in E_i$.}
\]
If $\|z - \bar z\|_2 \le \delta / d^2$, each edge flow changes by at most $\delta$ and therefore remains nonnegative.
The ball with center $\bar z$ and radius $\delta / d^2$ is thus contained in $\mathcal Z$.

We compute the flows $\bar f_i$ layer by layer using polynomially many additions and divisions.
Along any path, the splitting procedure makes at most $m$ divisions by integers at most $B + 1$.
Each of these integers divides $(B + 1)!$, so every intermediate flow value can be written over the denominator $((B + 1)!)^m$.
These values lie in $[0, 1]$, so their numerators and denominators have polynomial bit length under the unary encoding of the budgets.
Thus the flows $\bar f_i$ and their reduced coordinates $\bar z_i$ can be computed in polynomial time.
The center coordinates can be written over the common denominator $d ((B + 1)!)^m$, and the radius is $1 / (d^2 (B + 1)^m)$, so both have polynomial bit length.
\end{proof}

Under this change of coordinates, we regard $U_i$ as a function of the reduced profile $z$, obtained by substituting $f_k = d T_k z_k + f_k^0$ for every player $k$.
Computing all reconstructed edge flows uses $O(d^2)$ gates.
By the chain rule, each derivative with respect to a reduced coordinate is a linear combination of the derivatives with respect to the edge flows; computing all these combinations for one utility uses $O(d^2)$ further gates.
Together with \cref{lm:compact-payoff}, this gives circuits for each player's utility and gradient of size
\[
    O(m n B^2 + m n^2 B + d^2) = O(m^2 n^2 B^4),
\]
where we use $d = O(m n B^2)$.
The reconstruction and the linear combinations use rational coefficients of polynomial bit length, so these circuits can still be evaluated exactly in polynomial time at rational queries.
The utility $U_i$ is a polynomial, hence continuously differentiable, and is affine in $z_i$ for every fixed $z_{-i}$.
In particular, the required concavity holds throughout $[-1, 1]^q$.

\smallskip
\noindent\emph{Bounds on utility values and Lipschitz constants.}
The normalization of Blotto payoffs only bounds the utilities on feasible profiles, whereas \cref{thm:concave-games-so} requires bounds on utility values and Lipschitz constants on the whole box $[-1, 1]^q$.
Outside $\mathcal Z$, the reconstructed edge flows may be negative, and the utility polynomials need not take values in $[0, 1]$.
We therefore bound these polynomials and their derivatives on the whole box, then rescale the utilities.

Fix an arbitrary point $z \in [-1, 1]^q$, which need not belong to $\mathcal Z$.
In the reconstruction $f_k = d T_k z_k + f_k^0$, each edge flow is a constant in $\{0, 1\}$ plus at most $q \le d$ terms of absolute value at most $d$, since every coordinate of $z$ has absolute value at most $1$.
Thus every reconstructed edge flow has absolute value at most $d^2 + 1$, and each of its partial derivatives with respect to a reduced coordinate has absolute value at most $d$.
Each marginal $p_{kjr}$ or prefix sum $L_{kjr}$ sums at most $(B + 1)^2$ edge flows.
Thus the marginals, prefix sums, and their first partial derivatives all have absolute value at most
\[
    K = (B + 1)^2 (d^2 + 1).
\]
Fix a player $i$, a battlefield $j$, and an allocation amount $r$.
Applying the recurrence for $c_t^{(s)}$ to the reconstructed flows, induction on the number $s$ of opponents gives, for every reduced coordinate $z_a$,
\[
    |c_t^{(s)}| \le (2 K)^s,
    \qquad
    \left|\frac{\partial c_t^{(s)}}{\partial z_a}\right| \le s (2 K)^s.
\]
For the derivative bound, differentiating each of the two products in the recurrence gives one term of absolute value at most $(s - 1) K (2 K)^{s - 1}$ and one at most $K (2 K)^{s - 1}$.
Summing the final states with weights $1 / (t + 1)$ therefore gives
\[
    |h_{ijr}| \le n (2 K)^n,
    \qquad
    \left|\frac{\partial h_{ijr}}{\partial z_a}\right| \le n^2 (2 K)^n.
\]

To bound $U_i$, we sum at most $d$ terms of the form $v_{ij} f_i(e) h_{ijr}$, where $e$ represents allocating $r$ resources to battlefield $j$.
Set
\[
    M = d (d^2 + 1) n (2 K)^n \max\{1, \max_{i, j} v_{ij}\}.
\]
The bounds on the edge flows and shares, together with the product rule, give, for every $z \in [-1, 1]^q$,
\[
    |U_i| \le M,
    \qquad
    \left|\frac{\partial U_i}{\partial z_a}\right| \le 2 M n,
    \qquad
    \|\nabla_z U_i\|_2 \le 2 M n \sqrt q \le 2 M d n.
\]
The number $M$ has polynomial bit length: the power $(2 K)^n$ contributes $O(n \log(K))$ bits, and the other factors have polynomial bit length.

Define the rescaled utilities and accuracy by
\[
    \widehat U_i = \frac{U_i + M}{2 M},
    \qquad
    \widehat\eps = \frac{\eps}{2 M}.
\]
These utilities take values in $[0, 1]$ on the whole box and are $d n$-Lipschitz with respect to the Euclidean norm.
The transformation preserves concavity and the polynomial bounds on circuit size and exact evaluation time.
The bit length of $\widehat\eps$ is polynomial in the game input length and the encoding length of $\eps$, so inverse-exponential accuracy is allowed.
Since rescaling divides each payoff difference by $2 M$, every $\widehat\eps$-NE of the rescaled game is an $\eps$-NE under the original utilities.

\begin{theorem}\label{thm:membership-ne}
Consider a Colonel Blotto game with uniform tie-breaking, unary budgets, and rational battlefield values given in binary, with utilities normalized to $[0, 1]$.
For every positive rational $\eps$ given in binary, computing an $\eps$-NE is in PPAD.
The output can be represented by rational unit flows or by a distribution over at most $d_i = O(m B_i^2)$ pure allocations for each player $i$.
\end{theorem}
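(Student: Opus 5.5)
The plan is to instantiate \cref{thm:concave-games-so} with the reduced-coordinate flow game built above and then translate its output back into Blotto strategies. The input to the theorem is the product region $\mathcal Z=\prod_i\mathcal Z_i$ from \cref{lm:membership:coordinates}; if $q=0$, every player has a unique feasible flow and the profile $f^0$ is trivially an exact NE. The utilities are the rescaled $\widehat U_i$, and the accuracy parameter is $\widehat\eps=\eps/(2M)$.

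We check the hypotheses in order.
\begin{enumerate}
\item \cref{lm:membership:coordinates} shows that $\mathcal Z$ is compact and convex, contains the origin, and lies in the unit ball, so $R=1$. It also gives a polynomial-time strong separation oracle and an inner ball of rational radius $1/(d^2(B+1)^m)$ with a polynomial-bit center.
\item By \cref{lm:compact-payoff} together with the chain-rule extension through $f_k=dT_kz_k+f_k^0$, the utilities and gradients have polynomial-size circuits that can be evaluated exactly at rational queries.
\item Each $U_i$ is a polynomial and is affine in $z_i$, hence concave in the player's own variables on all of $[-1,1]^q$.
\item The bounds derived above show that $\widehat U_i$ takes values in $[0,1]$ on the box. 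They also give the rational Lipschitz constant $dn$ of polynomial bit length.
\item The quantities $\widehat\eps$ and $M$ have polynomial bit length.
\end{enumerate}
With all hypotheses in place, \cref{thm:concave-games-so} places the computation of a rational $\widehat\eps$-NE $z^*\in\mathcal Z$ in PPAD.

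Next we map $z^*$ back to flows $f_i^*=dT_iz_i^*+f_i^0$. By the bijection in \cref{lm:membership:coordinates}, these are exactly feasible unit flows, and they are rational with polynomial bit length. The main point to argue carefully is that an $\widehat\eps$-NE of the concave game in $z$-coordinates is an $\eps$-NE of the Blotto game, where deviations range over all mixed strategies.

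The argument has three parts. First, rescaling multiplies payoff differences by $2M$, so we obtain an $\eps$-NE for $U_i$ over $\mathcal Z_i$. Second, the map $\mathcal Z_i\to\cF_i$ is a bijection, so deviations in $\mathcal Z_i$ are the same as deviations in $\cF_i$. Third, every mixed Blotto strategy $\bx_i'$ induces a flow in $\cF_i$ with the same battlefield marginals. Expected utilities depend only on these marginals, because payoffs are additive over battlefields and opponents randomize independently, and $U_i$ agrees with $u_i$ on feasible profiles. Hence no mixed deviation gains more than $\eps$. The $r=0$ term of $h_{ij0}$ already accounts for standard uniform splitting on empty battlefields, so $U_i$ matches the actual Blotto utility exactly.

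Finally, for the explicit representation, we decompose each $f_i^*$ into path flows in the acyclic graph $D_i$ using the standard flow decomposition~\cite{ahuja1993network}. Each step of the decomposition zeroes out at least one edge, so the result uses at most $d_i=|E_i|=O(mB_i^2)$ paths. The decomposition is computed in polynomial time with rational weights, and each path corresponds to a pure allocation. The resulting distribution induces $f_i^*$, and therefore has the same payoffs and the same equilibrium property.
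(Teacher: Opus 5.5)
Your proposal is correct and follows essentially the same route as the paper: you verify the hypotheses of \cref{thm:concave-games-so} for the rescaled reduced-coordinate flow game, pull the $\widehat\eps$-NE back through the affine bijection to feasible rational unit flows, and decompose each flow into at most $d_i$ paths with the same marginals. The only differences are cosmetic. You dispose of the degenerate case via $q=0$ rather than the paper's $m=1$; these are equivalent here because all budgets are positive. You also use the inner-ball radius $1/(d^2(B+1)^m)$ directly, where the paper conservatively takes half of it.
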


\begin{proof}
If $m = 1$, every player must allocate her entire budget to the only battlefield, so the unique strategy profile is an exact equilibrium.
We therefore assume $m \ge 2$.
Since every budget $B_i$ is positive, each graph $D_i$ then contains at least two distinct $s_i$--$t_i$ paths, so $q_i > 0$ for every player $i$.
By \cref{lm:membership:coordinates}, we can take $R = 1$ and $r = 1/(2 d^2 (B + 1)^m)$ in \cref{thm:concave-games-so}.

Apply \cref{thm:concave-games-so} to the rescaled game constructed above with accuracy $\widehat\eps = \eps / (2 M)$.
It returns a rational feasible profile $z \in \mathcal Z$ of polynomial bit length whose regret under the original utilities is at most $\eps$.
By the affine bijection in \cref{lm:membership:coordinates}, every feasible flow deviation corresponds to a deviation in reduced coordinates with the same payoff.
The reconstructed flow profile is therefore an $\eps$-NE of the flow game.

Decompose each player's flow into paths by repeatedly choosing an $s_i$--$t_i$ path whose edges all carry positive flow and subtracting the minimum flow along that path from each of its edges.
Assign the subtracted amount as the probability of the corresponding pure allocation.
Each step eliminates a positive edge, so the decomposition uses at most $d_i$ paths and takes polynomial time.
The path weights are rational with polynomial bit length.
The resulting distributions over pure Blotto allocations preserve the battlefield marginals and hence the payoffs.
Every mixed Blotto deviation also induces a feasible flow, so these distributions form an $\eps$-NE of the Blotto game.
\end{proof}

\paragraph{Membership for approximate WSNE.}

We obtain PPAD membership for $\eps$-WSNE by pruning a sufficiently accurate approximate NE using \cref{lm:wsne2ne}.
The output in this case is an explicit polynomial-support distribution over pure Blotto allocations.

\begin{corollary}\label{cor:membership-wsne}
Consider a Colonel Blotto game with uniform tie-breaking, unary budgets, and rational battlefield values given in binary, with utilities normalized to $[0, 1]$.
For every positive rational $\eps$ given in binary, computing an $\eps$-WSNE is in PPAD.
\end{corollary}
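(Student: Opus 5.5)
The plan is to reduce to \cref{thm:membership-ne} and then prune the resulting profile with \cref{lm:wsne2ne}, doing all steps in polynomial time. Given $\eps$, set $\delta=\min(\eps,1)$ and $\eps_0=\delta^2/(8n)$. This is a positive rational whose bit length is polynomial in the encoding length of $\eps$ and in $n$. We invoke \cref{thm:membership-ne} with accuracy $\eps_0$. It returns, within PPAD, an $\eps_0$-NE given as rational unit flows $f_i$, together with a path decomposition of each $f_i$ into at most $d_i$ pure allocations with rational weights of polynomial bit length. Since PPAD is closed under polynomial-time postprocessing of solutions, it suffices to convert this profile into an $\eps$-WSNE in polynomial time.

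The pruning step needs two quantities for each player $i$: the best-response value against $f_{-i}$, and the payoff of each supported pure allocation. Both come from the coefficients $C_{i,e}(f_{-i})=v_{ij}h_{ijr}(f_{-i})$, which are computable exactly in polynomial time by \cref{lm:compact-payoff} and the dynamic program for $h_{ijr}$.
\begin{itemize}
\item The payoff of a pure allocation is the sum of $C_{i,e}$ along its $s_i$--$t_i$ path in $D_i$.
\item The best-response value is the maximum-weight $s_i$--$t_i$ path in the acyclic layered graph $D_i$, found by dynamic programming over the layers.
\end{itemize}
We then drop every supported pure allocation whose payoff is more than $\delta/2$ below the best-response value, and renormalize the remaining weights.

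Correctness is exactly \cref{lm:wsne2ne}: a $(\delta^2/8n)$-NE pruned this way is a $\delta$-WSNE, hence an $\eps$-WSNE. Two points need checking.
\begin{itemize}
\item The lemma applies in our setting. It is stated for finite games with payoffs in $[0,1]$, and here the game is the finite Blotto game itself. The decomposed distribution induces the same marginals as the flows, so it is an $\eps_0$-NE of the Blotto game, not just of the flow game. Best responses over $\cA_i$ coincide with maximum-weight paths. So \cref{lm:wsne2ne} applies directly to the explicit polynomial-support profile.
\item Renormalization is well defined. The pruned support is nonempty because, inside the lemma's argument, the removed mass is small (at most $\delta/4<1$ by Markov's inequality on the regret).
\end{itemize}

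The main obstacle I expect is bookkeeping on exactness and bit lengths rather than anything conceptual. The comparison ``more than $\delta/2$ below'' must be decided exactly, which is why I rely on exact rational evaluation of the $C_{i,e}$ from \cref{lm:compact-payoff} rather than approximations. The renormalized weights must also have polynomial bit length, which holds because we divide by a sum of polynomially many polynomial-length rationals. The output is then an explicit distribution over at most $d_i$ pure allocations per player, as required.
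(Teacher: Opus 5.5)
Your proposal is correct and follows essentially the same route as the paper: compute an $(\eps^2/8n)$-NE via \cref{thm:membership-ne}, decompose the flows into polynomial-support distributions, get exact supported payoffs and best-response values from the coefficients $C_{i,e}(f_{-i})$ and a longest-path dynamic program in the acyclic graph $D_i$, then prune and renormalize as in \cref{lm:wsne2ne}. Setting $\delta=\min(\eps,1)$ instead of treating $\eps\ge 1$ separately, as the paper does, is only a cosmetic difference.
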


\begin{proof}[Proof sketch.]
It suffices to consider $0 < \eps < 1$, since every profile is an $\eps$-WSNE when $\eps \ge 1$.
Set $\eta = \eps^2 / (8 n)$, which has polynomial bit length in the input.
By \cref{thm:membership-ne}, computing an $\eta$-NE with polynomial support is in PPAD.

To implement the pruning procedure, we need the payoffs of supported allocations and the best-response values against the original $\eta$-NE.
Fix player $i$, and let $f_{-i}$ be the flows induced by the opponents' strategies in this profile.
Writing the utility in flow coordinates, her payoff from any $g_i \in \cF_i$ is
\[
    U_i(g_i, f_{-i}) = \sum_{e \in E(D_i)} g_i(e) C_{i, e}(f_{-i}).
\]
The payoff computation in \cref{lm:compact-payoff} gives the edge weights $C_{i, e}(f_{-i})$ exactly in polynomial time.
A pure allocation's payoff is the weight of its corresponding path, and the best-response value is the maximum weight of an $s_i$--$t_i$ path in $D_i$.
Since $D_i$ is acyclic, the latter can be computed in polynomial time by the longest-path dynamic program.
We can therefore delete, for each player, the supported allocations whose payoff against the original opponents' profile is more than $\eps / 2$ below her best-response value and renormalize in polynomial time.
By \cref{lm:wsne2ne} and the choice of $\eta$, the resulting profile is an $\eps$-WSNE.
The full proof is provided in \cref{app:proof:membership-wsne}.
\end{proof}

\paragraph{Remark on non-uniform tie-breaking.}

For non-uniform tie-breaking, the main obstacle to the membership proof is computing expected payoffs efficiently.
Under uniform tie-breaking, the dynamic program in \cref{lm:compact-payoff} only needs to keep track of the number of tied opponents.
For a general rule $\tau_j$, the shares may also depend on their identities.
When player $i$ allocates $r$ resources to battlefield $j$, her expected share is
\[
    h_{ijr}
    = \sum_{S \subseteq [n] \setminus \{i\}}
        \tau_j(S \cup \{i\}, i)
        \prod_{k \in S} p_{kjr}
        \prod_{k \notin S \cup \{i\}} L_{kjr}.
\]
Here $S$ is the set of opponents who also allocate $r$ resources, while every other opponent allocates less.
Using the earlier bounds on $p_{kjr}$, $L_{kjr}$, and their derivatives, the displayed formula gives the same bounds on utility values and derivatives on the whole box $[-1, 1]^q$.

The flow polytopes and separation oracles also remain unchanged, and each utility remains affine in the player's own coordinates.
To extend the membership argument, it therefore suffices to construct polynomial-size circuits for the expected-share functions $h_{ijr}$ and their first derivatives in polynomial time, with exact polynomial-time evaluation at rational queries.
Under this condition, both the NE argument and the WSNE pruning procedure apply.
The tie-breaking rules used in the proof of \cref{thm:nonunif} satisfy this condition.
The displayed sum has $2^{n - 1}$ terms and does not by itself supply such circuits.

Computing a player's expected share against mixed strategies requires averaging the shares awarded by the rule over the possible tied sets, weighted by their probabilities.
An efficient procedure for evaluating the rule on one specified tied set does not by itself give an efficient way to compute this average.
In \cref{app:nonuniform-tiebreaking-hardness}, we construct a monotone rule that can be evaluated in polynomial time on any given tied set, but for which computing expected shares exactly allows us to count the independent sets in a graph.

\section{Conclusion and Open Problems}

We studied equilibrium computation in discrete multiplayer Blotto games with player-specific battlefield values.
For uniform tie-breaking, we showed PPAD-completeness of approximate equilibrium computation, with hardness already for three resources per player.
We also showed that the one-resource case is polynomial-time solvable under uniform tie-breaking, but becomes hard under non-uniform monotone tie-breaking.
These results show that the tractability of two-player constant-sum Blotto games does not extend to the multiplayer model with player-specific values.

Our hardness result for uniform tie-breaking uses both multiple players and player-specific battlefield values.
It remains open whether the problem remains hard with only one of these generalizations, either with multiple players or with player-specific values.
The case with player-specific values but a constant number of players also remains open.
Finally, our hardness result uses three resources per player, while the one-resource case is polynomial-time solvable. The two-resource case remains open.

\section*{Acknowledgements}
This project has received funding from the European Research Council (ERC) under the European Union’s Horizon Europe research and innovation programme (grant agreement No.\ 101198689).
ChatGPT 5.5 was used after the initial draft to identify possible errors and typos and to suggest improvements to language, notation, and the presentation of technical results.
Some of these suggestions were incorporated after verification by the authors.

\clearpage
\appendix

\section{Omitted Proofs}
\subsection{Proof of \cref{lm:wsne2ne}}
\label[appendix]{app:proof:wsne2ne}
\begin{proof}
The proof is direct by setting the trivial bound of $d = n$ in \cref{lm:wsne2ne2}.
\end{proof}

\subsection{Proof of \cref{cor:ne2}}
\label[appendix]{app:proof:ne2}
\begin{proof}
We first state the bounded-interaction version of \cref{lm:wsne2ne}.

\begin{lemma}\label{lm:wsne2ne2}
Let $G$ be a finite game with utilities normalized to lie in $[0,1]$.
Suppose that $G$ has dependency degree at most $d$, meaning that for every player $i$, there is a set $N_i \subseteq [n] \setminus \{i\}$ of at most $d$ other players such that player $i$'s payoff depends only on its own action and the actions of the players in $N_i$.
Then, for every $\delta \in (0,1]$, given a $(\delta^2/8d)$-NE of $G$, removing from each player's support every pure strategy whose payoff is more than $\delta/2$ below the best-response payoff, and then renormalizing, gives a $\delta$-WSNE of $G$.
\end{lemma}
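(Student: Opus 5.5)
We prove \cref{lm:wsne2ne2} with the standard support-pruning argument. The plan is to bound how much probability mass gets removed and then how much that removal can shift any player's payoffs. Let $\bx$ be the given $\eta$-NE with $\eta=\delta^2/8d$. For each player $i$, let $u_i^*=\max_{\ba_i}u_i(\ba_i,\bx_{-i})$ be the best-response value. Let $P_i$ be the probability that $\bx_i$ places on pure strategies whose payoff is more than $\delta/2$ below $u_i^*$. Since $u_i(\bx)$ is the $\bx_i$-average of pure payoffs, we get $u_i(\bx)\le u_i^*-P_i\delta/2$. The $\eta$-NE condition gives $u_i^*-u_i(\bx)\le\eta$. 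Hence $P_i\le 2\eta/\delta=\delta/(4d)$.

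Next, define the pruned profile $\by$ by conditioning $\bx_i$ on the retained strategies. A retained strategy has payoff at least $u_i^*-\delta/2>-\infty$, and best responses are always retained, so $\by_i$ is well defined. Conditioning a distribution on an event of probability $1-P_i$ changes it by total variation distance exactly $P_i$. So $\|\by_i-\bx_i\|_{TV}\le P_i\le\delta/(4d)$.

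The key step uses bounded dependency. Fix player $i$ and any pure strategy $\ba_i$. The payoff $u_i(\ba_i,\cdot)$ depends only on the actions of the players in $N_i$, and it takes values in $[0,1]$. Switching the players of $N_i$ one at a time from $\bx_k$ to $\by_k$ (a hybrid argument) changes this payoff by at most the total variation distance at each step. Therefore
\[
|u_i(\ba_i,\by_{-i})-u_i(\ba_i,\bx_{-i})|\le\sum_{k\in N_i}\|\by_k-\bx_k\|_{TV}\le d\cdot\frac{\delta}{4d}=\frac{\delta}{4}.
\]

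Finally, take any $\ba_i$ in the support of $\by_i$ and any deviation $\bx_i'$. It suffices to compare against the best pure deviation $\ba_i'$. Retention gives $u_i(\ba_i,\bx_{-i})\ge u_i(\ba_i',\bx_{-i})-\delta/2$. Applying the $\delta/4$ perturbation bound to each side gives
\[
u_i(\ba_i,\by_{-i})\ge u_i(\ba_i',\by_{-i})-\frac{\delta}{2}-\frac{\delta}{4}-\frac{\delta}{4}=u_i(\ba_i',\by_{-i})-\delta.
\]
So $\by$ is a $\delta$-WSNE.

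The main point to get right is the hybrid step. It has to sum only over the at most $d$ players of $N_i$, not over all $n$, and this is exactly where the dependency degree replaces $n$. The remaining steps are routine constant-tracking. Taking $d=n$ recovers \cref{lm:wsne2ne}.
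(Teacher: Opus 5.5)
Your proof is correct and follows essentially the same route as the paper: bound the pruned mass by $2\eta/\delta=\delta/(4d)$, bound each player's total variation change by that mass, sum only over $N_i$ to get a $\delta/4$ payoff perturbation, and conclude with $\delta/2+\delta/4+\delta/4=\delta$. One small fix: justify well-definedness of $\by_i$ by $P_i\le\delta/(4d)\le 1/4<1$, as the paper does, not by ``best responses are retained,'' because a best response need not lie in the support of $\bx_i$.
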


\begin{proof}
Let $\bx$ be an $\eps$-NE of the game, where
\[
    \eps \le \frac{\delta^2}{8d}.
\]
For every player $i \in [n]$, let $N_i \subseteq [n]\setminus\{i\}$ be a set of
players, with $|N_i|\le d$, such that the utility of player $i$ depends only on
her own action and the actions of players in $N_i$.

For every player $i \in [n]$, let
\[
    M_i = \max_{\ba_i' \in \cA_i} u_i(\ba_i', \bx_{-i})
\]
be the payoff of a best response to $\bx_{-i}$. We call a pure strategy
$\ba_i \in \cA_i$ bad for player $i$ if
\[
    u_i(\ba_i, \bx_{-i}) < M_i - \frac{\delta}{2}.
\]
Let $\cA_i^{\textrm{bad}} \subseteq \cA_i$ be the set of bad strategies of
player $i$, and let
\[
    \gamma_i = \sum_{\ba_i \in \cA_i^{\textrm{bad}}} \bx_i(\ba_i)
\]
be the probability that $\bx_i$ assigns to bad strategies.

Since $\bx$ is an $\eps$-NE, the expected regret of player $i$ is at most
$\eps$. Therefore,
\[
\eps \ge M_i - u_i(\bx)
    = \sum_{\ba_i \in \cA_i} \bx_i(\ba_i)
        \bigl(M_i - u_i(\ba_i, \bx_{-i})\bigr)
    \ge \frac{\delta}{2}\gamma_i \implies \gamma_i \le \frac{2\eps}{\delta}.
\]

We now delete all bad strategies from the support of each player and renormalize.
This gives a new mixed strategy profile $\by = (\by_i)_{i \in [n]}$, where
\[
    \by_i(\ba_i) =
    \begin{cases}
        \dfrac{\bx_i(\ba_i)}{1-\gamma_i}, & \text{if } \ba_i \notin \cA_i^{\textrm{bad}}, \\[1.2ex]
        0, & \text{if } \ba_i \in \cA_i^{\textrm{bad}}.
    \end{cases}
\]
This is well-defined because
\[
    \gamma_i \le \frac{2\eps}{\delta}
    \le \frac{\delta}{4d}
    \le \frac{1}{4},
\]
where we use $d \ge 1$ and $\delta \le 1$.

For each player $k$, the $\ell_1$ distance between $\bx_k$ and $\by_k$ is at
most $2\gamma_k$, and hence their total variation distance is at most
$\gamma_k$. Since player $i$'s utility depends only on the actions of players in
$N_i$, changing the mixed strategies from $\bx$ to $\by$ changes the payoff of
any fixed pure strategy $\ba_i \in \cA_i$ by at most
\[
    \left|u_i(\ba_i, \by_{-i}) - u_i(\ba_i, \bx_{-i})\right|
    \le \sum_{k \in N_i} \gamma_k
    \le d \cdot \frac{2\eps}{\delta}
    =
    \frac{2d\eps}{\delta}.
\]

Now fix a player $i \in [n]$, a pure strategy $\ba_i \in \cA_i$ such that
$\by_i(\ba_i)>0$, and an arbitrary deviation $\ba_i'\in \cA_i$. Since
$\by_i(\ba_i)>0$, the strategy $\ba_i$ is not bad, and therefore
\[
    u_i(\ba_i, \bx_{-i}) \ge M_i - \frac{\delta}{2}.
\]
Also, by the definition of $M_i$,
\[
    u_i(\ba_i', \bx_{-i}) \le M_i.
\]
Using the payoff perturbation bound above, we get
\[
    u_i(\ba_i', \by_{-i}) - u_i(\ba_i, \by_{-i})
    \le
    \left(M_i + \frac{2d\eps}{\delta}\right)
    -
    \left(M_i - \frac{\delta}{2} - \frac{2d\eps}{\delta}\right)
    =
    \frac{\delta}{2} + \frac{4d\eps}{\delta}
    \le \delta,
\]
where the last inequality follows from $\eps \le \delta^2/(8d)$.
Thus every pure strategy in the support of $\by_i$ is a $\delta$-best response to $\by_{-i}$.
Since this holds for every player $i \in [n]$, the mixed strategy profile $\by$ is a $\delta$-WSNE.
\end{proof}

In the restricted version of the Blotto game we consider, the pure strategy set of player $i$ consists only of allocations $\ba_i$ satisfying the usual budget constraint, together with the additional restriction that $a_{ij} = 0$ whenever $v_{ij} = 0$ for any battlefield $j$. In other words, players may allocate resources only to battlefields for which they have positive value.

The proof of \cref{thm:hardness-unif-wsne} continues to apply. In fact, the role of \cref{lm:reserve} was to exactly show that, in any approximate WSNE, players do not allocate resources to battlefields for which they have zero value. Under the restricted action spaces, such allocations are not allowed by definition. The rest of the proof is unchanged.

Moreover, in the constructed games, every player has positive value for only constantly many battlefields, and every battlefield has positive value for only constantly many players.
Hence, under the restricted action spaces, the dependency degree of the game is bounded by an absolute constant $d$.
Further, since every player has only polynomially many pure strategies in these games, and the relevant payoffs and best-response values can be computed in polynomial time, the pruning construction in \cref{lm:wsne2ne2} can be implemented in polynomial time.

Let $\delta > 0$ be the constant from \cref{thm:hardness-unif-wsne}, after normalizing payoffs to lie in $[0,1]$. Set $\kappa = \delta^2/8d$.
Since $d$ is an absolute constant, $\kappa$ is also an absolute constant. Suppose there is a polynomial-time algorithm that computes a $\kappa$-NE for the restricted-action Blotto games considered above. By \cref{lm:wsne2ne2}, the output of this algorithm can be converted in polynomial time into a $\delta$-WSNE. This contradicts \cref{thm:hardness-unif-wsne}. Hence computing a $\kappa$-NE is PPAD-hard for constant $\kappa$ in the Blotto games with restricted actions.
\end{proof}

\subsection{Proof of \cref{thm:nonunif}}
\label[appendix]{app:proof:nonunif}

\begin{proof}
As in the proof of \cref{thm:hardness-unif-wsne}, we reduce from the \PURE{} problem (\cref{def:pure}).
We are given an instance of \PURE{}.
Let $V$ be the set of variables/nodes.
Every variable is an input to and an output of exactly one gate.
Throughout the proof, we will use several constants. These are listed in \cref{tab:reduc-params-nonunif}. Their use will be explained when they are introduced in the proof.
\begin{table}[htbp]
    \centering
    \begin{tabular}{c|c|c}
        parameter & value & description \\
        \hline
        $\eps$  & $1/100$ & unnormalized approximation parameter for $\eps$-WSNE \\
        $\beta_I$     & $1/8$   & payoff of auxiliary player on input side when primary player is absent \\
        $\beta_O$     & $1/20$  & payoff of auxiliary player on output side \\
        $\rho$     & $11/50$ & default payoff of primary player \\
        $\lambda_{\NOT}$    & $1/25$ & payoff reduction for primary player used in \NOT{} gate \\
        $\lambda_{\AND,0}$  & $1/25$ & payoff reduction for primary player used in 0-side of \AND{} gate \\
        $\lambda_{\AND,1}$  & $1/50$ & payoff reduction for primary player used in 1-side of \AND{} gate \\
        $\lambda_{\PURIFY}$ & $1/25$ & payoff reduction for primary player used in \PURIFY{} gate \\
        $\lambda_{\BIAS}$   & $1/50$ & payoff bias for primary player used in \PURIFY{} gate \\
    \end{tabular}
    \caption{Parameters used in the reduction with non-uniform tie-breaking}
    \label{tab:reduc-params-nonunif}
\end{table}

We first prove the hardness result for approximate WSNE. At the end of the proof, we extend this result to approximate NE.
For now, we use unnormalized payoffs and an unnormalized approximation parameter $\eps = 1/100$.
We will later show that the payoffs of the players are bounded by a constant, and therefore, we can scale the approximation parameter by a suitable constant to get hardness for normalized payoffs.

Our constructed game will have exactly $7|V|$ players and $2|V|$ battlefields.
The players will be of three types: {primary}, \textit{auxiliary}, and \textit{dummy}.
All battlefields in this construction are bit battlefields.
For every $v \in V$, we create a primary player $\PrimP^v$ and two bit battlefields $\BitB^v_0$ and $\BitB^v_1$.
The action of $\PrimP^v$ across battlefields $\BitB^v_0$ and $\BitB^v_1$ will encode the value of the variable $v \in V$.
For every bit battlefield $\BitB^v_s$, where $v \in V$ and $s \in \{0,1\}$, we create the auxiliary player $\AuxP^v_s$ and two dummy players $\DumP^v_{s,1}$ and $\DumP^v_{s,2}$.
The auxiliary players help enforce the logic of the \PURE{} gates.
The dummy players create competition in the battlefields and help implement our intended tie-breaking rules.
All players have exactly one resource, and all values are identical and equal to $1$.
Thus, a player's pure strategy is simply the choice of one battlefield.
All payoff differences in the construction will come from the non-uniform tie-breaking rules.

\paragraph{Tie-breaking.}
We now describe the tie-breaking rules. Fix an arbitrary bit battlefield $\BitB^v_s$, where $v \in V$ and $s \in \{0,1\}$.
For this battlefield $\BitB^v_s$, we call the following players \textit{local} to the battlefield, and the battlefield local to the players:
\begin{itemize}
    \item The two dummy players $\DumP^v_{s,1}$ and $\DumP^v_{s,2}$ associated with this battlefield.
    \item The primary player $\PrimP^v$ associated with this battlefield and the sibling battlefield $\BitB^v_{1-s}$.
    \item The auxiliary player $\AuxP^v_s$ who considers this battlefield $\BitB^v_s$ its \textit{input} battlefield. We will call $\AuxP^v_s$ the input-side auxiliary player of $\BitB^v_s$.

    \item Up to two more auxiliary players $\AuxP^u_r$ and $\AuxP^{u'}_{r'}$, for some $u, u' \in V$ and $r, r' \in \{0,1\}$, who consider this battlefield $\BitB^v_s$ as their \textit{output} battlefield.
    Note that $v$ is a variable of \PURE{}, so it must be the output of a gate in the circuit. The variable $u$, and possibly also $u'$, are the one or two inputs to this gate, depending upon the gate. We will explain it properly below when we describe the tie-breaking rules.
    We will call $\AuxP^u_r$ and $\AuxP^{u'}_{r'}$ the output-side auxiliary players of $\BitB^v_s$.
\end{itemize}

Let $S$ denote the set of up to six players described above, local to the battlefield $\BitB^v_s$. Let $W$ be the set of players, among all players in the game, who tie for this battlefield $\BitB^v_s$. The tie-breaking rule goes as follows:
\begin{itemize}
    \item If $|W| = 1$, then there is a unique winner of the battlefield, and this player gets the entire payoff of $1$.
    \item If $|W| > 1$ and $W \cap S = \emptyset$, then all players in $W$ equally share the payoff and get $1/|W|$ each.
    Note that this case and the case above are the only two possible cases with $W \cap S = \emptyset$, i.e., where all the players in $W$ are non-local. All cases after this have $W \cap S \neq \emptyset$, i.e., there will be at least one local player in $W$.

    \item If $|W| > 1$ and $W \cap S \neq \emptyset$, then all non-local players $W \setminus S$ get zero payoff.
    In other words, if there is at least one local player in $W$, then the non-local players in $W$ get zero.

    \item If $|W| > 1$, $W \cap S \neq \emptyset$, and $W \cap \{ \DumP^v_{s,1}, \DumP^v_{s,2} \} = \emptyset$, then all players in $W \cap S$ equally share the payoff and get $1/|W \cap S|$ each.
    In other words, if there is no local dummy player in $W$, then all local players in $W$ equally share the prize.

    \item We now consider the remaining case of $W \cap \{ \DumP^v_{s,1}, \DumP^v_{s,2} \} \neq \emptyset$, i.e., when there is at least one local dummy player in $W$. This case will be the most important for analysis later.

    The payoffs of the local players of $\BitB^v_s$ in $W$ will depend upon whether $v$ is the output of a \NOT{}, \AND{}, or \PURIFY{} gate of the \PURE{} instance. However, for some of the local players, the payoffs will be largely consistent across the gates.

    If there is only one local dummy player in $W$, then that player will get a payoff of at least $7/10$. If both local dummy players are in $W$, each will receive an equal payoff of at least $7/20$.
    The exact payoffs of these local dummy players will be the share remaining from $1$ after allocating the payoffs to the local primary and auxiliary players, as described below.

    The payoff of the input-side local auxiliary player $\AuxP^v_s$ depends only on the primary player $\PrimP^v$. If $\AuxP^v_s \in W$, then $\AuxP^v_s$ gets a payoff of $\beta_I$ if the primary player $\PrimP^v \notin W$ but a payoff of $0$ if $\PrimP^v \in W$.

    We briefly discussed earlier the output-side auxiliary players who consider $\BitB^v_s$ as their local output battlefield.
    The exact auxiliary players who consider $\BitB^v_s$ as their local output battlefield depend upon the gate.
    However, all these output-side local auxiliary players get a payoff of $\beta_O$ if they are in $W$, irrespective of the other players in $W$.

    We have now defined the payoffs of the local dummy, input-side auxiliary, and output-side auxiliary players of battlefield $\BitB^v_s$. It remains to define the payoffs of the local primary player and to specify the local output-side auxiliary players. We do it below for each type of gate.
    \begin{itemize}
        \item $v = \NOT(u)$.
        There is one output-side auxiliary player $\AuxP^u_s$ who considers $\BitB^v_s$ as its output battlefield.
        Assuming the primary player $\PrimP^v \in W$, it gets a payoff of
        \[
            \rho - \lambda_{\NOT} \cdot \Ind[\AuxP^u_s \in W].
        \]
        In other words, $\PrimP^v$ gets a payoff of $\rho$ if the output-side auxiliary player $\AuxP^u_s \notin W$ but a payoff of $\rho - \lambda_{\NOT}$ if $\AuxP^u_s \in W$.

        \item $v = \AND(u, u')$.
        There are two output-side auxiliary players $\AuxP^{u}_{1-s}$ and $\AuxP^{u'}_{1-s}$ who consider $\BitB^v_s$ as their output battlefield.
        If $\PrimP^v \in W$, $\PrimP^v$ gets a payoff of
        \[
            \rho - \lambda_{\AND,1-s} \cdot \left( \Ind[\AuxP^{u}_{1-s} \in W] + \Ind[\AuxP^{u'}_{1-s} \in W] \right).
        \]
        In other words, if $s=1$, then each of the two local output-side auxiliary players $\AuxP^u_0$ and $\AuxP^{u'}_0$ decreases the payoff of $\PrimP^v$ by $\lambda_{\AND,0}$ from the baseline $\rho$, if they are present in $W$.
        The case $s=0$ is analogous, as given in the formula above.

        \item $(v, \cdot) = \PURIFY(u)$.
        If $s = 0$, no output-side auxiliary player considers $\BitB^v_0$ as its output battlefield, but if $s = 1$, the output-side auxiliary player $\AuxP^u_0$ considers $\BitB^v_1$ as its output battlefield.
        If $\PrimP^v \in W$, $\PrimP^v$ gets a payoff of
        \[
            \rho - \lambda_{\BIAS} \cdot \Ind[s=0] - \lambda_{\PURIFY} \cdot \Ind[s=1] \cdot \Ind[\AuxP^u_0 \in W].
        \]
        In other words, $\PrimP^v$'s payoff decreases from its baseline of $\rho$ by $\lambda_{\BIAS}$ if $s=0$ and decreases by $\lambda_{\PURIFY}$ if $s=1$ and $\AuxP^u_0 \in W$.

        \item $(\cdot, v) = \PURIFY(u)$.
        This case is symmetric to the previous case.
        If $s = 0$, the output-side auxiliary player $\AuxP^u_1$ considers $\BitB^v_0$ as its output battlefield, but if $s = 1$,
        no output-side auxiliary player considers $\BitB^v_1$ as its output battlefield.
        If $\PrimP^v \in W$, $\PrimP^v$ gets a payoff of
        \[
            \rho - \lambda_{\BIAS} \cdot \Ind[s=1] - \lambda_{\PURIFY} \cdot \Ind[s=0] \cdot \Ind[\AuxP^u_1 \in W].
        \]
    \end{itemize}
\end{itemize}
With the formal specification of the tie-breaking rule above, our construction is complete.
We now describe how to map an $\eps$-WSNE of the constructed game to a solution of the \PURE{} instance; correctness will be proven later.
Fix an $\eps$-WSNE of this game. Let $p^v_1$ be the probability that the primary player $\PrimP^v$, for $v \in V$, chooses the bit battlefield $\BitB^v_1$.
We set the solution $\valonly$ of the \PURE{} instance as
\begin{equation}\label{eq:mapping-nonunif}
    \val{v} = p^v_1.
\end{equation}

\paragraph{Analysis.} We now prove that the constructed Blotto game encodes a solution to the \PURE{} instance in every $\eps$-WSNE. First, we show that the tie-breaking rule is valid.

\begin{lemma}
\label{lm:tiebreaking-monotone-nonunif}
The tie-breaking rule described above is well-defined and monotone.
\end{lemma}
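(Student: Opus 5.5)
The plan is to fix a battlefield $\BitB^v_s$ with its local set $S$. Let $D=\{\DumP^v_{s,1},\DumP^v_{s,2}\}$, and let $W$ be the tied set. We verify the claims separately in each branch of the rule.

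\emph{Well-definedness.} The branches are determined by whether $|W|=1$, whether $W\cap S=\emptyset$, and whether $W\cap D=\emptyset$. Together these partition all nonempty $W$. In every branch except the last, shares are visibly nonnegative, vanish outside $W$, and sum to $1$.

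In the last branch ($W\cap D\neq\emptyset$), the dummies receive the residual $1$ minus the shares of the local primary and auxiliary players in $W$. We therefore need the total given to non-dummy local players to be at most $3/10$. This leaves at least $7/10$ for one dummy, or $7/20$ each for two.
\begin{itemize}
\item The input-side auxiliary player gets at most $\beta_I=1/8$.
\item Each of at most two output-side auxiliary players gets $\beta_O=1/20$, for at most $1/10$.
\item The primary player gets at most $\rho=11/50$.
\end{itemize}
The naive sum $1/8+1/10+11/50$ exceeds $3/10$, so we must use exclusivity. When $\PrimP^v\in W$, the input-side auxiliary player gets $0$. The total is then at most $\rho+2\beta_O=11/50+1/10=8/25\dots$, which is $0.32$, slightly above $0.3$.

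So we refine the count. Count exactly which output-side auxiliary players exist per gate: the \NOT{} and \PURIFY{} cases have at most one, giving $11/50+1/20=0.27$. For \AND{}, each present output-side player also reduces the primary's payoff by $\lambda_{\AND,1-s}\ge 1/50$. The total is then at most $\rho-2\lambda+2\beta_O\le 11/50-2/50+1/10=0.28$. When $\PrimP^v\notin W$, the total is at most $\beta_I+2\beta_O=0.225$.

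We also check that the primary payoff is never negative, since $\rho-2/25>0$. All shares then lie in $[0,1]$ and sum to $1$.

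\emph{Monotonicity.} We take $W\subseteq W'$ and $i\in W$, and show $\tau(W,i)\ge\tau(W',i)$.

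For a non-local $i$, the share is $1/|W|$ when $W\cap S=\emptyset$ and $0$ otherwise. Adding players can only increase $|W|$ or make $W\cap S$ nonempty, so the share weakly decreases. The exceptional $|W|=1$ case gives $1$, the maximum.

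For a local $i$, we trace transitions between branches as players are added. Adding local non-dummies to a dummy-free set gives shares $1/|W\cap S|$, which decrease. Moving from a dummy-free branch to the dummy branch, the non-dummy local players' fixed payoffs ($\le\rho=0.22$, $\beta_I$, $\beta_O$) must be at most their previous share. The previous share is at least $1/|S|$, and $|S|\le 6$, so it is at least $1/6\approx0.167$. This falls below $\rho$, so the naive comparison fails. Handling this step is the \textbf{main obstacle}: we need finer accounting of which local sets can occur, namely $|W\cap S|$ given that $\PrimP^v$ and the auxiliary players are present.

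Within the dummy branch, adding players changes the shares as follows:
\begin{itemize}
\item the primary's payoff drops by $\lambda$ when an output-side auxiliary player joins;
\item the input-side auxiliary player drops from $\beta_I$ to $0$ when $\PrimP^v$ joins;
\item $\beta_O$ stays constant;
\item a dummy's residual share decreases, since added players take positive payoff and a second dummy halves the residual.
\end{itemize}
Each of these is monotone. For the dummy residual we verify the case in which the second dummy joins: $R'/2\le R$ holds because the residual is nonincreasing.

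The obstacle, then, is the transition from the shared-without-dummy branch into the dummy branch. We would check it by enumerating $|W\cap S|$. With $\PrimP^v$ present and $k$ other local non-dummies, the share $1/(k+1)$ must be at least the new payoff. For $k\le 3$ this holds, since $1/4>0.22$. For $k=4$ the share is $1/5\ge$ the primary's reduced payoff $\rho-\lambda$ whenever an output-side player is present, which holds for $k=4$. The auxiliary players' payoffs $\le 1/8$ are always below $1/5$. If some subcase fails, we would note that it suffices to adjust constants, but we expect the stated values to work.
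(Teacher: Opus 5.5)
Your decomposition matches the paper's. You bound what the non-dummy local players receive in the dummy branch by $7/25$ (your refined count reaches the same $11/50-2/50+1/10$), then check monotonicity branch by branch. The gap is in the step you treat as routine: the dummies' residual. You argue it decreases ``since added players take positive payoff.'' Inside the dummy branch, however, adding a player can \emph{lower} another local player's payoff, so the total handed to local non-dummy players is not obviously increasing.

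Two cross-effects matter. When $\PrimP^v$ joins, the input-side auxiliary player $\AuxP^v_s$ drops from $\beta_I$ to $0$, so that total changes by the primary's payoff minus $\beta_I$. When an output-side auxiliary player joins while $\PrimP^v\in W$, the total changes by $\beta_O-\lambda$.

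The residual is nonincreasing only because of two inequalities. First, $\beta_O>\lambda$ for every $\lambda\in\{\lambda_{\NOT},\lambda_{\AND,0},\lambda_{\AND,1},\lambda_{\PURIFY}\}$. Second, the primary's smallest possible payoff $\rho-2\lambda_{\AND,0}=7/50$ exceeds $\beta_I=1/8$. These are genuine constraints on the constants, not formalities. With, say, $\beta_I=3/20$, take $W$ to contain one dummy, $\AuxP^v_s$, and both output-side players of $\BitB^v_1$, where $v$ is the output of an \AND{} gate. Adding $\PrimP^v$ then lowers that total from $1/4$ to $6/25$ and so \emph{raises} the dummy's share. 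The paper checks exactly these two inequalities, and your argument needs them as well.

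Conversely, the step you flag as the main obstacle is not one. In the dummy-free branch the dummies are by definition outside $W$. So $W\cap S$ lies inside $S$ minus the two dummies, which has at most four players: the primary, the input-side auxiliary, and at most two output-side auxiliaries. Every local non-dummy share there is therefore at least $1/4$. This exceeds $\rho\ge\beta_I\ge\beta_O$, which bounds what any such player can receive once a dummy joins. Your case $k=4$ cannot occur. The remark about adjusting constants if some subcase fails should be replaced by this one-line count.
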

\begin{proof}
Fix a bit battlefield $\BitB^v_s$.
We first show that the tie-breaking rule is well-defined.
The only case where this is not immediate is the final case, where at least one local dummy player is in $W$.
In this case, we first allocate payoffs to the local primary and auxiliary players, and then give the remaining payoff to the local dummy players. So we need to show that the total payoff allocated to the local primary and auxiliary players is at most $1$.

In fact, we show a stronger bound: the total payoff allocated to the local primary and auxiliary players is at most $7/25$.
First suppose $\PrimP^v \notin W$. Then the input-side auxiliary player can get payoff $\beta_I$, and at most two output-side auxiliary players can get payoff $\beta_O$ each. Thus the total payoff allocated to the local primary and auxiliary players is at most
\[
    \beta_I + 2 \beta_O
    = \frac{1}{8} + \frac1{10}
    = \frac{9}{40}
    < \frac{7}{25}.
\]
Now suppose $\PrimP^v \in W$. Then the input-side auxiliary player gets payoff $0$. The largest possible total payoff to the local primary player and the output-side auxiliary players occurs in the \AND{} gate, when two output-side auxiliary players with pressure $\lambda_{\AND,1}$ are present. In this case, the total payoff is
\[
    \rho - 2\lambda_{\AND,1} + 2\beta_O
    =
    \frac{11}{50} - \frac2{50} + \frac1{10}
    =
    \frac7{25}.
\]
Therefore, the residual payoff for the local dummy players is always at least
\[
    1-\frac7{25}=\frac{18}{25}.
\]
In particular, if exactly one local dummy player is in $W$, then that player gets payoff at least $18/25>7/10$.
If both local dummy players are in $W$, then each gets payoff at least $9/25>7/20$.
Thus the tie-breaking rule is well-defined.

We now prove monotonicity. It is enough to consider the case where we add one player to the tied set.
Fix two tied sets $W$ and $W \cup \{j\}$, where $j \notin W$.
We show that the payoff of every player $i \in W$ weakly decreases when we move from $W$ to $W \cup \{j\}$.

First consider a non-local player $i \in W \setminus S$.
If $W$ contains no local player, then all players in $W$ share the payoff equally. Adding a non-local player decreases $i$'s payoff from $1/|W|$ to $1/(|W|+1)$, and adding a local player makes $i$'s payoff zero.
If $W$ already contains a local player, then $i$ already gets payoff zero and continues to get payoff zero.
Hence the payoff of a non-local player weakly decreases.

Next consider a local dummy player $i \in W$.
In this case, $W$ already contains a local dummy player.
Adding a non-local player does not affect the payoffs of local players, so the residual payoff for the local dummy players does not change.
Adding the second local dummy player splits the same residual payoff between two dummy players instead of giving all of it to one dummy player.
Therefore the payoff of the old local dummy player weakly decreases.

It remains to consider adding a local non-dummy player.
If we add the input-side auxiliary player,
then either $\PrimP^v \notin W$, in which case the input-side auxiliary player receives payoff $\beta_I$ and the residual payoff for the dummy players decreases,
or $\PrimP^v \in W$, in which case the input-side auxiliary player receives payoff zero and the residual payoff does not change.
If we add an output-side auxiliary player, then this new player receives payoff $\beta_O$.
If $\PrimP^v \notin W$, the residual payoff decreases by $\beta_O$.
If $\PrimP^v \in W$, then the payoff of $\PrimP^v$ decreases by the corresponding pressure parameter $\lambda$, while the new output-side auxiliary player receives payoff $\beta_O$. Since $\beta_O > \lambda$ for every pressure parameter $\lambda \in \{\lambda_{\NOT},\lambda_{\AND,0},\lambda_{\AND,1},\lambda_{\PURIFY}\}$, the total payoff allocated to the local primary and auxiliary players increases, and hence the residual payoff for the dummy players decreases.
Finally, if we add the primary player $\PrimP^v$, then the input-side auxiliary player's payoff, if it is present, drops from $\beta_I$ to zero. The new payoff of the primary player is at least
\[
    \rho - 2\lambda_{\AND,0}
    = \frac{11}{50}-\frac2{25}
    = \frac7{50}
    > \frac18
    = \beta_I.
\]
Thus the total payoff allocated to the local primary and auxiliary players weakly increases, and the residual payoff for the dummy players weakly decreases.
Hence the payoff of a local dummy player weakly decreases in all cases.

Next consider the local primary player $\PrimP^v \in W$.
If $W$ contains no local dummy player, then all local players in $W$ equally share the payoff.
Adding a non-local player does not change the payoff of $\PrimP^v$.
Adding another local non-dummy player decreases the equal share of $\PrimP^v$.
Adding a local dummy player moves us to the final case of the tie-breaking rule.
Before adding the local dummy player, there are at most four local non-dummy players in $W$, and hence $\PrimP^v$ gets payoff at least $1/4$.
After adding the local dummy player, $\PrimP^v$ gets payoff at most $\rho = 11/50 < 1/4$.
Hence the payoff of $\PrimP^v$ weakly decreases.

Now suppose $W$ already contains a local dummy player.
Adding a non-local player or the second local dummy player does not affect the payoff of $\PrimP^v$.
Adding the input-side auxiliary player also does not affect the payoff of $\PrimP^v$, because the input-side auxiliary player's payoff depends on $\PrimP^v$, but not vice versa.
Adding an output-side auxiliary player weakly decreases the payoff of $\PrimP^v$ by the corresponding pressure parameter.
Thus the payoff of the local primary player weakly decreases in all cases.

Next consider the input-side auxiliary player $\AuxP^v_s \in W$.
The argument when $W$ contains no local dummy player is the same as in the case of the primary player above.
Indeed, all local players in $W$ equally share the payoff.
Adding a non-local player does not change this payoff, and adding another local non-dummy player decreases it.
If we add a local dummy player, then we move to the final case of the tie-breaking rule.
Before adding the local dummy player, there are at most four local non-dummy players in $W$, and hence $\AuxP^v_s$ gets payoff at least $1/4$.
After adding the local dummy player, $\AuxP^v_s$ gets payoff either $\beta_I=1/8$ or $0$, depending on whether $\PrimP^v$ is absent or present. Hence its payoff weakly decreases.

Now suppose $W$ already contains a local dummy player.
The payoff of $\AuxP^v_s$ is $\beta_I$ if $\PrimP^v \notin W$, and $0$ if $\PrimP^v \in W$.
Adding a non-local player, a local dummy player, or an output-side auxiliary player does not change this payoff.
Adding the primary player changes the payoff of $\AuxP^v_s$ from $\beta_I$ to $0$.
Hence the payoff of the input-side auxiliary player weakly decreases in all cases.

Finally, consider an output-side auxiliary player $i \in W$.
Again, when $W$ contains no local dummy player, the argument is the same as above.
Adding a non-local player does not change the equal share of $i$, and adding another local non-dummy player decreases it.
If we add a local dummy player, then before adding it, $i$ gets payoff at least $1/4$, while after adding it, $i$ gets payoff $\beta_O=1/20$.
Hence its payoff weakly decreases.

If $W$ already contains a local dummy player, then the output-side auxiliary player gets payoff $\beta_O$, irrespective of the other players in $W$. Thus adding any player does not change its payoff. Therefore the payoff of an output-side auxiliary player weakly decreases in all cases.

We have shown that for every old player $i\in W$, adding a new player to the tied set weakly decreases the payoff of $i$. This proves monotonicity.
\end{proof}

We now start the analysis of the $\eps$-WSNE of the constructed game, where $\eps=1/100$.
The first step is to show that the dummy players indeed behave as intended: each dummy player puts its resource on its own battlefield.
This will imply that in every battlefield, there is always at least one local player present, and therefore non-local players cannot profit from interacting with unrelated battlefields.

For convenience, let $A = 7/10$ and $a = 7/20$.
Recall from the proof of \cref{lm:tiebreaking-monotone-nonunif} that if exactly one local dummy player is in the tied set of a battlefield, then this dummy player gets payoff at least $A$, and if both local dummy players are in the tied set, then each of them gets payoff at least $a$.

\begin{lemma}
\label{lm:dummy-home-nonunif}
In every $\eps$-WSNE of the constructed game, every dummy player $\DumP^v_{s,t}$, where $v \in V$, $s \in \{0,1\}$, and $t \in [2]$, plays its own bit battlefield $\BitB^v_s$ with probability $1$.
\end{lemma}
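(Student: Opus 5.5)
The plan is a global counting argument in the spirit of \cref{lm:reserve}. We compare the payoff a dummy player could secure on its own battlefield with the total payoff that is available at all to non-local players on foreign battlefields. For a bit battlefield $\BitB^v_s$, write $q_{s,1}^v, q_{s,2}^v$ for the probabilities that $\DumP^v_{s,1}, \DumP^v_{s,2}$ put their resource somewhere other than $\BitB^v_s$. The goal is to show that all these $q$'s vanish. Since every player has one resource and all values equal $1$, every player present on a battlefield is in the tied set $W$.

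\textbf{Step 1 (home payoff).} Consider a dummy $\DumP^v_{s,t}$ with sibling dummy $\DumP^v_{s,t'}$, and suppose it plays its home battlefield. If it is alone there, it gets $1$. Otherwise $W$ contains a local dummy, so by the bounds in \cref{lm:tiebreaking-monotone-nonunif} it gets at least $A=7/10$ when the sibling is absent and at least $a=7/20$ when the sibling is present. The sibling's action is independent of the others, so the expected payoff of the home action is at least
\[
\tfrac{7}{20}+\tfrac{7}{20}\, q^v_{s,t'}.
\]
By the $\eps$-WSNE condition, every supported off-home action of $\DumP^v_{s,t}$ then earns at least $\tfrac{7}{20}+\tfrac{7}{20}q^v_{s,t'}-\eps$. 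Summing over its off-home actions, the expected realized payoff that $\DumP^v_{s,t}$ collects away from home is at least
\[
q^v_{s,t}\bigl(\tfrac{7}{20}+\tfrac{7}{20}q^v_{s,t'}-\eps\bigr).
\]

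\textbf{Step 2 (supply on a battlefield).} On any battlefield $\BitB^{v}_{s}$, non-local players receive positive payoff only when no local player is in $W$. When they do, their total share is at most $1$. That event requires in particular that both local dummies be away, which has probability $q^v_{s,1}q^v_{s,2}$ by independence. So the expected total payoff to non-local players on $\BitB^v_s$ is at most $q^v_{s,1}q^v_{s,2}$.

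\textbf{Step 3 (compare).} All off-home payoff of dummies is payoff to non-local players on some battlefield. Summing Step 1 over all dummies, grouped by home battlefield, and comparing with the sum of Step 2 over all battlefields gives
\[
\sum_{v,s}\Bigl[(\tfrac{7}{20}-\eps)(q^v_{s,1}+q^v_{s,2})+\tfrac{7}{10}q^v_{s,1}q^v_{s,2}\Bigr]\le \sum_{v,s} q^v_{s,1}q^v_{s,2}.
\]
Using $q^v_{s,1}q^v_{s,2}\le (q^v_{s,1}+q^v_{s,2})/2$, the right-hand side minus the product term on the left is at most $\tfrac{3}{20}\sum(q^v_{s,1}+q^v_{s,2})$. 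Since $\tfrac{7}{20}-\eps>\tfrac{3}{20}$, the inequality forces every $q^v_{s,t}=0$, which is the claim.

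The main obstacle is making Step 1 strong enough. The crude home bound $7/20$ alone does not close the count: in a 2-cycle where the dummies of two battlefields swap, each earns $1/2>7/20$. The fix is the refined bound, which credits the $7/10$ a dummy would earn when its sibling is absent. I will also double-check two points. First, the case $|W|=1$ with no local player gives payoff $1$ only to a single non-local player, so it is still covered by the bound $1\cdot\Ind[\text{no local present}]$. Second, the off-home summation correctly uses that each dummy's own action is independent of the others.
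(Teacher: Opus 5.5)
Your global charging argument is a genuinely different route from the paper's, and it can be made to work, but as written Step~3 rests on a false claim. The problem is the empty-battlefield case.
\begin{itemize}
\item In this construction an empty battlefield is a tie among all players. Then $W$ is the whole player set, it meets $S$, and it contains both local dummies. So the two local dummies of $\BitB^v_s$ split the residual even though neither allocates there.
\item Consequently, a dummy $\DumP^v_{s,t}$ playing a foreign battlefield still earns up to $\tfrac12$ from its own $\BitB^v_s$ whenever $\BitB^v_s$ is empty.
\item Your Step~1 bound is, correctly, a bound on the \emph{total} payoff of an off-home action. So $q^v_{s,t}\bigl(\tfrac{7}{20}+\tfrac{7}{20}q^v_{s,t'}-\eps\bigr)$ lower-bounds the foreign payoff \emph{plus} this empty-home payoff.
\end{itemize}
The assertion that all off-home payoff of dummies is payoff to non-local players is therefore wrong, and Step~2 does not upper-bound the quantity Step~1 lower-bounds. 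This is exactly the ``no-resource'' subtlety the paper flags.

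The gap cannot be patched only on the demand side. Suppose you subtract the empty-home payoff from Step~1, even using that the home action earns $1$ in that event. The guaranteed coefficient of $q^v_{s,t'}$ then drops from $\tfrac{7}{20}$ to $\tfrac{3}{20}$. With Step~2 unchanged, the count no longer closes at $q^v_{s,1}=q^v_{s,2}=1$: per battlefield you get $2(\tfrac{7}{20}-\eps)+\tfrac{3}{10}<1$.

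The repair is on the supply side. On $\BitB^v_s$, consider two events: non-local players are paid (the battlefield is nonempty and contains no local player), and $\BitB^v_s$ is empty.
\begin{itemize}
\item These events are disjoint.
\item Both are contained in the event that both local dummies are absent.
\item On each, the payoff going to non-local players plus the absent local dummies totals at most $1$.
\end{itemize}
So the expected payoff that $\BitB^v_s$ gives to non-local players plus its own absent dummies is still at most $q^v_{s,1}q^v_{s,2}$. A dummy away from home earns only from the battlefield it plays and, if its home is empty, from its home; as a non-local it gets nothing from other empty battlefields. With this enlarged supply, Step~3 goes through verbatim.

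For comparison, the paper argues as follows:
\begin{itemize}
\item It first shows that any foreign battlefield a dummy visits has $q^{v}_{s,1}q^{v}_{s,2}\ge a-\eps$.
\item It takes a sink strongly connected component $T$ of the graph on such battlefields.
\item It compares the dummies' total utilities, which are at most $|T|$, with $2a+(A-a)(q^v_{s,1}+q^v_{s,2})-2\eps$, using $q^v_{s,1}+q^v_{s,2}\ge 2\sqrt{a-\eps}$.
\end{itemize}
Working with total utilities inside $T$ absorbs the empty-battlefield payoff automatically. Your repaired argument avoids the graph and SCC step and needs only $q^v_{s,1}q^v_{s,2}\le(q^v_{s,1}+q^v_{s,2})/2$, at the cost of this extra bookkeeping.
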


\begin{proof}
For every $v \in V$, $s \in \{0,1\}$, and $t \in [2]$, let
\[
    q^v_{s,t} = \Prob[\DumP^v_{s,t} \text{ does not play } \BitB^v_s],
\]
and let
\[
    \mu^v_s = q^v_{s,1} q^v_{s,2}.
\]
Thus, $\mu^v_s$ is the probability that both local dummy players of $\BitB^v_s$ do not allocate their resource to $\BitB^v_s$.

Suppose, for contradiction, that some dummy player $\DumP^v_{s,t}$ plays a foreign battlefield $\BitB^{v'}_{s'} \neq \BitB^v_s$ with positive probability.
Consider the deviation that moves this dummy player from $\BitB^{v'}_{s'}$ to its own battlefield $\BitB^v_s$.
We first lower bound the gain from $\BitB^v_s$.
If the other local dummy player $\DumP^v_{s,3-t}$ allocates its resource to $\BitB^v_s$, then after the deviation $\DumP^v_{s,t}$ gets payoff at least $a$ from $\BitB^v_s$.
If $\DumP^v_{s,3-t}$ does not allocate its resource to $\BitB^v_s$, then after the deviation $\DumP^v_{s,t}$ is the only local dummy player allocating to $\BitB^v_s$, and hence gets payoff at least $A \ge a$ from $\BitB^v_s$.
The only small subtlety is the no-resource case, where before the deviation the dummy player may already receive some payoff from $\BitB^v_s$.
But in the no-resource case, the two local dummy players of $\BitB^v_s$ split the residual payoff, so $\DumP^v_{s,t}$ gets payoff at most $1/2$ before the deviation, while after the deviation it is the unique positive allocator and gets payoff $1$.
Thus the gain from $\BitB^v_s$ is at least $1/2 \ge a$ in this case as well.

Now consider the possible loss from leaving $\BitB^{v'}_{s'}$.
If at least one local dummy player of $\BitB^{v'}_{s'}$ allocates its resource to $\BitB^{v'}_{s'}$, then $\DumP^v_{s,t}$ is a non-local player on $\BitB^{v'}_{s'}$ in the presence of a local player, and therefore gets payoff zero from $\BitB^{v'}_{s'}$.
Thus the loss from $\BitB^{v'}_{s'}$ can be positive only when both local dummy players of $\BitB^{v'}_{s'}$ do not allocate their resource to $\BitB^{v'}_{s'}$.
This event happens with probability $\mu^{v'}_{s'}$, and the loss on this event is at most $1$.
Therefore, the deviation from $\BitB^{v'}_{s'}$ to $\BitB^v_s$ improves the payoff of $\DumP^v_{s,t}$ by at least
\[
    a - \mu^{v'}_{s'}.
\]
Since $\BitB^{v'}_{s'}$ is in the support of $\DumP^v_{s,t}$ in an $\eps$-WSNE, this improvement must be at most $\eps$. Hence
\[
    \mu^{v'}_{s'} \ge a - \eps.
\]

Call a battlefield $\BitB^v_s$ bad if $\mu^v_s \ge a-\eps$.
We have shown that every foreign battlefield played by a dummy player with positive probability must be bad.
Suppose there is some dummy player who plays a foreign battlefield with positive probability. Then there is at least one bad battlefield.
Build a directed graph whose vertices are the bad battlefields.
Put an edge $\BitB^v_s \to \BitB^{v'}_{s'}$ if one of the local dummy players of $\BitB^v_s$ plays the foreign battlefield $\BitB^{v'}_{s'}$ with positive probability.
By the argument above, every edge points to a bad battlefield.
Moreover, every bad battlefield has an outgoing edge. Indeed, if $\BitB^v_s$ is bad, then
\[
    \mu^v_{s} = q^v_{s,1} q^v_{s,2} \ge a - \eps > 0,
\]
and hence both local dummy players of $\BitB^v_s$ leave $\BitB^v_s$ with positive probability.

Take a sink strongly connected component $T$ of this graph. The local dummy players of the battlefields in $T$ never play outside $T$; they either play their own battlefield or another battlefield in $T$.
Therefore, all payoff obtained by these dummy players comes from battlefields in $T$.
Since each battlefield distributes total payoff at most $1$, we get
\[
    \sum_{\BitB^v_s \in T}
    \left(
        u_{\DumP^v_{s,1}} + u_{\DumP^v_{s,2}}
    \right)
    \le |T|,
\]
where $u_{\DumP^v_{s,t}}$ is the expected payoff of the dummy player $\DumP^v_{s,t}$.

On the other hand, for any battlefield $\BitB^v_s \in T$, the deviation of $\DumP^v_{s,1}$ to its own battlefield $\BitB^v_s$ gives payoff at least
\[
    a (1-q^v_{s,2}) + A q^v_{s,2}
    = a + (A-a)q^v_{s,2}.
\]
Similarly, the deviation of $\DumP^v_{s,2}$ to $\BitB^v_s$ gives payoff at least
\[
    a + (A-a)q^v_{s,1}.
\]
Since we are in an $\eps$-WSNE, the expected payoff of each dummy player is within $\eps$ of the payoff from deviating to its own battlefield. Therefore,
\[
    u_{\DumP^v_{s,1}} + u_{\DumP^v_{s,2}}
    \ge 2a + (A-a)(q^v_{s,1}+q^v_{s,2}) - 2\eps.
\]
Since every battlefield in $T$ is bad, we have
\[
    q^v_{s,1} q^v_{s,2} \ge a-\eps.
\]
Thus,
\[
    q^v_{s,1} + q^v_{s,2} \ge 2\sqrt{a-\eps}.
\]
Combining the last two inequalities gives
\[
    u_{\DumP^v_{s,1}} + u_{\DumP^v_{s,2}}
    \ge 2a + 2(A-a)\sqrt{a-\eps} - 2\eps.
\]
Substituting $A=7/10$, $a=7/20$, and $\eps=1/100$, the right-hand side is
\[
    \frac7{10} + \frac7{10} \sqrt{\frac7{20} - \frac1{100}} - \frac1{50}
    > 1.
\]
Therefore, for every $\BitB^v_s \in T$, the two local dummy players of $\BitB^v_s$ get total payoff strictly larger than $1$.
Summing over all $\BitB^v_s \in T$, we get
\[
    \sum_{\BitB^v_s \in T}
    \left(
        u_{\DumP^v_{s,1}} + u_{\DumP^v_{s,2}}
    \right)
    > |T|,
\]
which contradicts the upper bound above.
Hence no dummy player plays a foreign battlefield with positive probability.
Therefore every dummy player plays its own battlefield with probability $1$.
\end{proof}

\cref{lm:dummy-home-nonunif} implies that, in every $\eps$-WSNE, every battlefield always has its two local dummy players present. Therefore, whenever a non-local player allocates its resource to a battlefield, it gets payoff zero from that battlefield. We next use this observation to show that primary and auxiliary players also behave locally.

\begin{lemma}
\label{lm:local-actions-nonunif}
In every $\eps$-WSNE of the constructed game, every primary player $\PrimP^v$, where $v \in V$, only plays the two local bit battlefields $\BitB^v_0$ and $\BitB^v_1$ with positive probability.
Further, every auxiliary player $\AuxP^v_s$, where $v \in V$ and $s \in \{0,1\}$, only plays its input battlefield $\BitB^v_s$ and its output battlefield, depending upon the gate, with positive probability.
\end{lemma}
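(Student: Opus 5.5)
The plan is to use \cref{lm:dummy-home-nonunif} to show that every foreign battlefield gives payoff exactly zero, and then to exhibit, for each player, a local battlefield whose payoff is bounded below by a constant larger than $\eps = 1/100$. By \cref{lm:dummy-home-nonunif}, both local dummy players of every bit battlefield allocate there with probability $1$. Hence on every battlefield the tied set $W$ always contains a local player, and by the tie-breaking rule every non-local player in $W$ gets payoff zero. Since all values are $1$ and every player has one resource, any pure action of a primary or auxiliary player that picks a battlefield not local to it in the relevant role yields payoff exactly $0$. The well-supported condition then excludes such an action as soon as some alternative action gives expected payoff more than $\eps$.

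For the primary player $\PrimP^v$, I would consider its deviation to either $\BitB^v_0$ or $\BitB^v_1$. Because both local dummy players are present, we are always in the final case of the rule, and $\PrimP^v$ receives at least $\rho - 2\lambda_{\AND,0} = 7/50$, the worst case over all gate types and all possible sets of present output-side auxiliary players. Since $7/50 > \eps$, every foreign battlefield is more than $\eps$ worse, so $\PrimP^v$ plays only $\BitB^v_0,\BitB^v_1$.

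For an auxiliary player $\AuxP^v_s$, the natural positive alternative is its output battlefield. Whenever $\AuxP^v_s$ is local there as an output-side player, a local dummy player is present, so it receives exactly $\beta_O = 1/20 > \eps$ regardless of who else ties. Any battlefield that is neither its input nor its output battlefield is therefore strictly more than $\eps$ worse than the output battlefield.

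I expect the main obstacle to be bookkeeping. One must check from the gate definitions that each auxiliary player indeed has an output battlefield. For example, in \PURIFY{} the player $\AuxP^u_0$ outputs to $\BitB^v_1$ and $\AuxP^u_1$ outputs to $\BitB^w_0$, where $v$ and $w$ are the two outputs, while in \NOT{} and \AND{} the outputs are $\BitB^v_s$ and $\BitB^v_{1-s}$ as specified. One must also check that locality is role-specific, meaning that being local to a battlefield as a player of one role does not grant it payoffs as a player of another role. A further small subtlety is the no-resource convention, but it is vacuous here: dummy players are always present, so no battlefield is ever empty.
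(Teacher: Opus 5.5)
Your proposal is correct and follows essentially the same route as the paper: \cref{lm:dummy-home-nonunif} makes every non-local battlefield worth exactly zero. A primary player then secures at least $\rho - 2\lambda_{\AND,0} = 7/50 > \eps$ on either local battlefield, and an auxiliary player secures exactly $\beta_O = 1/20 > \eps$ on its output battlefield. Your additional bookkeeping remarks are correct details the paper leaves implicit: each auxiliary player has a unique output battlefield distinct from its input battlefield, and the empty-battlefield convention is vacuous once the dummies are always present.
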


\begin{proof}
By \cref{lm:dummy-home-nonunif}, both local dummy players of every battlefield play their own local battlefield with probability $1$.
Hence, if any player allocates its resource to a battlefield to which it is non-local, then this player gets payoff zero from that battlefield.

We first consider a primary player $\PrimP^v$. The player $\PrimP^v$ is local to exactly two battlefields, namely $\BitB^v_0$ and $\BitB^v_1$. If $\PrimP^v$ allocates its resource to any other battlefield, then it gets payoff zero. On the other hand, by allocating its resource to either $\BitB^v_0$ or $\BitB^v_1$, it gets payoff at least
\[
    \rho - 2\lambda_{\AND,0}
    = \frac{11}{50} - \frac2{25}
    = \frac7{50}
    > \eps.
\]
Indeed, this is the smallest possible payoff of a primary player on a local battlefield: it occurs when two output-side auxiliary players are present and each decreases the primary player's payoff by the larger pressure parameter $\lambda_{\AND,0}$. Since $7/50 > \eps$, no non-local battlefield can be in the support of $\PrimP^v$ in an $\eps$-WSNE.

Now consider an auxiliary player $\AuxP^v_s$.
This player is local to exactly two battlefields: its input battlefield $\BitB^v_s$ and its output battlefield, determined by the gate in which $v$ is an input.
If $\AuxP^v_s$ allocates its resource to any other battlefield, then it gets payoff zero.
On the other hand, by allocating its resource to its output battlefield, it gets payoff exactly $\beta_O = 1/20$.
Since $\beta_O > \eps$, no non-local battlefield can be in the support of $\AuxP^v_s$ in an $\eps$-WSNE.
\end{proof}

For every variable $v \in V$ and $s \in \{0,1\}$, let $p^v_s$ be the probability that the primary player $\PrimP^v$ chooses the bit battlefield $\BitB^v_s$.
Recall that $p^v_1$ and $\val{v} = p^v_1$ were defined in equation \eqref{eq:mapping-nonunif}.
By \cref{lm:local-actions-nonunif}, the primary player $\PrimP^v$ only plays $\BitB^v_0$ and $\BitB^v_1$ with positive probability. Therefore, $p^v_0 + p^v_1 = 1$.
Thus, if $\PrimP^v$ plays $\BitB^v_0$ with probability $1$, then $\val{v}=0$, and if $\PrimP^v$ plays $\BitB^v_1$ with probability $1$, then $\val{v}=1$.

We next analyze the auxiliary players.
Recall that the auxiliary player $\AuxP^v_s$ has two local battlefields: the input battlefield $\BitB^v_s$ and its output battlefield.
The purpose of $\AuxP^v_s$ is to detect whether $\PrimP^v$ is absent from $\BitB^v_s$.
If $\PrimP^v$ is absent from $\BitB^v_s$ with large enough probability, then $\AuxP^v_s$ prefers to stay on its input battlefield.
On the other hand, if $\PrimP^v$ is present on $\BitB^v_s$ with large probability, then $\AuxP^v_s$ gets small payoff from its input battlefield and moves to its output battlefield.

\begin{lemma}
\label{lm:aux-behavior-nonunif}
Fix $v \in V$ and $s \in \{0,1\}$.
If $p^v_s = 1$, then the auxiliary player $\AuxP^v_s$ plays its output battlefield with probability $1$.
If $p^v_s \le 1/2$, then $\AuxP^v_s$ plays its input battlefield $\BitB^v_s$ with probability $1$.
Moreover, if $\AuxP^v_s$ plays its output battlefield with positive probability, then $p^v_s > 1/2$.
\end{lemma}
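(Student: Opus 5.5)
By \cref{lm:local-actions-nonunif}, $\AuxP^v_s$ only ever plays its input battlefield $\BitB^v_s$ or its output battlefield. By \cref{lm:dummy-home-nonunif}, both local dummy players of every battlefield are always present. So every tie the auxiliary player can be in falls into the final case of the tie-breaking rule, and both payoffs can be written down exactly.

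On the output battlefield, $\AuxP^v_s$ is an output-side local player with a dummy present. Its payoff is therefore exactly $\beta_O = 1/20$, whatever the other players do. On the input battlefield, it gets $\beta_I = 1/8$ when $\PrimP^v$ is absent and $0$ when $\PrimP^v$ is present. The primary player's choice is independent of the auxiliary player's, so the expected payoff of playing $\BitB^v_s$ is exactly $\beta_I(1-p^v_s) = (1-p^v_s)/8$. The whole proof reduces to comparing $(1-p^v_s)/8$ with $1/20$, with margin $\eps = 1/100$.

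First case: $p^v_s = 1$. The input payoff is $0$ and the output payoff is $1/20 > \eps$. So the input battlefield is not an $\eps$-best response and cannot be in the support of an $\eps$-WSNE. Hence $\AuxP^v_s$ plays its output battlefield with probability $1$.

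Second case: $p^v_s \le 1/2$. The input payoff is at least $1/16$, and
\[
1/16 - 1/20 = 1/80 > 1/100 = \eps .
\]
So the output battlefield is more than $\eps$ worse and cannot be supported. Hence $\AuxP^v_s$ plays $\BitB^v_s$ with probability $1$.

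The "moreover" clause is just the contrapositive of the second case: if the output battlefield is played with positive probability, then $p^v_s \le 1/2$ is impossible, so $p^v_s > 1/2$.

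There is no real obstacle here. The only point to check is that the payoffs are exact rather than bounds. This needs two facts. First, the dummies' presence forces the final tie-breaking case on both battlefields, which is why the output payoff does not depend on the rest of the tied set $W$. Second, the empty-battlefield convention never arises, again because the dummies are always present.
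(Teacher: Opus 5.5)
Your proof is correct and follows the paper's argument step for step. You read off the exact payoffs $\beta_O$ and $\beta_I(1-p^v_s)$ from the dummies' presence and then make the same two comparisons with margin $\eps$. The only difference is the ``moreover'' clause: you get it as the contrapositive of the second case, while the paper substitutes into the $\eps$-best-response condition $\beta_O + \eps \ge \beta_I(1-p^v_s)$ directly and obtains the slightly stronger $p^v_s \ge 13/25$. Either suffices for the statement.
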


\begin{proof}
By \cref{lm:dummy-home-nonunif}, both local dummy players of every battlefield play their own battlefield with probability $1$.
Therefore, if $\AuxP^v_s$ plays its output battlefield, then it gets payoff exactly $\beta_O$.
On the other hand, if $\AuxP^v_s$ plays its input battlefield $\BitB^v_s$, then it gets payoff $\beta_I$ exactly when $\PrimP^v$ is absent from $\BitB^v_s$, and payoff $0$ when $\PrimP^v$ is present.
Thus the expected payoff of $\AuxP^v_s$ from its input battlefield is $\beta_I (1 - p^v_s)$.

We now compare these two payoffs.
If $p^v_s = 1$, then the input battlefield gives payoff $0$, whereas the output battlefield gives payoff
$\beta_O = 1/20 > \eps$.
Hence the input battlefield cannot be in the support of $\AuxP^v_s$.
By \cref{lm:local-actions-nonunif}, $\AuxP^v_s$ only plays its input and output battlefields, and so $\AuxP^v_s$ plays its output battlefield with probability $1$.

If $p^v_s \le 1/2$, then the input battlefield gives payoff at least
$\beta_I/2 = 1/16$,
whereas the output battlefield gives payoff $\beta_O = 1/20$.
Since $1/16 - 1/20 = 1/80 > \eps$,
the output battlefield cannot be in the support of $\AuxP^v_s$.
Again by \cref{lm:local-actions-nonunif}, $\AuxP^v_s$ plays its input battlefield with probability $1$.

Finally, suppose $\AuxP^v_s$ plays its output battlefield with positive probability. Since we are in an $\eps$-WSNE, the output battlefield must be an $\eps$-best response. Therefore,
\[
    \beta_O + \eps \ge \beta_I (1-p^v_s).
\]
Rearranging,
\[
    p^v_s \ge 1 - \frac{\beta_O + \eps}{\beta_I}
    = 1 - \frac{1/20 + 1/100}{1/8}
    > \frac12.
\]
This proves the lemma.
\end{proof}

We now show that the assignment $\valonly$ defined in equation \eqref{eq:mapping-nonunif} satisfies all the gates of the \PURE{} instance. We verify the three types of gates one by one.

\medskip
\noindent
$v=\NOT(u)$.
Suppose $\val{u} = 0$. Then $p^u_0 = 1$ and $p^u_1 = 0$.
By \cref{lm:aux-behavior-nonunif}, the auxiliary player $\AuxP^u_0$ plays its output battlefield $\BitB^v_0$ with probability $1$, while the auxiliary player $\AuxP^u_1$ plays its input battlefield $\BitB^u_1$ with probability $1$.
Therefore, the payoff of $\PrimP^v$ from $\BitB^v_0$ is $\rho - \lambda_{\NOT}$, while its payoff from $\BitB^v_1$ is $\rho$.
Since $\lambda_{\NOT} = 1/25 > \eps$, the battlefield $\BitB^v_0$ cannot be in the support of $\PrimP^v$ in an $\eps$-WSNE.
Hence $\PrimP^v$ plays $\BitB^v_1$ with probability $1$, and $\val{v} = 1$.
The case $\val{u}=1$ is symmetric: $\AuxP^u_1$ plays its output battlefield $\BitB^v_1$ with probability $1$, while $\AuxP^u_0$ stays on its input battlefield, and therefore $\PrimP^v$ plays $\BitB^v_0$ with probability $1$.
Thus $\val{v} = 0$.
If $\val{u} \in (0,1)$, then the \NOT{} gate imposes no condition.

\medskip
\noindent
$w = \AND(u, v)$.
Suppose $\val{u} = 0$.
Then $p^u_0 = 1$ and $p^u_1 = 0$.
By \cref{lm:aux-behavior-nonunif}, the auxiliary player $\AuxP^u_0$ plays its output battlefield $\BitB^w_1$ with probability $1$, while $\AuxP^u_1$ stays on its input battlefield.
Therefore, the payoff of $\PrimP^w$ from $\BitB^w_1$ is at most $\rho -\lambda_{\AND,0}$.
On the other hand, the payoff of $\PrimP^w$ from $\BitB^w_0$ is at least $\rho - \lambda_{\AND,1}$, since at most one output-side auxiliary player, namely $\AuxP^v_1$, can create pressure on $\BitB^w_0$.
Since $ \lambda_{\AND,0} - \lambda_{\AND,1} = 1/50 > \eps$, the battlefield $\BitB^w_1$ cannot be in the support of $\PrimP^w$ in an $\eps$-WSNE. Hence $\PrimP^w$ plays $\BitB^w_0$ with probability $1$, and $\val{w} = 0$. The same argument applies if $\val{v} = 0$.

Now suppose $\val{u} = \val{v} = 1$. Then $p^u_1 = p^v_1 = 1$ and $p^u_0 = p^v_0 = 0$.
By \cref{lm:aux-behavior-nonunif}, the auxiliary players $\AuxP^u_1$ and $\AuxP^v_1$ play their output battlefield $\BitB^w_0$ with probability $1$, while $\AuxP^u_0$ and $\AuxP^v_0$ stay on their input battlefields.
Therefore, the payoff of $\PrimP^w$ from $\BitB^w_0$ is $\rho - 2\lambda_{\AND,1}$, while its payoff from $\BitB^w_1$ is $\rho$.
Since $2\lambda_{\AND,1} = 1/25 > \eps$, the battlefield $\BitB^w_0$ cannot be in the support of $\PrimP^w$ in an $\eps$-WSNE.
Hence $\PrimP^w$ plays $\BitB^w_1$ with probability $1$, and $\val{w} = 1$.

All other input cases are unconstrained by the \AND{} gate.

\medskip
\noindent
$(v,w) = \PURIFY(u)$.
We first consider the case $\val{u} = 0$.
Then $p^u_0 = 1$ and $p^u_1 = 0$. By \cref{lm:aux-behavior-nonunif}, the auxiliary player $\AuxP^u_0$ plays its output battlefield $\BitB^v_1$ with probability $1$, while $\AuxP^u_1$ stays on its input battlefield.
For the primary player $\PrimP^v$, the payoff from $\BitB^v_0$ is $\rho - \lambda_{\BIAS}$, while the payoff from $\BitB^v_1$ is $\rho - \lambda_{\PURIFY}$.
Since $\lambda_{\PURIFY} - \lambda_{\BIAS} = 1/50 > \eps$, the battlefield $\BitB^v_1$ cannot be in the support of $\PrimP^v$.
Hence $\PrimP^v$ plays $\BitB^v_0$ with probability $1$, and $\val{v} = 0$.
For the primary player $\PrimP^w$, the payoff from $\BitB^w_0$ is $\rho$, while the payoff from $\BitB^w_1$ is $\rho - \lambda_{\BIAS}$.
Since $\lambda_{\BIAS} = 1/50 > \eps$, the player $\PrimP^w$ plays $\BitB^w_0$ with probability $1$, and $\val{w} = 0$.

The case $\val{u} = 1$ is symmetric.
In this case, $\AuxP^u_1$ plays its output battlefield $\BitB^w_0$ with probability $1$, while $\AuxP^u_0$ stays on its input battlefield.
For $\PrimP^v$, the payoff from $\BitB^v_0$ is $\rho - \lambda_{\BIAS}$, while the payoff from $\BitB^v_1$ is $\rho$. Therefore $\PrimP^v$ plays $\BitB^v_1$ with probability $1$, and $\val{v}=1$.
For $\PrimP^w$, the payoff from $\BitB^w_0$ is $\rho - \lambda_{\PURIFY}$, while the payoff from $\BitB^w_1$ is $\rho - \lambda_{\BIAS}$.
Since $\lambda_{\PURIFY} - \lambda_{\BIAS} > \eps$, the player $\PrimP^w$ plays $\BitB^w_1$ with probability $1$, and $\val{w}=1$.

It remains to consider the case $\val{u} \in (0,1)$.
Since $p^u_0 + p^u_1 = 1$, at least one of $p^u_0$ or $p^u_1$ is at most $1/2$.
If $p^u_0 \le 1/2$, then by \cref{lm:aux-behavior-nonunif}, the auxiliary player $\AuxP^u_0$ stays on its input battlefield.
Thus there is no output-side auxiliary player on $\BitB^v_1$.
The payoff of $\PrimP^v$ from $\BitB^v_0$ is $\rho - \lambda_{\BIAS}$, while the payoff from $\BitB^v_1$ is $\rho$.
Since $\lambda_{\BIAS} > \eps$, the player $\PrimP^v$ plays $\BitB^v_1$ with probability $1$, and $\val{v} = 1$.
Similarly, if $p^u_1 \le 1/2$, then $\AuxP^u_1$ stays on its input battlefield, so there is no output-side auxiliary player on $\BitB^w_0$.
The payoff of $\PrimP^w$ from $\BitB^w_0$ is $\rho$, while the payoff from $\BitB^w_1$ is $\rho - \lambda_{\BIAS}$.
Hence $\PrimP^w$ plays $\BitB^w_0$ with probability $1$, and $\val{w} = 0$.
Therefore, when $\val{u} \in (0,1)$, at least one of the two outputs is pure, as required by the \PURIFY{} gate.

\medskip
We have shown that the assignment $\valonly$ satisfies every \NOT{}, \AND{}, and \PURIFY{} gate of the \PURE{} instance. Therefore, every $\eps$-WSNE of the constructed game gives a solution to the \PURE{} instance. Since \PURE{} is PPAD-hard, computing an $\eps$-WSNE of the constructed game is PPAD-hard for the unnormalized payoff scale used above.

We now discuss normalization. Although the constructed game has $2|V|$ battlefields, every player's total payoff is bounded by a constant. Indeed, if a player chooses a non-local battlefield, then it can get payoff from that battlefield only through the resource it allocates there, and this payoff is at most $1$. On the other hand, the player can receive payoff from at most two local battlefields: for a primary player $\PrimP^v$ these are $\BitB^v_0$ and $\BitB^v_1$, for an auxiliary player $\AuxP^v_s$ these are its input battlefield $\BitB^v_s$ and output battlefield, depending upon the gate, and for a dummy player $\DumP^v_{s,t}$ this is only its local battlefield $\BitB^v_s$. Each battlefield contributes payoff at most $1$. Thus every player's total payoff is at most $3$.

Therefore, by scaling all values down by a factor of $3$, we obtain an equivalent normalized game with utilities in $[0,1]$. This scaling preserves the fact that all players have identical values and all payoff differences come only from the tie-breaking rules. It only scales the approximation parameter by the same factor. Hence the construction gives PPAD-hardness for computing an $(\eps/3)$-WSNE in the normalized version.

\paragraph{Approximate-NE.}

We now prove the hardness result for approximate NE.
We work with the original unnormalized game constructed above for the approximate-WSNE hardness result.
The same normalization argument used for approximate WSNE also applies to approximate NE, so we omit it here.

Let
\[
    \Delta = \frac{1}{400}, \qquad
    \eta=\frac{1}{10000}, \qquad
    \eps' = \Delta \eta = \frac{1}{4000000}.
\]
The next lemma shows that an $\eps'$-NE of our constructed game can be converted into an $\eps$-WSNE of the same game. Since we have already shown PPAD-hardness of computing an $\eps$-WSNE, this will imply PPAD-hardness of computing an $\eps'$-NE.

\begin{lemma}
\label{lm:ne-to-wsne-nonunif}
Given an $\eps'$-NE of the constructed game, one can compute an $\eps$-WSNE of the same game in polynomial time.
\end{lemma}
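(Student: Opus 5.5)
The plan is a two-stage cleaning procedure. First, the dummy players are shown to be at home with high probability. Second, every player is replaced by a ``localized'' strategy, and the result is checked against the $\eps$-WSNE conditions using bounded-interaction reasoning in the spirit of \cref{lm:wsne2ne2}. The point is that the game has unbounded dependency degree only because non-local players may wander onto foreign battlefields. Once every battlefield is guaranteed to host its two local dummy players, each player's payoff depends only on the constantly many players local to its (at most two) local battlefields.

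\textbf{Step 1: dummies are almost always at home.} Fix an $\eps'$-NE. For each battlefield $\BitB^v_s$ let $q^v_{s,t}$ be the probability that $\DumP^v_{s,t}$ is away. Let $\mu^v_s=q^v_{s,1}q^v_{s,2}$ be the probability that both local dummies are away.

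As in \cref{lm:dummy-home-nonunif}, a dummy on a foreign battlefield $\BitB^{v'}_{s'}$ earns something only on the event (of probability $\mu^{v'}_{s'}$) that both of that battlefield's dummies are absent. Staying home instead guarantees $a+(A-a)q^v_{s,3-t}$. Write the expected regret of each dummy as an average over its support. Its payoff from foreign play is then at most $\sum_{\text{foreign}}(\text{prob.\ of playing there})\cdot\mu$, while its home payoff is at least $a$.

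Summing the NE inequalities $u_{\DumP}\ge a+(A-a)q_{\text{other}}-\eps'$ over all dummies should give the following. The total mass of foreign dummy play, weighted by $\mu$, must roughly pay for a deficit of order $a\cdot(\text{total away mass})$. This forces $\sum_{v,s} \mu^v_s$ and then each $q^v_{s,t}$ to be $O(\eps'/(a-\text{const}))$.

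This replaces the sink-SCC argument, which used the support condition and so is unavailable here, by a global payoff-budget count: each battlefield distributes at most $1$, and a battlefield with both dummies absent still caps what visitors can take. I expect this to give $q^v_{s,t}\le \eta$ for every dummy. \textbf{This is the step I expect to be the main obstacle}, because the counting must handle dummies that leave home with small but non-negligible probability. If the direct sum does not close, I would first try a weighted version that sums only over battlefields with $\mu^v_s$ above a threshold, mimicking the ``bad battlefield'' notion with $\Delta$ in place of $\eps$.

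\textbf{Step 2: construct the new profile.} Set every dummy to play its own battlefield with probability $1$. This changes each dummy's strategy by at most $\eta$ in total variation.

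For a primary or auxiliary player, drop its non-local battlefields and, as in \cref{lm:wsne2ne2}, every local action whose payoff against the original profile is more than $\Delta$ below the best response. Then renormalize.

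Non-local battlefields pay at most about $\mu\le\eta$ in the original profile, while local ones pay at least $\beta_O=1/20$. Hence non-local actions have regret at least about $1/20-\eta>\Delta$. So, by the regret-averaging bound from \cref{lm:wsne2ne2}, the removed mass per player is at most $\eps'/\Delta=\eta$. All of this is polynomial time, since each player has polynomially many actions and payoffs are explicit.

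\textbf{Step 3: verify the WSNE condition.} In the new profile dummies are at home deterministically, so a player's payoff for any action depends only on the at most six players local to the corresponding battlefield. Each of these players' strategies moved by at most $O(\eta)$ in total variation, and the original payoffs of local actions were additionally perturbed only by dummy absences, again $O(\eta)$.

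Therefore every retained action is within $\Delta+O(\eta)$ of the original best response, and every payoff moves by $O(\eta)$. With $\Delta=1/400$ and $\eta=10^{-4}$, the resulting slack $\Delta+c\,\eta$, for the small constant $c$ coming from the locality bound, stays below $\eps=1/100$. Dummies, now at home deterministically, are exact best responses by the argument in \cref{lm:dummy-home-nonunif}, because their home payoff is at least $a$ while foreign battlefields give $0$.
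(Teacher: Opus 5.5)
The gap is Step 1, which carries the real content of the lemma. Your Steps 2 and 3 are essentially the paper's argument and are fine once you know every dummy is away with probability at most $\eta$ in the original profile. Summing the $\eps'$-NE inequalities over all $4|V|$ dummies cannot localize, because each dummy brings its own slack $\eps'$. The aggregate inequality therefore controls only the total away mass: at best $\sum_{v,s,t} q^v_{s,t} = O(|V|\eps')$, and likewise $\sum_{v,s}\mu^v_s = O(|V|\eps')$, not $O(\eps')$. That is compatible with a few dummies being away with constant probability once $|V|$ is large. So the inference ``and then each $q^v_{s,t}$ is $O(\eps')$'' does not follow. Without it, neither the total-variation bound in Step 2 nor the claim that non-local actions pay almost nothing is justified. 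Your fallback, as stated, does not close either. Suppose you restrict the sum to battlefields with $\mu^v_s \ge a-\Delta$ but do nothing to stop their dummies from leaking to the remaining battlefields. Each unit of leaked probability can then earn up to $a-\Delta$, almost as much as going home, so the restricted inequality has no surplus.

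The missing idea is that the sink-SCC argument is still available; you recover the support condition by pruning first. Delete every action that is more than $\Delta$ below the best response against $\bx$; this removes at most $\eps'/\Delta=\eta$ mass per player. A surviving foreign dummy action into $\BitB^{v'}_{s'}$ forces $\mu^{v'}_{s'}\ge a-\Delta$, since going home gains at least $a-\mu^{v'}_{s'}$. On such a heavy battlefield both dummies have $q^v_{s,t}\ge a-\Delta>\eta$, so each has a surviving foreign action. The graph of surviving foreign dummy actions on heavy battlefields therefore has a sink SCC $T$ whose dummies leave $T$ only through deleted actions. Their total payoff is then at most $(1+2\eta)|T|$. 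The NE condition, however, gives each pair at least $2a+2(A-a)\sqrt{a-\Delta}-2\eps'>1+2\eta$. Hence no foreign dummy action survives, which yields $q^v_{s,t}\le\eta$ and feeds your Steps 2--3.

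A restricted count without pruning can also work, but only with a threshold well below $a$, say $\theta=1/10$. You also need the observation that the payoff a battlefield pays to visiting non-locals and the empty-case residual it pays to its own absent dummies come from disjoint sub-events of ``both local dummies absent''. Together they are therefore at most $\mu^v_s$. Then each battlefield with $\mu^v_s\ge\theta$ contributes at least $(a-\theta-\tfrac{3}{20})(q^v_{s,1}+q^v_{s,2})-2\eps'>0$ to a sum that must be nonpositive. So no such battlefield exists, and every dummy satisfies $q^v_{s,t}(a-\theta)\le\eps'$.
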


\begin{proof}
Let $\bx$ be an $\eps'$-NE of the constructed game. For every player $i$, let
\[
    M_i = \max_{a_i \in \cA_i} u_i(a_i, \bx_{-i})
\]
be the payoff of a best response to $\bx_{-i}$, where $\cA_i$ is the pure strategy set of player $i$. We call an action $a_i$ bad if
\[
    u_i(a_i, \bx_{-i}) < M_i - \Delta.
\]
Since $\bx$ is an $\eps'$-NE, player $i$ puts probability at most
\[
    \frac{\eps'}{\Delta} = \eta
\]
on bad actions. Delete all bad actions from the support of every player and renormalize the remaining probabilities. Let $\by$ be the resulting mixed strategy profile.

We first show that every dummy player plays its own battlefield with probability $1$ in $\by$.
The argument extends the ideas in the proof of \cref{lm:dummy-home-nonunif}.
For every $v \in V$, $s \in \{0,1\}$, and $t \in [2]$, let
\[
    q^v_{s,t} = \Prob_{\bx}[\DumP^v_{s,t} \text{ does not play } \BitB^v_s],
\]
and let
\[
    \mu^v_s = q^v_{s,1} q^v_{s,2}.
\]
Thus $\mu^v_s$ is the probability, under the profile $\bx$, that both local dummy players of $\BitB^v_s$ are absent from $\BitB^v_s$.

Suppose, for contradiction, that some foreign dummy action survives the deletion step.
That is, suppose some dummy player $\DumP^v_{s,t}$ has a surviving action to a foreign battlefield $\BitB^{v'}_{s'} \neq \BitB^v_s$.
As in the proof of \cref{lm:dummy-home-nonunif}, deviating from $\BitB^{v'}_{s'}$ to its own battlefield $\BitB^v_s$ improves the payoff of $\DumP^v_{s,t}$ by at least
\[
    a - \mu^{v'}_{s'}.
\]
Since the action $\BitB^{v'}_{s'}$ survived the deletion step, this improvement must be at most $\Delta$. Hence
\[
    \mu^{v'}_{s'} \ge a-\Delta.
\]

Call a battlefield $\BitB^v_s$ heavy if $\mu^v_s \ge a - \Delta$. We have shown that every surviving foreign dummy action points to a heavy battlefield. Suppose there is some surviving foreign dummy action. Then there is at least one heavy battlefield. Build a directed graph whose vertices are the heavy battlefields. Put an edge
\[
    \BitB^v_s \to \BitB^{v'}_{s'}
\]
if one of the local dummy players of $\BitB^v_s$ has a surviving foreign action to $\BitB^{v'}_{s'}$. By the argument above, every edge points to a heavy battlefield.

Moreover, every heavy battlefield has an outgoing edge. Indeed, if $\BitB^v_s$ is heavy, then
\[
    q^v_{s,1}q^v_{s,2}\ge a-\Delta.
\]
Since $q^v_{s,1},q^v_{s,2}\le 1$, this implies
\[
    q^v_{s,t}\ge a-\Delta
\]
for both $t\in[2]$. Each dummy player puts probability at most $\eta$ on deleted actions, and $a-\Delta>\eta$.
Therefore, each local dummy player of $\BitB^v_s$ has positive probability on
surviving foreign actions, and hence $\BitB^v_s$ has an outgoing edge.

Take a sink strongly connected component $T$ of this graph.
The local dummy players of the battlefields in $T$ never play surviving actions outside $T$. Therefore, under the original profile $\bx$, these dummy players can obtain a payoff from battlefields outside $T$ only when they play deleted actions. Each dummy player puts probability at most $\eta$ on deleted actions, and the payoff from such an outside action is at most $1$. Hence
\[
    \sum_{\BitB^v_s\in T}
    \left(
        u_{\DumP^v_{s,1}}(\bx)+u_{\DumP^v_{s,2}}(\bx)
    \right)
    \le
    |T|+2\eta |T|.
\]
On the other hand, the same deviation argument as in \cref{lm:dummy-home-nonunif} gives, for every $\BitB^v_s\in T$,
\[
    u_{\DumP^v_{s,1}}(\bx)+u_{\DumP^v_{s,2}}(\bx)
    \ge
    2a+(A-a)(q^v_{s,1}+q^v_{s,2})-2\eps'.
\]
Since every battlefield in $T$ is heavy, we have
\[
    q^v_{s,1}q^v_{s,2}\ge a-\Delta,
\]
and hence
\[
    q^v_{s,1}+q^v_{s,2}\ge 2\sqrt{a-\Delta}.
\]
Therefore,
\[
    u_{\DumP^v_{s,1}}(\bx)+u_{\DumP^v_{s,2}}(\bx)
    \ge
    2a+2(A-a)\sqrt{a-\Delta}-2\eps'.
\]
Substituting
\[
    A=\frac7{10}, \qquad
    a=\frac7{20}, \qquad
    \Delta=\frac1{400}, \qquad
    \eps'=\frac1{4000000},
\]
the right-hand side is strictly larger than $1+2\eta$. Summing over all $\BitB^v_s\in T$ contradicts the upper bound $|T|+2\eta |T|$.
Therefore no foreign dummy action survives the deletion step. Hence every dummy player plays its own battlefield with probability $1$ in $\by$.

We will also use the following consequence for the original profile $\bx$.
Since all foreign dummy actions are deleted, and since every dummy player puts probability at most $\eta$ on deleted actions, every dummy player plays its own battlefield with probability at least $1-\eta$ in $\bx$.
Therefore, for every battlefield $\BitB^v_s$,
\[
    \mu^v_s \le \eta^2.
\]

We next show that no non-local action of a primary or auxiliary player survives the deletion step.
Consider a primary or auxiliary player $i$, and suppose $i$ chooses a battlefield to which it is non-local.
The payoff from this chosen battlefield can be positive only when both local dummy players of that battlefield are absent, which happens with probability at most $\eta^2$ under $\bx$.
In addition, while playing this non-local action, player $i$ may receive payoff from empty battlefields to which it is local.
Every primary or auxiliary player is local to at most two battlefields, and each such empty battlefield contribution also requires both local dummy players of that battlefield to be absent.
Therefore every non-local action of a primary or auxiliary player gives payoff at most $3\eta^2$ against $\bx_{-i}$.

On the other hand, every primary player has a local action with payoff at least $\rho - 2 \lambda_{\AND,0} = 7/50$,
and every auxiliary player has its output action with payoff $\beta_O = 1/20$.
Since $\frac1{20} - 3\eta^2 > \Delta$, every non-local action of a primary or auxiliary player is bad and is deleted.
Thus, in $\by$, every player only plays local actions: each dummy player plays its own battlefield, each primary player plays one of its two battlefields, and each auxiliary player plays either its input battlefield or its output battlefield.

It remains to show that $\by$ is an $\eps$-WSNE. For every player $i$, the total variation distance between $\bx_i$ and $\by_i$ is at most $\eta$, because $\by_i$ is obtained from $\bx_i$ by deleting a set of actions of total probability at most $\eta$ and renormalizing.

Fix a player $i$ and a local action $a_i$ of player $i$. The player $i$ is local to at most two battlefields.
For each such battlefield, once at least one local dummy player is present, the payoff contribution depends only on the local players of that battlefield: the two dummy players, the primary player, the input-side auxiliary player, and at most two output-side auxiliary players.
Thus, excluding $i$, at most five relevant opponents can directly affect the payoff contribution of one local battlefield. The product distribution of these relevant opponents changes by total variation distance at most $5\eta$.
The only remaining issue is the event that both local dummy players of this battlefield are absent under $\bx$, which has probability at most $\eta^2$. Therefore, the payoff contribution from one local battlefield changes by at most
\[
    5\eta + \eta^2.
\]
Since $i$ is local to at most two battlefields, we get
\[
    \left|
        u_i(a_i,\by_{-i})-u_i(a_i,\bx_{-i})
    \right|
    \le 2(5\eta+\eta^2)
    < 20\eta.
\]

Now fix a player $i$ and an action $a_i$ in the support of $\by_i$. Since $a_i$ survived the deletion
step,
\[
    u_i(a_i,\bx_{-i}) \ge M_i-\Delta.
\]
Let $a'_i$ be any deviation. If $a'_i$ is non-local, then it gives payoff zero against $\by_{-i}$, because every dummy player plays its own battlefield in $\by$. Hence such a deviation cannot violate the $\eps$-WSNE condition. If $a'_i$ is local, then
\[
    u_i(a'_i,\bx_{-i}) \le M_i.
\]
Using the perturbation bound for $a_i$ and $a'_i$, we get
\[
    u_i(a'_i, \by_{-i}) - u_i(a_i, \by_{-i})
    \le \Delta+40\eta.
\]
Finally,
\[
    \Delta+40\eta
    =
    \frac1{400}+\frac{40}{10000}
    =
    \frac{13}{2000}
    <
    \frac1{100}
    =
    \eps.
\]
Thus every action in the support of $\by_i$ is an $\eps$-best response to $\by_{-i}$. Therefore $\by$ is an $\eps$-WSNE.
\end{proof}

If there were a polynomial-time algorithm for computing an $\eps'$-NE of the constructed game,
then by \cref{lm:ne-to-wsne-nonunif}, we could compute an $\eps$-WSNE of the constructed game in polynomial time.
This contradicts the PPAD-hardness of computing an $\eps$-WSNE proved above.
Hence computing an $\eps'$-NE of the constructed game is PPAD-hard.
\end{proof}

\subsection{Analysis of \cref{ex:isolated-irrational}}
\label[appendix]{app:proof:isolated-irrational}

We first verify the isolated irrational equilibrium from \cref{ex:isolated-irrational}. At the end, we also give a rational equilibrium.

The game has three players and three battlefields. The budgets are
\[
    B_1 = 3,
    \qquad B_2 = B_3 = 1,
\]
and the value vectors are
\[
    v_1 = (4,7,6), \qquad
    v_2 = (3,0,6), \qquad
    v_3 = (0,7,4).
\]
Tie-breaking is uniform.

Let the players play the following strategies:
\begin{itemize}
    \item player $1$ mixes between $L=(0,1,2)$ and $R=(0,2,1)$, playing $L$ with probability $x$;
    \item player $2$ mixes between $C_2=(0,0,1)$ and $A=(1,0,0)$, playing $C_2$ with probability $y$;
    \item player $3$ mixes between $C_3=(0,0,1)$ and $B=(0,1,0)$, playing $C_3$ with probability $z$.
\end{itemize}

We first compute the indifference conditions of the players on the proposed support.
In each expression below, we condition on the pure actions of the other two players and sum the payoff obtained from the three battlefields in that case.
For player $1$,
\begin{align*}
    u_1(L)
    &= yz\Bigl(\frac43+7+6\Bigr)
     + y(1-z)\Bigl(\frac43+\frac72+6\Bigr) \\
    &\qquad
     + (1-y)z(0+7+6)
     + (1-y)(1-z)\Bigl(0+\frac72+6\Bigr) \\
    &= \Bigl(0+\frac72+6\Bigr)
     + y\Bigl(\frac43+\frac72+6-\bigl(0+\frac72+6\bigr)\Bigr) \\
    &\qquad
     + z\Bigl(0+7+6-\bigl(0+\frac72+6\bigr)\Bigr) \\
    &\qquad
     + yz\Bigl(\frac43+7+6
         -\bigl(\frac43+\frac72+6\bigr)
         -(0+7+6)
         +\bigl(0+\frac72+6\bigr)\Bigr) \\
    &= \frac{19}{2}+\frac{4}{3}y+\frac{7}{2}z
     = \frac{8y + 21z + 57}{6},
\end{align*}
\begin{align*}
    u_1(R)
    &= yz\Bigl(\frac43+7+2\Bigr)
     + y(1-z)\Bigl(\frac43+7+3\Bigr) \\
    &\qquad
     + (1-y)z(0+7+3)
     + (1-y)(1-z)(0+7+6) \\
    &= (0+7+6)
     + y\Bigl(\frac43+7+3-(0+7+6)\Bigr) \\
    &\qquad
     + z\Bigl(0+7+3-(0+7+6)\Bigr) \\
    &\qquad
     + yz\Bigl(\frac43+7+2
         -\bigl(\frac43+7+3\bigr)
         -(0+7+3)
         +(0+7+6)\Bigr) \\
    &= 13 - \frac{5}{3} y - 3z + 2yz
     = \frac{6yz - 5y - 9z + 39}{3}.
\end{align*}
Thus $u_1(L)=u_1(R)$ if and only if
\begin{equation} \label{eq:isolated-irrational:1}
    4yz - 6y - 13z + 7 = 0.
\end{equation}
For player $2$,
\begin{align*}
    u_2(A) &= \big( xz + x(1-z) + (1-x)z + (1-x)(1-z) \big) (3+0+0) = 3, \\
    u_2(C_2)
    &= xz(1+0+0)
     + x(1-z)(1+0+0) \\
    &\qquad
     + (1-x)z(1+0+2)
     + (1-x)(1-z)(1+0+3) \\
    &= 4 - 3x - z + xz.
\end{align*}
Thus $u_2(A) = u_2(C_2)$ if and only if
\begin{equation} \label{eq:isolated-irrational:2}
    xz - 3x - z + 1 = 0.
\end{equation}
The only off-support action of player $2$ is $(0,1,0)$, and
\begin{align*}
    u_2((0,1,0)) = 1.
\end{align*}
For player $3$,
\begin{align*}
    u_3(B)
    &= xy \Bigl( 0 + \frac72 + 0 \Bigr)
     + x(1-y) \Bigl( 0 + \frac72 + 0 \Bigr) \\
    &\qquad
     + (1-x)y(0+0+0)
     + (1-x)(1-y)(0+0+0) \\
    &= \frac{7}{2} x, \\
    u_3(C_3)
    &= xy(0+0+0)
     + x(1-y)(0+0+0) \\
    &\qquad
     + (1-x)y\Bigl(0+0+\frac43\Bigr)
     + (1-x)(1-y)(0+0+2) \\
    &= 2 - 2x - \frac{2}{3} y + \frac{2}{3} xy
     = \frac{2(1-x)(3-y)}{3}.
\end{align*}
Thus $u_3(B) = u_3(C_3)$ if and only if
\begin{equation} \label{eq:isolated-irrational:3}
    4xy - 33x - 4y + 12 = 0.
\end{equation}
The only off-support action of player $3$ is $(1,0,0)$, and
\begin{align*}
    u_3((1,0,0)) = 0.
\end{align*}

The three indifference equations
\eqref{eq:isolated-irrational:1}--\eqref{eq:isolated-irrational:3}
are satisfied by
\begin{align*}
    x^* = \frac{-73 + \sqrt{8689}}{70}, \qquad
    y^* = \frac{121 - \sqrt{8689}}{32}, \qquad
    z^* = \frac{109 - \sqrt{8689}}{84}.
\end{align*}
Since
\begin{align*}
    93 < \sqrt{8689} < 94,
\end{align*}
we have
\begin{align*}
    0 < x^* < 1,\qquad 0 < y^* < 1,\qquad 0 < z^* < 1.
\end{align*}
Thus all support actions are played with positive probability. Since $8689$ is not a square, the probabilities $x^*,y^*,z^*$ are irrational.

It remains to check that the off-support actions are strictly worse.
For player $2$, the only off-support action is $(0,1,0)$, and its payoff is $1$, whereas the two support actions have payoff $3$.
Hence player $2$ has no profitable off-support deviation.

For player $3$, the only off-support action is $(1,0,0)$, and its payoff is $0$, whereas the two support actions have payoff
\begin{align*}
    u_3(B) = u_3(C_3) = \frac{7x^*}{2}
    = \frac{\sqrt{8689}-73}{20} > 0.
\end{align*}
Hence player $3$ has no profitable off-support deviation.

For player $1$, the common payoff of the two support actions at $(x^*,y^*,z^*)$ is
\begin{align*}
    u_1(L) = u_1(R) = \frac{229 - \sqrt{8689}}{12}.
\end{align*}
Player $1$ has eight off-support allocations. The following table lists the payoff gap
$u_1(L) - u_1(a_1)$
for each off-support allocation $a_1$.
\[
    \begin{array}{c|c}
        a_1 & u_1(L)-u_1(a_1) \\ \hline
        (0,0,3) & \dfrac{361-\sqrt{8689}}{72} \\[0.7em]
        (0,3,0) & \dfrac{29(\sqrt{8689}-25)}{672} \\[0.7em]
        (1,0,2) & \dfrac{71+\sqrt{8689}}{144} \\[0.7em]
        (1,1,1) & \dfrac{\sqrt{8689}-89}{16}
    \end{array}
    \qquad
    \begin{array}{c|c}
        a_1 & u_1(L)-u_1(a_1) \\ \hline
        (1,2,0) & \dfrac{43\sqrt{8689}-3763}{672} \\[0.7em]
        (2,0,1) & \dfrac{361-\sqrt{8689}}{72} \\[0.7em]
        (2,1,0) & \dfrac{29(\sqrt{8689}-25)}{672} \\[0.7em]
        (3,0,0) & \dfrac{59\sqrt{8689}+7933}{2016}
    \end{array}
\]
Each expression is strictly positive using $93<\sqrt{8689}<94$. Hence every
off-support allocation of player $1$ is strictly worse. Therefore the stated
profile is an NE.

It remains to show that the equilibrium is isolated. Let
\begin{align*}
    F_1(x,y,z) &= 4yz - 6y - 13z + 7,\\
    F_2(x,y,z) &= xz - 3x - z + 1,\\
    F_3(x,y,z) &= 4xy - 33x - 4y + 12.
\end{align*}
The Jacobian matrix of $F = (F_1,F_2,F_3)$ with respect to $(x,y,z)$ is
\begin{align*}
    J(x,y,z) =
        \begin{pmatrix}
        0 & 4z-6 & 4y-13 \\
        z-3 & 0 & x-1 \\
        4y-33 & 4x-4 & 0
        \end{pmatrix}.
\end{align*}
At $(x^*,y^*,z^*)$,
\begin{align*}
    \det J(x^*,y^*,z^*) = -2 \sqrt{8689} \neq 0.
\end{align*}
Since all off-support actions are strictly suboptimal at the stated profile, they remain strictly suboptimal in a sufficiently small neighborhood by continuity.
Hence any nearby equilibrium must use the same support.
On this support, equilibrium requires the three indifference equations
\[
    F_1(x,y,z) = F_2(x,y,z) = F_3(x,y,z) = 0.
\]
The Jacobian of $F$ at $(x^*,y^*,z^*)$ is nonsingular, so by the inverse function theorem, this system has a unique solution in a sufficiently small neighborhood of $(x^*,y^*,z^*)$.
Therefore, no other equilibrium lies in that neighborhood, and the equilibrium is isolated.

\medskip
For completeness, we also show that the same instance has a rational equilibrium.
Let
\[
    x' = \frac{8}{29}, \qquad z' = \frac{1}{9}.
\]
Consider the following mixed strategy profile:
player $1$ plays $L = (0, 1, 2)$ with probability $x'$ and $R = (0, 2, 1)$ with probability $1 - x'$;
player $2$ plays $C_2 = (0, 0, 1)$ with probability $1$;
and player $3$ plays $C_3 = (0, 0, 1)$ with probability $z'$ and $B = (0, 1, 0)$ with probability $1 - z'$.

We verify that this is an NE.
Against player $2$ playing $C_2$ and player $3$ playing $C_3$ with probability $z$, player $1$ obtains
\[
    u_1(L) = \frac{65}{6} + \frac{7z}{2},
    \qquad
    u_1(R) = \frac{34}{3} - z.
\]
At $z = z' = 1/9$, these two payoffs are equal:
\[
    u_1(L) = u_1(R) = \frac{101}{9}.
\]
The payoff gaps between this common payoff and the off-support allocations of player $1$ are as follows:
\[
    \begingroup
    \renewcommand{\arraystretch}{1.35}
    \begin{array}{c|cccccccc}
        a_1
            & (0, 0, 3)
            & (0, 3, 0)
            & (1, 0, 2)
            & (1, 1, 1)
            & (1, 2, 0)
            & (2, 0, 1)
            & (2, 1, 0)
            & (3, 0, 0) \\
        \hline
        u_1(L) - u_1(a_1)
            & \frac{98}{27}
            & \frac{26}{9}
            & \frac{26}{27}
            & \frac{4}{9}
            & \frac{2}{9}
            & \frac{110}{27}
            & \frac{10}{3}
            & \frac{188}{27}
    \end{array}
    \endgroup
\]
Hence player $1$ has no profitable deviation.

For player $2$, at the proposed profile,
\[
    u_2(A) = 3,
    \qquad
    u_2((0, 1, 0)) = 1,
    \qquad
    u_2(C_2) = 4 - 3x' - z' + x' z'
        = \frac{269}{87} > 3.
\]
Thus $C_2$ is a strict best response for player $2$.

Finally, against player $2$ playing $C_2$ and player $1$ playing $L$ with probability $x$, player $3$ obtains
\[
    u_3(B) = \frac{7x}{2},
    \qquad
    u_3(C_3) = \frac{4(1 - x)}{3}.
\]
At $x = x' = 8/29$, these two payoffs are equal:
\[
    u_3(B) = u_3(C_3) = \frac{28}{29}.
\]
The remaining action of player $3$ is $(1, 0, 0)$, which gives payoff $0$.
Therefore player $3$ also has no profitable deviation.

Thus the same Blotto instance admits both the isolated irrational equilibrium described above and a rational equilibrium.

\subsection{Proof of \cref{cor:membership-wsne}}
\label[appendix]{app:proof:membership-wsne}

\begin{proof}
If $\eps \ge 1$, any pure profile suffices, so assume $0 < \eps < 1$.
Set
\[
    \eta = \frac{\eps^2}{8 n}.
\]
By \cref{thm:membership-ne}, computing an $\eta$-NE represented as a flow profile $f$ is in PPAD.
We show how to transform any such output into an $\eps$-WSNE in polynomial time.
For each player $i$, decompose $f_i$ into a convex combination of $s_i$--$t_i$ paths in $D_i$.
Since $D_i$ is acyclic and has polynomially many edges,
this decomposition has polynomial support and can be computed in polynomial time.
The resulting mixed strategy profile over pure Blotto allocations induces the same battlefield marginals as $f$.
Hence it is also an $\eta$-NE of the original Blotto game.

We now implement the pruning step from \cref{lm:wsne2ne}.
First, we compute the best-response value of each player.
Fix player $i$ and the opponents' flow profile $f_{-i}$.
The payoff for playing strategy $g_i$ is
\[
    U_i(g_i, f_{-i}) =  \sum_{e \in E(D_i)} g_i(e) C_{i, e}(f_{-i}).
\]
Therefore the best-response value is
\[
    M_i = \max_{g_i \in \cF_i} \sum_{e \in E(D_i)} g_i(e) C_{i, e}(f_{-i}).
\]
Since this is the maximization of a linear function over the unit-flow polytope,
an optimum is attained by an $s_i$--$t_i$ path.
Thus $M_i$ is the maximum weight of an $s_i$--$t_i$ path in $D_i$,
where edge $e$ has weight $C_{i, e}(f_{-i})$.
Because $D_i$ is acyclic, this value is computed in polynomial time by the standard longest-path dynamic program.

For each player $i$, delete from her support every path whose payoff against $f_{-i}$ is less than
\[
    M_i - \frac{\eps}{2},
\]
and renormalize the remaining distribution.
Since the starting profile is an $\eta$-NE,
the total probability assigned by player $i$ to deleted paths is at most $2 \eta / \eps$.
Otherwise, these paths alone would contribute more than $\eta$ to player $i$'s expected regret.
Since $2 \eta / \eps = \eps/(4 n) < 1$, at least one path remains in every player's support.

Thus the total variation distance between player $i$'s original distribution and her pruned distribution is at most $2 \eta / \eps$.
Fix a player $i$ and a pure allocation of player $i$.
Changing the opponents' strategies from the original profile to the pruned profile changes the payoff of this pure allocation by at most
\[
    \sum_{k \neq i} \frac{2 \eta}{\eps}
    \le \frac{2 n \eta}{\eps}
    = \frac{\eps}{4},
\]
where we use $\eta = \eps^2 / (8 n)$ and the normalization of utilities to lie in $[0, 1]$.

Every path that remains in player $i$'s support had payoff at least $M_i - \eps / 2$ before pruning.
After pruning the opponents, its payoff can decrease by at most $\eps / 4$.
Also, the payoff of any deviation can increase by at most $\eps / 4$.
Therefore every remaining support path is within $\eps$ of a best response to the pruned profile.

Hence the pruned mixed strategy profile is an $\eps$-WSNE.
The construction is polynomial time:
the flow decomposition has polynomial support,
the payoff of each supported path can be evaluated exactly using \cref{lm:compact-payoff},
and the best-response values are computed by the longest-path dynamic program.
All path weights, payoff comparisons, and renormalized probabilities have polynomial bit complexity.
\end{proof}

\subsection{Example of a non-uniform tie-breaking rule with hard payoff evaluation}
\label[appendix]{app:nonuniform-tiebreaking-hardness}

We give a simple example showing why the membership proof for \cref{thm:membership-ne} does not automatically extend to arbitrary succinct non-uniform tie-breaking rules.

Consider a graph $H$ whose vertices are the opponents of player $1$.
We define a non-uniform tie-breaking rule for one battlefield.
Let $\delta = 1/{4 n}$.
If the tied set is a singleton, the unique tied player receives the whole battlefield.
If the tied set $T$ does not contain player $1$, the battlefield is split uniformly among the players in $T$.
If $T$ contains player $1$ and at least one opponent, let $S = T \setminus \{1\}$.
Player $1$ receives $(1/2) + \delta$ if $S$ is an independent set in $H$,
and receives $1/2$ otherwise.
The remaining share is split uniformly among the players in $S$.

This tie-breaking rule is represented succinctly by the graph $H$.
Given a tied set, checking whether $S$ is an independent set can be done in polynomial time by inspecting the edges of $H$.

The rule is also monotone.
If a new player joins a tied set, then the share of every player who was already tied weakly decreases.
For player $1$, this follows because adding more opponents to $S$ cannot turn a non-independent set into an independent set.
Thus player $1$'s share can stay the same or decrease, but it cannot increase.
Now consider an opponent $k \in S$.
If another opponent joins the tied set, then the number of opponents in $S$ increases from $q$ to $q + 1$.
The only case in which the total share of players in $S$ can increase is when $S$ is independent before the
new opponent is added, but is not independent afterward.
In that case the share of each old opponent changes from $\frac{1 / 2 - \delta}{q}$
to $\frac{1 / 2}{q + 1}$.
Since $q \le n - 1$ and $\delta = 1 / (4 n)$, we have
$\frac{1 / 2}{q + 1} \le \frac{1 / 2 - \delta}{q}$.
In all other cases the remaining share does not increase and is split among
more players. Hence every old opponent's share weakly decreases. If player $1$
joins a tied set that previously did not contain her, then the old tied players
also weakly lose share, since before her arrival they split the whole
battlefield uniformly, while afterward they split at most one half of the
battlefield. Therefore the rule is monotone for every player.

We now show that exact payoff evaluation for this rule captures a hard counting
problem. Fix a battlefield and an amount $r$. Suppose that each opponent of
player $1$ independently allocates exactly $r$ with probability $1 / 2$, and
allocates less than $r$ with probability $1 / 2$.
These marginals are realized by legal Blotto strategies: take two battlefields and unit budgets, let player $1$ choose the first battlefield, and let every opponent choose each battlefield with probability $1/2$, so $r = 1$.
Then the random set $S$ of
opponents tied with player $1$ is uniformly distributed over all subsets of the
opponents. Let $\mathrm{IS}(H)$ denote the number of independent sets of $H$.
The conditional expected share of player $1$ is
\[
    h_{1jr}
    =
    2^{-(n - 1)}
    \left(
        1
        +
        \sum_{\emptyset \neq S \subseteq [n] \setminus \{1\}}
        \left(
            \frac{1}{2}
            +
            \delta \cdot \mathbf{1}\{S \text{ is independent in } H\}
        \right)
    \right).
\]
Equivalently,
\[
    h_{1jr}
    =
    2^{-(n - 1)}
    \left(
        1
        +
        \frac{2^{n - 1} - 1}{2}
        +
        \delta \bigl(\mathrm{IS}(H) - 1\bigr)
    \right),
\]
where the $-1$ subtracts the empty independent set, since the singleton tied set $\{1\}$ was handled separately.
Therefore exact evaluation of $h_{1jr}$ allows one to recover $\mathrm{IS}(H)$.

Counting independent sets is a standard $\#\mathrm{P}$-complete problem.
Hence a polynomial-time algorithm for exact expected-share evaluation for all succinct monotone tie-breaking rules would give a polynomial-time algorithm for counting independent sets.
Uniform tie-breaking avoids this obstruction because the share depends only on the number of tied opponents, not on their identities, and this symmetry is exactly what the dynamic program in \cref{lm:compact-payoff} uses.

\section{Uniform Tie-breaking and Single Resource}
\label[appendix]{app:singleton-congestion}

In this subsection, we justify the claim from \cref{sec:hardness} that the case $B_i=1$ for every player $i$ is tractable. Since every player has one resource, a pure strategy of player $i$ is simply the choice of a battlefield. For a pure profile $\ba$, let $n_j(\ba)$ denote the number of players who choose battlefield $j$, and let $n_j^{-i}(\ba_{-i})$ denote the number of players other than $i$ who choose battlefield $j$.

Fix the actions $\ba_{-i}$ of all players other than $i$. Let
\[
    E_i(\ba_{-i}) = \{ j : n_j^{-i}(\ba_{-i}) = 0 \}
\]
be the set of battlefields that receive no resource from the other players. Under standard uniform tie-breaking, player $i$ receives payoff $v_{ij}/n$ from every battlefield $j \in E_i(\ba_{-i})$ that she does not choose, because all players are tied at zero on that battlefield. Define
\[
    C_i(\ba_{-i}) = \sum_{j \in E_i(\ba_{-i})} \frac{v_{ij}}{n}.
\]
This term depends on the other players' actions, but not on the battlefield chosen by player $i$.

Now suppose player $i$ chooses battlefield $j$. If $j \notin E_i(\ba_{-i})$, then the battlefield already receives at least one resource from another player, and player $i$ receives payoff
\[
    \frac{v_{ij}}{n_j^{-i}(\ba_{-i}) + 1}
\]
from battlefield $j$. If $j \in E_i(\ba_{-i})$, then the term $v_{ij}/n$ has already been included in the baseline $C_i(\ba_{-i})$, but choosing $j$ makes player $i$ the unique winner of battlefield $j$ and raises her payoff from that battlefield to $v_{ij}$. Thus, the additional payoff from choosing such a battlefield is
\[
 v_{ij} - \frac{v_{ij}}{n} = \left( 1 - \frac{1}{n} \right) v_{ij}.
\]
Therefore, for every choice of battlefield $j$,
\[
    u_i(j,\ba_{-i}) = C_i(\ba_{-i}) + g_{ij}\bigl( n_j^{-i}(\ba_{-i} ) + 1 \bigr),
\]
where
\[
    g_{ij}(k) = \begin{cases}
        \left(1-\frac{1}{n}\right)v_{ij}, & k=1,\\
        \frac{v_{ij}}{k}, & k\ge 2.
    \end{cases}
\]
Since $C_i(\ba_{-i})$ is independent of $j$, it does not affect player $i$'s best responses. Hence, the best-response structure is exactly that of a singleton congestion game with player-specific payoff functions $g_{ij}$: each player chooses one battlefield, and the payoff from choosing battlefield $j$ depends only on the identity of the player, the battlefield, and the total load on that battlefield.

It follows that pure Nash equilibria of this singleton congestion game are precisely pure Nash equilibria of the one-resource Blotto game. Player-specific singleton congestion games are a special case of player-specific matroid congestion games, for which pure Nash equilibria exist and can be computed in polynomial time~\cite{milchtaich1996congestion,ackermann2009pure}. Thus, a pure Nash equilibrium of the one-resource Blotto game can be computed in polynomial time.

\clearpage
\bibliography{ref}

\end{document}